\documentclass[12pt,a4paper,reqno]{amsart}

\usepackage{amssymb,amsthm}
\usepackage{amsmath}
\usepackage{hyperref}

\def\lrp#1{\left( #1\right)}

\def\bm#1#2#3#4{\begin{pmatrix}#1&#2\\#3&#4\end{pmatrix}}
\def\lra#1{\left\langle #1\right\rangle}
\def\lrae#1#2{{\lra{#1,\, #2 #1}}}

\begin{document}

\theoremstyle{plain}
\newtheorem{proposition}{Proposition}[section]
\newtheorem{theorem}[proposition]{Theorem}
\newtheorem{lemma}[proposition]{Lemma}

\theoremstyle{definition} 
\newtheorem{definition}[proposition]{Definition}

\theoremstyle{remark}
\newtheorem{remark}[proposition]{Remark}

\renewcommand{\thesection}{\Roman{section}}
\renewcommand{\thesubsection}{\thesection.\arabic{subsection}}
\renewcommand{\theequation}{\thesection.\arabic{equation}}

\title[Complex Classical Fields]{\bf  Complex  Classical Fields:\\
A Framework for Reflection Positivity}

\author[A. Jaffe]{Arthur Jaffe}
\address{Harvard University\\
Cambridge, MA 02138}
\email{arthur\_jaffe@harvard.edu}

\author[C. D. J\"{a}kel]{Christian D.\ J\"{a}kel}
\address{School of Mathematics\\ 
Cardiff University, Wales}
\email{christian.jaekel@mac.com} 

\author[R. E. Martinez II]{Roberto E. Martinez II}
\address{Harvard University\\
Cambridge, MA 02138}
\email{remartin@fas.harvard.edu}

\begin{abstract}
We explore a framework for complex classical fields, appropriate for describing quantum field theories.  Our fields are linear transformations on a Hilbert space, so they are more general than random variables for a probability measure. Our method generalizes  Osterwalder and Schrader's construction of Euclidean fields.  We allow complex-valued classical fields in the case of quantum field theories that describe neutral particles.  

From an analytic point-of-view, the key to using our method is {\em reflection positivity}. We investigate conditions on the Fourier representation of the fields to ensure that reflection positivity holds. We also show how reflection positivity is preserved by  various spacetime compactifications of~$\mathbb{R}^{d}$ in different coordinate directions.
\end{abstract}

\maketitle

\tableofcontents

\section{Classical Fields}
\subsection{Overview}
We study a Fock-Hilbert space $\mathcal{E}$ on which averaged classical fields act as linear transformations.  These fields generate an abelian algebra of unbounded operators that are defined on a common, dense, invariant domain.  We call a classical field \emph{neutral}/\emph{charged} if it arises in the description of neutral/charged particles. What is special in our framework is that our neutral fields can be either real or \emph{complex}, so the usual distinction between real and complex fields does not coincide here with the distinction between neutral and charged fields.  Real neutral fields reduce to the usual case;  complex neutral fields allow for something new.\footnote{\textrm{ Parts of this work were carried out during authors' visits to  the  Institute for Theoretical Physics, ETH Z\"urich,  the Erwin Schr\"odingier Institute (ESI), Vienna,  and the mathematics and physics departments at Harvard University.  The relevant authors thank these institutions both for hospitality and for 
support.}}  

The two-point function of fields is the integral kernel of an operator~$D$, which need not be hermitian.  We require two things: First, the hermitian part of $D$ should have strictly positive spectrum. Second, the transformation $D$ should be {reflection positive}. As precise
formulation of reflection positivity is provided in Definition~\ref{Def:Neutral-OS Positivity} (for the neutral case) and Definition~\ref{Defn:Charged RP} (for the charged case).

We also deal with charge in a somewhat novel way.  In the usual case, a charged field is represented as a complex-linear combination of two  neutral fields. The usual charge conjugation arises as the complex conjugation of this field; this can be implemented as a unitary operator.  However, in this paper we introduce distinct charged fields~$\Phi_{\pm}$---{not related by complex conjugation}---whose labels correspond to the fundamental charge carried by the field.  Charge conjugation acts as a unitary transformation $\mathbf{U}_{c} $ on~$\mathcal{E}$ 
such that  $\mathbf{U}_{c} \, \Phi_{\pm} \mathbf{U}_{c}^{-1} =\Phi_{\mp} \, $.  

\subsection{More Details}
Kurt Symanzik introduced the concept of a Eu\-clidean-invariant Markoff field associated with an underlying probability distribution of classical fields~\cite{Symanzik}.  Euclidean-covariant classical fields describing neutral scalar particles are typically real, as are time-zero quantum fields.  With the standard distributions that occur for scalar quantum fields, the zero-particle expectations of products of fields are typically positive.  This is also that case for the vacuum expectation values of time-ordered products of imaginary-time quantum fields (\emph{i.e.}, the analytic continuation of quantum fields in Minkowski space). In fact, the analytic continuation of anti-time-ordered vacuum expectation values of quantum fields in Minkowski space should agree with the expectations of such classical fields.  

Edward Nelson formulated a set of mathematical axioms interpreting Euclidean Markoff fields as random variables.  A Markoff field satisfying these axioms yields  a corresponding quantum field~\cite{Nelson Reconstruction}.  Although these axioms apply beautifully to the free scalar field \cite{Nelson-Free_Field}, and to some other cases, verifying the global Markoff property for known examples of interacting scalar fields poses certain difficulties.  Moreover, an analogous set of axioms has not been formulated for  fermionic or gauge fields. 

Konrad Osterwalder and Robert Schrader discovered an alternative and more-widely applicable  approach based on a property that they called  {\em reflection positivity} (RP),  which today is often called Oster\-walder-Schrader (OS) positivity.   Every Euclidean-invariant, OS-posi\-tive and regular set of expectations  yields a relativistic, local quantum field theory~\cite{Osterwalder-Schrader, Osterwalder-Schrader II}. Assuming  certain growth conditions in both the quantum and classical framework, the OS axioms for a classical field theory were shown to be equivalent to the Wightman axioms for a corresponding quantum field. According to Zinoviev, these growth assumptions can be replaced by a {\em weak spectral condition} \cite{Zinoviev}.

While RP allows one to give a rigorous meaning to {\em  inverse Wick rotation}, RP has also found applications in various areas of mathematics outside of mathematical physics.  For example, RP has had an impact on the theory of analytic continuation of group representations; 
recent results and extensive references can be found in \cite{JO}. Moreover, within mathematical physics, RP has had an enormous impact in statistical physics---especially in understanding  properties of the spectrum of the transfer matrix, as well as in the theory of phase transitions. A contemporary review can be found in \cite{B}. 

\subsection{This Work}

In this work we demonstrate that by no longer insisting  that neutral fields are real,  one gains a great deal of added flexibility. A  
number of relevant problems, which so far were not accessible from Euclidean quantum field theory, can now be formulated  in terms of classical
fields. The new framework allows us to consider Hamiltonians with complex-valued heat kernels such as, for example, $H=H_{0}+\vec v\cdot \vec P$, which for  $| \vec v |<1$ equals (up to an overall multiplicative constant) 
the Hamiltonian $H_{0}$ (in a finite spatial volume) as seen from a Lorentz frame moving with velocity~$\vec v$.  
In fact, this was our motivating example, and it arose in our attempt to understand the work of Heifets and Osipov~\cite{Osipov}.  We consider this example in detail in a separate publication  \cite{JJM-PartII}.   

We also study charged fields.  We introduce fields $\Phi_{\pm}$ that differ from the usual ones in that charge conjugation $\Phi_{\pm}\to\Phi_{\mp}$ is given by a unitary transformation {\em different} from complex conjugation. We apply this framework to study the 
thermal equilibrium states of a charged field with a chemical potential at positive temperature in the forthcoming article \cite{JJM-PartII}. 
Such states and fields occur in the study of the statistical mechanics of Bose-Einstein condensation; see \cite{BFKT}.   

The expectation values of our classical fields are defined on a ``Euclidean Hilbert space.''  In this paper we consider  Euclidean Fock space, 
which is the simplest case. The classical fields are represented as unbounded operators and finiteness of their expectation values follows once 
certain operator domain questions have been resolved. In case the usual description in terms of functional 
integrals is available, the domain questions  in our approach {\em can be} resolved, and the two descriptions are equivalent.  Our method is similar to the construction of Euclidean  fields given by Osterwalder and Schrader  \cite{O-S-Fields}.  We  require~RP for the no-particle expectation. 

We begin in \S\ref{Sect:NeutralCoordinates} by studying what transformation properties of the field $\Phi$ under reflections are equivalent to RP.   In \S\ref{Sect:Quantization} we briefly describe the associated quantization in the Gaussian case.  We generalize this for charged fields in~\S\ref{Sect:Charged Fields}. In
\S\ref{Sect:Compactification_RP} we show that  RP on $\mathbb{R}^d$ gives RP on spacetimes~$\mathbf{X}$
that are compactified in one or more coordinate directions.

Certain non-Gaussian expectations can then be built as perturbations of the Gaussian case. These examples can 
be studied using a cut-off and a generalised Feynman-Kac formula \cite{JJM-PartII}.   We rely on RP to obtain a 
robust inner product that one can use to prove useful estimates. This inner product also provides  the relation between 
the Euclidean Fock space $\mathcal{E}$ and the Hilbert space~$\mathcal{H}$ of quantum theory.  

\subsection{The Problem with Measures}
One might expect  that the com\-plex-valued Schwinger functions are moments of a complex measure.  But even in the Gaussian examples we consider, such a countably-additive measure does not exist.  Without measure theory one loses the possibility to use $L_p$ estimates to study convergence of integrals, and thus one loses quantitative control of the theory.  

The standard formulation of a real classical field $\Phi\in\mathcal{S}'_{\rm real}(\mathbb{R}^d)$, or a complex field $\Phi\in\mathcal{S}'(\mathbb{R}^d)$, is to take it to be a random variable for a probability measure $d\mu(\Phi)$ on
the space of tempered distributions $\mathcal{S}'_{\rm real}(\mathbb{R}^d)$
or $\mathcal{S}'(\mathbb{R}^d)$.  In the real case the characteristic function of the measure $d\mu$ is given by
\begin{equation}
            S(f)
            = \int e^{i\Phi(f)}\,
                d\mu(\Phi)\;.
        \label{Gaussian_Characteristic_function}
\end{equation}
The exact criterion for the existence of a countably-additive probability measure $d\mu(\Phi)$  on a nuclear space (such as 
$\mathcal{S}_{\rm real}(\mathbb{R}^d)$ or  $\mathcal{S}(\mathbb{R}^d)$) is the following:

\begin{proposition}[\bf Minlos' Theorem \cite{Minlos}]
\label{Prof:Minlos_Theorem} A functional $S(f)$  on a nuclear space 
$\mathcal{S}$ 
is the characteristic
functional of a countably-additive, probability measure on the dual space $\mathcal{S}'$, if and only if $S(f)$ 
is continuous, of positive type, and normalized by
$S(0)=1$.  
\end{proposition}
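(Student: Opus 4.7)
\medskip
\noindent\textbf{Proof plan.} The easy direction is necessity. If $S(f)=\int e^{i\Phi(f)}\,d\mu(\Phi)$ for some probability measure $\mu$ on $\mathcal{S}'$, then $S(0)=\mu(\mathcal{S}')=1$, and for any finite collection $c_{1},\dots,c_{N}\in\mathbb{C}$ and $f_{1},\dots,f_{N}\in\mathcal{S}$,
\begin{equation*}
\sum_{j,k=1}^{N}\bar{c}_{j}c_{k}\,S(f_{k}-f_{j})
=\int\Bigl|\sum_{j=1}^{N}c_{j}e^{i\Phi(f_{j})}\Bigr|^{2}\,d\mu(\Phi)\ \ge\ 0,
\end{equation*}
so $S$ is of positive type. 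Continuity in $f$ follows from dominated convergence, since $|e^{i\Phi(f_{n})}-e^{i\Phi(f)}|\le 2$ and $e^{i\Phi(f_{n})}\to e^{i\Phi(f)}$ pointwise whenever $f_{n}\to f$ in $\mathcal{S}$.

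For sufficiency I would proceed in three stages. First, restrict $S$ to an arbitrary finite-dimensional subspace $V\subset\mathcal{S}$. The restriction $S|_{V}$ is continuous, normalized and of positive type on $V\cong\mathbb{R}^{n}$, so by the classical Bochner theorem it is the characteristic function of a unique Borel probability measure $\mu_{V}$ on $V^{*}$. As $V$ varies over finite-dimensional subspaces these measures form a projective (Kolmogorov-consistent) family: for $W\subset V$, the pushforward of $\mu_{V}$ under the restriction map $V^{*}\to W^{*}$ coincides with $\mu_{W}$, because both have the same characteristic function $S|_{W}$.

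Second, the Kolmogorov extension theorem yields a finitely-additive cylinder-set measure $\mu$ on the algebraic dual $\mathcal{S}^{\mathrm{alg}}$. The central difficulty, and the step where the nuclear hypothesis is indispensable, is to promote $\mu$ to a countably-additive Borel measure concentrated on the much smaller topological dual $\mathcal{S}'$. The plan is to use continuity of $S$ at the origin: given $\varepsilon>0$, there is a continuous Hilbert seminorm $p$ on $\mathcal{S}$ with $|S(f)-1|<\varepsilon$ whenever $p(f)<1$. By nuclearity one can dominate $p$ by a stronger continuous Hilbert seminorm $q$ such that the natural embedding of the completion $\mathcal{S}_{q}$ into $\mathcal{S}_{p}$ is Hilbert--Schmidt. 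Using Sazonov's theorem (the Hilbert-space analogue of Minlos), the restriction of $S$ to $\mathcal{S}_{q}$ is the characteristic functional of a genuine countably-additive probability measure on $\mathcal{S}_{q}^{*}$. Intersecting over a sequence of such seminorms $q_{n}$ identifies the support of $\mu$ with a countable intersection of spaces of continuous functionals, which lies inside $\mathcal{S}'$.

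Third, once tightness of the cylinder family on $\mathcal{S}'$ (endowed with the strong or weak-$*$ topology, both yielding the same Borel structure on nuclear duals) has been established, the Prokhorov/Kolmogorov extension argument produces the desired countably-additive probability measure $\mu$ on $\mathcal{S}'$, and by construction its characteristic functional coincides with $S$ on every finite-dimensional subspace, hence on all of $\mathcal{S}$. The main obstacle is precisely the passage from an algebraic-dual cylinder measure to a Borel measure on $\mathcal{S}'$; everything else is either a direct verification or an invocation of Bochner's theorem, and the nuclear/Hilbert--Schmidt factorization of continuous seminorms is what makes the tightness argument succeed.
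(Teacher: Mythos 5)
The paper offers no proof to compare against: Minlos' theorem is stated as Proposition~\ref{Prof:Minlos_Theorem} purely as a quoted classical result, with the proof delegated to the reference \cite{Minlos} (see also Gelfand--Vilenkin \cite{Gelfand-Vilenkin}). Judged on its own, your outline is the standard route and is essentially sound: the necessity direction (normalization, positive type via the displayed square, continuity by dominated convergence using $\Phi(f_n)\to\Phi(f)$ for each fixed $\Phi\in\mathcal{S}'$) is complete; the sufficiency direction correctly identifies the architecture --- Bochner on finite-dimensional subspaces, consistency and Kolmogorov extension to a cylinder measure on the algebraic dual, and then nuclearity to force countable additivity on $\mathcal{S}'$. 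Two caveats. First, the heart of the theorem is exactly the step you delegate to Sazonov: the quantitative concentration estimate (bounding the cylinder measure of the polar of a $q$-ball using the Hilbert--Schmidt embedding $\mathcal{S}_q\hookrightarrow\mathcal{S}_p$ and the bound $|S(f)-1|<\varepsilon$ for $p(f)<1$), so as written the plan proves Minlos modulo a theorem of comparable depth; to extend $S$ to $\mathcal{S}_q$ you should also record that positive-definiteness gives $|S(f)-S(g)|^2\leqslant 2\,S(0)\left(S(0)-\mathrm{Re}\,S(f-g)\right)$, hence uniform continuity with respect to $p$. Second, the ``intersection over a sequence of seminorms $q_n$'' is unnecessary: a single continuous Hilbertian seminorm $q$ dominating $p$ with Hilbert--Schmidt inclusion already yields a countably additive measure on $\mathcal{S}_q^{*}\subset\mathcal{S}'$, whose pushforward to $\mathcal{S}'$ has characteristic functional $S$; uniqueness then follows since cylinder sets generate the Borel structure. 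With those points made explicit, your argument is the accepted proof of the result the paper cites.
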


Complex measures are more delicate
mathematically.  Borchers and Yngvason \cite{Borchers-Yngvason} studied possibilities for measures being associated with arbitrary Wightman field theories.  In  the Gaussian case, there
is a clean result for the existence of a complex Gaussian measures. Suppose $d\mu$ is a  complex-valued Gaussian with mean zero and covariance $D$, and the resulting characteristic function is given by
\begin{equation}
		S_{D}(f)
            	=e^{-\tfrac{1}{2}\, \langle \bar f,Df \rangle_{L_2}} \;.
\end{equation}
Then one has the existence criterion of Proposition 4.4 in \cite{Yngvason}:
    \begin{proposition}[\bf Yngvason's Criterion]
    \label{Prop:Yngvason}
    Let $0 < K=K^*$ and $L=L^*$ be  bounded transformations on
    $L_2(\mathbb{R}^d)$ and continuous transformations on $\mathcal{S}(\mathbb{R}^d)$.   Then there
    exists a countably-additive, complex-valued, Gaussian
    measure $d\mu_D(\Phi)$ on $\mathcal{S}'_{\rm real}(\mathbb{R}^d)$ with covariance 
    $D=K+iL$ and with characteristic function $S_{D}(f)$
    if and only if $K^{-1/2}LK^{-1/2}$ is Hilbert-Schmidt on $L_2(\mathbb{R}^d)$.
    \end{proposition}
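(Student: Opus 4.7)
The strategy is to reduce the infinite-dimensional existence problem to uniform bounds on total variations of finite-dimensional complex Gaussian measures. Set $A = K^{-1/2}LK^{-1/2}$, which is bounded and self-adjoint on $L_{2}(\mathbb{R}^{d})$ since $K>0$ and $L=L^{*}$. Fix an orthonormal basis $\{e_{n}\}$ of $L_{2}(\mathbb{R}^{d})$ and let $P_{N}$ be the orthogonal projection onto $\mathrm{span}(e_{1},\ldots,e_{N})$; write $K_{N}=P_{N}KP_{N}$, $L_{N}=P_{N}LP_{N}$, $D_{N}=K_{N}+iL_{N}$, and $A_{N}=P_{N}AP_{N}$. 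A direct Gaussian computation shows that the density $(\det D_{N})^{-1/2}\,e^{-\Phi\cdot D_{N}^{-1}\Phi/2}$ defines a complex measure $\mu_{D,N}$ on $\mathbb{R}^{N}$ whose total variation simplifies, via the identity $\mathrm{Re}(D_{N}^{-1})=D_{N}^{-1}K_{N}(D_{N}^{*})^{-1}$, to
\begin{equation*}
\|\mu_{D,N}\|\;=\;\sqrt{|\det D_{N}|/\det K_{N}}\;=\;\bigl(\det(I+A_{N}^{2})\bigr)^{1/4}.
\end{equation*}

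For \emph{necessity}, suppose a countably additive complex measure $\mu_{D}$ on $\mathcal{S}'_{\mathrm{real}}(\mathbb{R}^{d})$ with characteristic functional $S_{D}$ exists. Its pushforward under the cylindrical projection onto $\mathrm{span}(e_{1},\ldots,e_{N})$ must coincide with $\mu_{D,N}$, because the characteristic functional determines a complex measure uniquely. Since total variation does not increase under pushforward, $\|\mu_{D,N}\|\le\|\mu_{D}\|<\infty$ uniformly in $N$. Diagonalising $A$ as $Ae_{n}=\lambda_{n}e_{n}$ (legitimate since $A=A^{*}$), this forces $\prod_{n}(1+\lambda_{n}^{2})<\infty$, hence $\sum_{n}\lambda_{n}^{2}=\mathrm{Tr}(A^{2})<\infty$, which is exactly the Hilbert--Schmidt condition on $A$.

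For \emph{sufficiency}, assume $A$ is Hilbert--Schmidt. By Minlos (Proposition~\ref{Prof:Minlos_Theorem}) there is a real Gaussian measure $\mu_{K}$ on $\mathcal{S}'_{\mathrm{real}}(\mathbb{R}^{d})$ with characteristic functional $\exp(-\tfrac{1}{2}\langle f,Kf\rangle)$. I would construct $\mu_{D}$ as $F\cdot\mu_{K}$, where the Radon--Nikodym derivative $F$ is formally $\det(I+iA)^{-1/2}\exp\!\bigl(\tfrac{i}{2}\langle\Phi,B\Phi\rangle\bigr)$ with $B=K^{-1/2}A(I+iA)^{-1}K^{-1/2}$. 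To make this sensible in infinite dimensions, Wick-order the quadratic form with respect to $\mu_{K}$: since $B$ is Hilbert--Schmidt, $:\!\langle\Phi,B\Phi\rangle\!:$ is well defined in $L_{2}(d\mu_{K})$, and its exponential, normalised by the Carleman determinant $\det_{2}(I+iA)$ in place of the divergent $\det(I+iA)$, yields a genuine $F\in L_{1}(d\mu_{K})$ with $\|F\|_{L_{1}}\le(\det(I+A^{2}))^{1/4}$. One then checks by direct Gaussian integration on cylinder functions that $\int e^{i\Phi(f)}F(\Phi)\,d\mu_{K}(\Phi)=S_{D}(f)$, and extends by continuity in $f$.

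The main obstacle is the sufficiency step: rigorously constructing $F$ and verifying $F\in L_{1}(d\mu_{K})$ with the correct characteristic functional. This rests on the Wick calculus for Gaussian measures together with the fact that the regularised determinant $\det_{2}(I+iA)$ is well defined precisely when $A$ is Hilbert--Schmidt, so that the formal infinite product for $\det(I+iA)$ can be compensated by a Wick subtraction in the quadratic exponent. The necessity direction, by contrast, is essentially a bookkeeping consequence of the finite-dimensional total-variation formula above, which is simultaneously the source of the divergence when $A\notin\mathcal{J}_{2}$ and the value of $\|F\|_{L_{1}}$ when $A\in\mathcal{J}_{2}$.
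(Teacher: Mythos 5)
A point of orientation first: the paper does not prove this proposition at all --- it is quoted as Proposition 4.4 of Yngvason's paper, and what follows it in the text is only a heuristic (the commuting case, where $\mathfrak{Z}=\bigl(\prod_j(1+\lambda_j^2)\bigr)^{1/2}$ converges exactly when $LK^{-1}$ is Hilbert--Schmidt). Your plan aims at an actual proof in the same spirit, and the architecture --- finite-dimensional total-variation formula for necessity, a renormalized density with respect to a Gaussian for sufficiency --- is the right one. But as written it has genuine gaps.

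In the necessity step, ``diagonalising $A$ as $Ae_n=\lambda_n e_n$ (legitimate since $A=A^*$)'' is not legitimate: a bounded self-adjoint operator need not have any eigenvectors, and the paper explicitly emphasizes the case in which the spectrum of $LK^{-1}$ is purely continuous. Relatedly, the marginal of $\mu_D$ on $\mathrm{span}(e_1,\ldots,e_N)$ has covariance $K_N+iL_N$, so your total-variation formula involves $\widetilde A_N=K_N^{-1/2}L_NK_N^{-1/2}$, not $P_NAP_N$, and you give no argument that a uniform bound on $\det(I+\widetilde A_N^{\,2})$ forces $A\in\mathcal{J}_2$. A clean repair: test the measure on functions $f_i$ whose $K$-Gram matrix is the identity, $\langle f_i,Kf_j\rangle=\delta_{ij}$ (approximately, by density in $\mathcal{S}$), so the marginal covariance is $I+i\bigl(\langle u_i,Au_j\rangle\bigr)$; then $\det(I+M)\geqslant 1+\mathrm{Tr}\,M$ for $M\geqslant 0$ gives $\|\mu_D\|^4\geqslant 1+\sum_{i,j}\langle u_i,Au_j\rangle^2$, and taking the supremum over finite orthonormal systems yields the Hilbert--Schmidt bound with no spectral decomposition needed. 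The second gap is the one you flag yourself: the sufficiency half is only asserted. Note that $B=K^{-1/2}A(I+iA)^{-1}K^{-1/2}$ is not self-adjoint; the real part of $\tfrac{i}{2}\langle\Phi,B\Phi\rangle$ is the positive form $\tfrac12\langle\Phi,K^{-1/2}A^2(I+A^2)^{-1}K^{-1/2}\Phi\rangle$, so $|F|$ is an unbounded exponential and its $\mu_K$-integrability is precisely the spectral fact $A^2(I+A^2)^{-1}<I$ combined with $A\in\mathcal{J}_2$; this bound, the identity $\int e^{i\Phi(f)}F\,d\mu_K=S_D(f)$ on cylinder functions, and the $\det_2$ renormalization are the actual content of the theorem (which is why the paper simply cites Yngvason) and remain to be carried out. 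So your text stands as a sound plan with the hard half missing and a fixable error in the easy half.
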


This criterion is linked to the desire to write  $d\mu_{D}$ as a phase times a probability measure $d\mu_{G}$ normalized by the constant $\mathfrak{Z}$, namely 
\begin{equation}
        d\mu_D(\Phi) = \frac{1}{\mathfrak{Z}}\,e^{\frac{i}{2} \langle \Phi, Y\Phi \rangle }\,
            d\mu_G(\Phi)\;,
         \text{so  } 
                  |d\mu_D(\Phi)|
        \leqslant\frac{1}{ | \mathfrak{Z}| }\,
        d\mu_G(\Phi)\;.
    \label{Complex-Measure}
\end{equation}
To obtain insight into Yngvason's Criterion, let us assume for simplicity that $K$ and $L$ commute, and that the spectrum of $K^{-1/2}LK^{-1/2}=LK^{-1}$ is even.   Then one finds 
	\[
	D^{-1}= |D|^{-2}K
		-i |D|^{-2}L\;,
	\quad
	G^{-1}= |D|^{-2}K\;,
	\quad
	Y= |D|^{-2}L\;
	\]
and
	\[
	\mathfrak{Z}= \left( \det\lrp{I-iLK^{-1}} \right)^{1/2}\;.
	\]
Denote the positive eigenvalues of $LK^{-1}$
by $\lambda_j$.  Then  
\begin{equation}
        \mathfrak{Z}
        =\lrp{\prod_{j} (1+i\lambda_j)(1-i\lambda_j)}^{1/2}
        =\lrp{\prod_{j} (1+\lambda_j^2)}^{1/2}\geqslant 1\;.
    \label{ProductRepn}
\end{equation}
In \eqref{ProductRepn} the product defining $\mathfrak{Z}$ converges if and only if   $LK^{-1}$ is Hilbert-Schmidt, which is Yngvason's Criterion.

In the examples we consider in \cite{JJM-PartII}, Yngvason's Criterion does {\em not} apply.  In these examples not  only is the spectrum of $LK^{-1}$ continuous (and hence not Hilbert-Schmidt), but also an infrared cutoff would yield eigenvalues $\lambda_{j}$ that  would not converge to zero.  In this case, as well as in the continuous case without a cutoff, $\mathfrak{Z}$  is infinite.\footnote{Related issues arise in
Brydges and Imbrie's study of random walks; see Equation
(7.2) of \cite{Brydges-Imbrie}. We thank John Imbrie for
bringing this to our attention.}

\setcounter{equation}{0}
\section{Classical Fields as Operators on Hilbert Space
\label{Sect:NeutralFields}} One can define a neutral,
random field by introducing the Fock-Hilbert space
$\mathcal{E}=\mathcal{E}(\mathcal{K})$ over a one-particle space  $\mathcal{K}$. This exponential Hilbert space has the form
\begin{equation}
        \mathcal{E}
        = \bigoplus_{n=0}^\infty \,\mathcal{E}_n\;,
        \quad \text{where} \quad
        \mathcal{E}_{0}=\mathbb{C}\;,
        \quad \text{and} \quad
          \mathcal{E}_{n}
          =  \underbrace{\mathcal{K} \otimes_s
                \cdots\otimes_s \mathcal{K}}_{n\ \rm factors}\;,
    \label{Euclidean Fock Space}
\end{equation}
and where $\otimes_s$ denotes the symmetric tensor product. Let the distinguished vector $\Omega_{0}^{\tt E}=1\in\mathcal{E}_{0}$ denote the zero-particle state.

In the following, we take $\mathcal{K}=L_2(\mathbf{X})$, but one could just as well take $\mathcal{K}= \bigoplus_{j=1} ^{ N} L_{2}(\mathbf{X})$ for an $N$-component field.  Either
$\mathbf{X}$ denotes Euclidean spacetime $\mathbb{R}^d$, a toroidal spacetime $\mathbb{T}^d=S^{1}\times \cdots \times S^{1}$,
 or more generally an intermediate case in which spacetime has the form $\mathbf{X}=X_1\times \cdots X_d$, where each factor~$X_j$ 
either equals $\mathbb{R}$ or a circle $S^{1}$ of length $\ell_j$. We let $\mathcal{S}(\mathbf{X})$ denote the $C^{\infty}$ functions on $\mathbf{X}$ 
with the topology given by the usual family of seminorms $\|f \|_{r,s}=\sup_{x\in \mathbf{X}} | x^{r}D^{s}f(x)| $. 
Then $\Phi\in\mathcal{S}'(\mathbf{X})$, the dual space of continuous linear functionals on $\mathcal{S}(\mathbf{X})$, 
and it pairs linearly with test functions $f\in\mathcal{S}(\mathbf{X})$ yielding $\Phi(f)=\int \Phi(x) f(x)dx$. 

We assume the neutral random field  $\Phi$ is an operator-valued distribution on the Hilbert space~$\mathcal{E}$ with each $\Phi(f)$ for  $f\in\mathcal{S}(\mathbf{X})$ defined on a common, dense, invariant domain $\mathcal{D}\subset \mathcal{E}$.   Note that~$\Phi(f)^{*}=\Phi^{*}({\overline f})$ on the domain $\mathcal{D}$. 
 This domain includes $\Omega_{0}^{\tt E}$ and is invariant under the action of the field, namely $\Phi(f)\mathcal{D}\subset\mathcal{D}$. The characteristic functional of a neutral field is defined as the exponential
\begin{equation}
            S(f)
            = \lrae{\Omega_0^{\tt E}}{e^{i\Phi(f)}\,}_\mathcal{E}
            = \sum_{n=0}^\infty \frac{i^n}{n!}\,
                    \lrae{\Omega_0^{\tt E}}{\Phi(f)^n\,}_{\mathcal{E}}\;,
        \label{Characteristic_Functional on E}
\end{equation}
and we assume this series converges for all $f\in\mathcal{S}(\mathbf{X})$. 
In the Gaussian case with mean zero, $\langle \Omega_0^{\tt E},  \Phi(f)^{2n}  \Omega_0^{\tt E} \rangle_\mathcal{E}= (2n-1) !!\langle \Omega_0^{\tt E},  \Phi(f)^{2}  \Omega_0^{\tt E} \rangle^n$
demonstrating convergence explicitly. We reinterpret this for  charged fields in \S\ref{Sect:Charged Fields}.
    
For Gaussian fields with bounded two point function $D$, the exponential series for the vector $e^{i\Phi(f)}\Omega_0^{\tt E}$ converges strongly, and the characteristic function \eqref{Gaussian_Characteristic_function} is well-defined by its exponential series. In the case of real, neutral fields, non-Gaussian expectations, associated with an action $\mathfrak{A}$, approximated by cut-off actions $\mathfrak{A}_{n}$, have been constructed. 
One replaces the expectation $\lrae{\Omega_{0}^{\tt E}}{\ \cdot\ \, }_{\mathcal{E}}$ with a sequence of  normalized expectations of the form  
	\[ 
		\omega_{n} (\ \cdot\ )  = 
		 \frac{ \lrae{\Omega_{0}^{\tt E}}{\ \cdot\ e^{-\mathfrak{A}_{n}}\, }_{\mathcal{E}}}{
		 \lrae{\Omega_{0}^{\tt E}}{e^{-\mathfrak{A}_{n}}\, }_{\mathcal{E}}} \; ,  
	\]
yield  another translation invariant expectation in the limit $n\to\infty$. Some corresponding Euclidean Hilbert spaces have been given by {\em constructive quantum field theory}; see \cite{Glimm-Jaffe}.

\subsection{Neutral, Classical Fields: the Case  $ \mathbf{X}=\mathbb{R}^{d}$ \label{Sect:NeutralCoordinates}}
In this section, we consider fields on Euclidean spacetime, $\mathbf{X}=\mathbb{R}^{d}$, with the one-particle space $\mathcal{K}=L_{2}(\mathbb{R}^{d})$. The  {\em annihilation operator} (which is actually a densely-defined bilinear form on $\mathcal{E}\times\mathcal{E}$) has non-vanishing matrix elements from $\mathcal{E}_n$ to $\mathcal{E}_{n-1}$.  In the Fourier representation it acts as 
\[ (A(k)f)_{n-1}(k_1,\ldots,k_{n-1})=\sqrt{n}\,f_n(k,k_1,\ldots,k_{n-1}) \;.  \]
Then $[A(k), A(k')]=0 $. 
The adjoint creation form $A(k)^*$ satisfies the usual canonical
relations, $[A(k),A(k')^*]=\delta(k-k')$.   Define the complex coordinates
\[
        \widetilde Q (k)
        = A(k)^*+A(-k)\;.
\]
These coordinates mutually commute, $ [\widetilde Q (k),\widetilde Q (k')]=0$, and also  
\begin{equation}
    	    \widetilde Q^{*} (k)
	    = \widetilde Q (-k)\;,\quad \text{and}\quad
	    \lrae{\Omega_0^{\tt E}}{\widetilde Q (k)\,\widetilde Q (k')\,}_\mathcal{E}
	    =\delta(k+k')\;.
    \label{Q(-k)}    
\end{equation}
    
The Gaussian coordinate field $Q(x)$ is the Fourier transform of $\widetilde Q(k)$,  namely 
\begin{equation}
        Q(x)
        = \lrp{2\pi}^{-d/2}
            \int \widetilde Q (k)\,e^{ik\cdot
            x}\,dk\;.
    \label{Q-Field_Form}
\end{equation}
Take the general Gaussian, neutral, scalar, classical field $\Phi(x)$ to  be  a linear
function of $Q (x)$.  Assume that for some given function $\widetilde \sigma (k)$,
\begin{equation}
        \Phi(x)
         = \lrp{2\pi}^{-d/2}
            \int \widetilde Q (k)\, \widetilde{\sigma}(k) \,e^{ik\cdot
            x}\,dk\;.
    \label{Field_Form}
\end{equation}
Whatever the choice of $\widetilde \sigma(k)$, the fields $\Phi(x)$ and their adjoints mutually commute,
 \begin{equation}
        [ \Phi(x) , \Phi(x')]
        =  [\Phi(x) , \Phi^*(x')]=0\;.
    \label{Field_Commutator}
\end{equation}
The field $\Phi(x)=\Phi^{*} (x)$ is hermitian in case that
$\widetilde{\sigma}(k)=\overline{\widetilde{\sigma}(-k)}$,
or, equivalently, when $\sigma=\overline\sigma$ is
real\footnote{For a transformation $S$ on $L_2$ with
integral kernel $S(x;x')$, the integral kernel of the
transpose  $S^{\rm T}$ is $S(x';x)$, the kernel of the
complex-conjugate $\overline S$ is $\overline{S(x;x')}$,
and the kernel of the hermitian-adjoint $S^*$ is
$\overline{S(x';x)}$. The operator $S$ is defined to be
{\em symmetric} if $S=S^{\rm T}$, is defined to be {\em real}
if $S=\overline S$, and is defined to be {\em hermitian} if
$S=S^*$. The kernel of a translation-invariant operator has
the form $S(x;x')=S(x-x')$.  In Fourier space:
    \[
        \widetilde{S^{\rm T}}(k)=\widetilde S(-k)\;,\qquad
        \widetilde{\overline S}\,(k)=\overline {\widetilde S(-k)}\;,
        \qquad
        \text{and}
        \widetilde{ S^*}\,(k)=\overline{\widetilde S(k)}\;.
    \]}.    
In the standard free, Euclidean-field example, one takes $\widetilde\sigma(k)=(k^{2}+m^{2})^{-1/2}$.  
    
In configuration space, the definition \eqref{Field_Form}  amounts to the relation
\begin{equation}
        \Phi(x)={(2\pi)^{-d/2}}\lrp{\sigma\,Q}(x)\;.
    \label{Field_as_SQ}
\end{equation}
Here, one defines the convolution operator $\sigma$  by
\[
        (\sigma f)(x)
        = (2\pi)^{-d/2}\,\int\sigma(x-x')f(x')dx'
\]
where
\[
        \sigma(x)
        = (2\pi)^{-d/2}\,\int
        \widetilde\sigma (k) \,e^{ik\cdot x}dk\;.
\]
The expectation of two fields defines an
operator $D \colon \mathcal{K}\mapsto \mathcal{K}$ with integral kernel
\[
        D(x,x')
        = \lrae{\Omega_0^{\tt E}}{\Phi(x)\Phi(x')\,}_\mathcal{E}\;.
\]

One can introduce commuting, canonically-conjugate coordinates (which
generally do not enter the functional integrals), namely 
\[
        \widetilde P(k)
        = \tfrac{i}{2}\lrp{A(k)^*-A(-k)}
        = \widetilde P(-k)^*\;,
\]
so
\[
        [\widetilde P(k) , \widetilde Q(k') ]
        = -i\delta(k+k')\;.
\]
In case that $\sigma^{\rm T}$ is invertible,  the conjugate field is 
\[
        \Pi(x)
        = (2\pi)^{d/2} (\sigma^{{\rm T}})^{-1}P(x)
        = (2\pi)^{-d/2} \int \widetilde\sigma({-k})^{-1}\widetilde P(k)
            e^{ik\cdot x}\,dk\;.
\]
With these conventions,
$[\Pi(x) , \Phi(x')]=-i\delta(x-x')$.

\subsection{Schwinger Functions}
The zero-particle expectations 
\[
S_{n}(x_{1},\ldots,x_{n})=\lrae{\Omega_{0}^{\tt E}}{\Phi(x_{1})\cdots\Phi(x_{n})\,}_\mathcal{E}
\]
satisfy a Gaussian recursion relation, 
\begin{equation}
		S_{n}(x_{1},\ldots,x_{n})=\sum_{j=2}^{n} S_{2}(x_{1},x_{j})\,
			S_{n-2}(x_{2}, \ldots, \not\hskip-2pt x_{j}, \ldots, x_{n})\;,
	\label{S-Recursion}
\end{equation}
where $\not\hskip-2pt x_{j}$ denotes the omission of $x_{j}$ and 
\begin{equation}       
	S_{2}(x,x')
       = D(x-x')
        = (2\pi)^{-d}\,
            \int \widetilde{\sigma}(k)\,\widetilde{\sigma}(-k)\,e^{ik\cdot(x-x')}\,dk  \;.
    \label{TwoPointIdentification}
\end{equation}
Commutativity of the fields assures that $D(x)$ is an even function,   $D(x-x')=D(x'-x)$.  So
\begin{equation}
        D=\sigma\sigma^{\rm T}
        = \sigma^{\rm T}\sigma
        = D^{\rm T}\;,
       \quad \text{and}\quad       
        (2\pi)^{d/2}\widetilde D(k)
        = \widetilde{\sigma}(k)\,\widetilde{\sigma}(-k)\;.
    \label{Neutral Covariance}
\end{equation}
Given  $\widetilde D(k)$, the general solution for $\widetilde{\sigma}(k)$ is 
\[
        \widetilde{\sigma}(k)
        = (2\pi)^{d/4} \widetilde D(k)^{1/2}\,e^{h(k)}\;,
\]
with $h(k)$ an odd function $h(k)=-h(-k)$.  Here, we take the straightforward choice $h(k)=0$, so
	\[ 
       \widetilde\sigma(k)
       = (2\pi)^{d/4} \widetilde D(k)^{1/2}
       =\widetilde \sigma(-k)\;,
	\]
and thus \eqref{Q-Field_Form} equals
\begin{equation}
        \Phi(x)
        = \lrp{2\pi}^{-d/4}
            \int \widetilde Q(k)\, \widetilde D(k)^{1/2}
             \,e^{ik\cdot x}\,dk\;.
   \label{Classical_Field-1}
\end{equation}
Note that the recursion relation \eqref{S-Recursion} ensures that 
\[
		\langle \Omega_{0}^{\tt E}, \Phi(f)^{2n} \Omega_{0}^{\tt E} \rangle_{\mathcal{E}}
		= (2n-1)!! \,\langle \overline f, Df \rangle^{n}_{L_{2}} \;.
\]

We want to ensure that multiplication by $\widetilde D(k)^{1/2}$ defines a
continuous transformation of Schwartz space~$\mathcal{S}(X)$ into itself.  This requires choosing  an
appropriate square root. Consider the case  $\widetilde
D(k)= \widetilde K(k)+i\widetilde L(k)$, with $0<\widetilde
K$, and $\widetilde L$ real. One can use the
positive square root of $\widetilde K(k)$ to write
    \[
       (2\pi)^{-d/4} \widetilde \sigma(k)
        = \widetilde D(k)^{1/2}
        =\widetilde K(k)^{1/2}\lrp{1+i\, \widetilde L(k)\widetilde K(k)^{-1} }^{1/2}\;.
       \]
Here, one chooses the square root $\widetilde D(k)^{1/2}$ so
that $\widetilde K(k)^{1/2}$ is positive, and the real part of the term
$(1+i\, \widetilde L(k)\widetilde K(k)^{-1} )$   lies in the complex half-plane with real part
greater than $1$.  Since we assume that multiplication by
$\widetilde D(k)$ provides a continuous transformation of
$\mathcal{S}(\mathbb{R}^d)$ into itself, it is a smooth function all of
whose derivatives are polynomially bounded.  Since the
square root we choose is unambiguous and non-vanishing, the
function $\widetilde D(k)^{1/2}$ also has these properties.

\subsection{Neutral Fields $\Phi$ as Operators}
Define the field $\Phi(f)$, $f\in C^{\infty}_{0}$, as an operator on $\mathcal{E}$ with the domain $\mathcal{D}$ consisting of vectors with a finite number of particles (namely vectors in $\bigoplus_{j=0}^{n} \mathcal{E}_{j}$) and with $C^{\infty}_{0}$ wave functions in each~$\mathcal{E}_{j}$. Note that with our choice of $\sigma$, the field is hermitian as a sesquilinear form, $\Phi(x)=\Phi^{*}(x)$,  if and only if  $\sigma=\overline\sigma$ is real.  This is the case  if and only if the operator  $D=\overline D$ is real.  In  Fourier space, taking the symmetry of $\sigma$ into account,  this is equivalent to  $\overline{\widetilde \sigma (k)} = \widetilde {\sigma}(k)$,  or  $\overline{\widetilde D(k)}=\widetilde D(k)$.   In any case, we assume that  $\sigma$ (or~$D$) is a bounded transformation on $L_{2}(\mathbb{R}^{d})$,
\[
		\| D \| = \| \, |D| \, \|
		= \| D^{*}D \|^{1/2}
		= \| \sigma \|^{2}
		< \infty\;.
\]
	
In fact, it is convenient to assume a stronger bound to ensure good properties for the time-zero fields.  Let us decompose $D=K+iL$ into its real and imaginary parts
\[
		K=\tfrac12\lrp{D+D^{*}}\;,
		\quad \text{and} \quad 
		L=\tfrac{i}{2}\lrp{-D+D^{*}}\;.
\]
 We assume that there are strictly positive constants $M_{1}, M_{2}, M_{3}$ such that for $C=(-\Delta +m^{2})^{-1}$,
\begin{equation}
		M_{1} C \leqslant K \leqslant M_{2}C \;, 
		\quad \text{and} \quad
		\pm K^{-1/2} L K^{-1/2} \leqslant  M_{3}\;.
	\label{D-Bound}
\end{equation}
Now decompose the field into its real and imaginary parts, 
\[
		\Phi(f)
		= c(f) +i d(f)\;,
\]
where $c(f)=\tfrac{1}{2}\lrp{\Phi(f)+\Phi^{*}(f)}$ and $d(f)=-\frac{i}{2}\lrp{\Phi(f)-\Phi^{*}(f)}$.  Let~$\mathfrak{H}_{-\frac{1}{2}}$ denote the Hilbert space of functions on $\mathbb{R}^{d-1}$ with inner product 
\begin{equation}
		\langle f,g \rangle_{\mathfrak{H}_{-1/2}}
		= \langle f, (2\mu)^{-1}g \rangle_{L_{2}(\mathbb{R}^{d-1})}\;,
\end{equation}
 where $\mu=\lrp{-\nabla^{2}+m^{2}}^{1/2}$.

\begin{proposition}\label{Prop:Normal Fields}  
Assume the bounds \eqref{D-Bound}.  Then for $f,g\in L_{2}$ the closure of $\Phi(f)$ is normal and commutes with the closure of $\Phi(g)$.    The fields $c(f)$ and $d(f)$ are essentially self-adjoint on $\mathcal{D}$, and their closures commute.  These results extend to $f=\delta \otimes h$ for $h\in \mathfrak{H}_{-\frac{1}{2}}(\mathbb{R}^{d-1})$.
\end{proposition}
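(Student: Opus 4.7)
The strategy is to recast $\Phi(f)$ in terms of Segal fields, reduce essential self-adjointness and commutativity to the standard Segal theory, and then redo the $L_{2}$-membership step for the time-zero case. Starting from~\eqref{Field_Form} and $\widetilde Q(k)=A(k)^{*}+A(-k)$, a direct rewrite gives
\[
\Phi(f) \;=\; a^{*}(u_{f}) + a(v_{f}), \qquad u_{f}(k)=\widetilde\sigma(k)\,\widetilde f(-k), \qquad v_{f}(k)=\overline{u_{f}(-k)},
\]
where $a^{*}(u)=\int u(k)A(k)^{*}\,dk$ is complex-linear in $u$ and $a(v)$ is its conjugate-linear adjoint.

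Under~\eqref{D-Bound}, the bound $K\leqslant M_{2}C$ together with $\pm L\leqslant M_{3}K$ makes $D=K+iL$ bounded on $L_{2}(\mathbb R^{d})$, so $\sigma$ is bounded and $u_{f},v_{f}\in L_{2}(\mathbb R^{d})$ with norms controlled by $\|f\|_{L_{2}}$. A short calculation then yields the Segal form
\[
c(f)=a^{*}(h_{1})+a(h_{1}), \qquad d(f)=a^{*}(h_{2})+a(h_{2}),
\]
with $h_{1}=\tfrac{1}{2}(u_{f}+v_{f})$ and $h_{2}=\tfrac{i}{2}(v_{f}-u_{f})$, both in $L_{2}(\mathbb R^{d})$, and likewise for $c(g),d(g)$.

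The classical Segal quantization theorem states that $a^{*}(h)+a(h)$ is essentially self-adjoint on the finite-particle $C_{0}^{\infty}$-domain $\mathcal D$ for every $h\in L_{2}(\mathbb R^{d})$, and satisfies the number-operator bound $\|(a^{*}(h)+a(h))\psi\|\leqslant 2\|h\|\,\|(N+1)^{1/2}\psi\|$. Applied to $h_{1}$ and $h_{2}$ it gives essential self-adjointness of $c(f)$ and $d(f)$ on $\mathcal D$. The commutators~\eqref{Field_Commutator}, after smearing, force every pair among $c(f),d(f),c(g),d(g)$ to commute on $\mathcal D$. To promote this algebraic commutativity to strong commutativity of the self-adjoint closures, I would check that $\mathcal D$ is a dense set of analytic vectors for the symmetric operator $c(f)^{2}+d(f)^{2}+c(g)^{2}+d(g)^{2}$ using the number-operator bound above (on $\mathcal E_{n}$ that operator is dominated by a multiple of $N+1$), and then invoke Nelson's commutativity theorem for self-adjoint operators. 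Joint spectral calculus shows at once that $\overline{\Phi(f)}=\overline{c(f)}+i\,\overline{d(f)}$ is normal with core $\mathcal D$, and that $\overline{\Phi(f)}$ and $\overline{\Phi(g)}$ commute as normal operators.

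For the extension to $f=\delta\otimes h$ with $h\in\mathfrak H_{-\frac{1}{2}}(\mathbb R^{d-1})$, the distribution $\delta\otimes h$ is not in $L_{2}(\mathbb R^{d})$, so the step placing $u_{f},v_{f}$ in $L_{2}(\mathbb R^{d})$ must be reestablished. The Fourier multipliers $\widetilde K(k)$ and $|\widetilde L(k)|$ are bounded pointwise by $M_{2}(1+M_{3})/(k^{2}+m^{2})$, and integration over $k_{0}$ of $(k_{0}^{2}+\vec k^{2}+m^{2})^{-1}$ yields $\pi/\mu(\vec k)$. Combining these gives
\[
\|u_{f}\|_{L_{2}(\mathbb R^{d})}^{2}\;\leqslant\; c\,M_{2}(1+M_{3})\,\|h\|^{2}_{\mathfrak H_{-1/2}}
\]
for a dimensional constant $c$, with an identical bound for $v_{f}$; the rest of the argument then runs verbatim. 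The chief technical obstacle is the analytic-vectors/Nelson step that promotes algebraic commutativity to strong commutativity of closures; all other ingredients are routine Segal-field bookkeeping once the Fourier representation of $\Phi(f)$ has been cast in the creation/annihilation form above.
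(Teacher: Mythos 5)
Your proposal is correct, but it reaches the conclusions by a genuinely different mechanism than the paper. You rewrite $\Phi(f)=a^{*}(u_{f})+a(v_{f})$ and reduce everything to standard Fock-space theory: essential self-adjointness of $c(f)$ and $d(f)$ from the Segal quantization theorem, strong commutativity of the closures from Nelson's commutativity criterion (essential self-adjointness of the sum of squares, checked via analytic vectors), and normality of $\overline{\Phi(f)}$ by joint spectral calculus; the time-zero case is handled by the pointwise multiplier estimate $\int dk_{0}\,(k_{0}^{2}+\vec k^{2}+m^{2})^{-1}=\pi/\mu(\vec k)$, which places $u_{f},v_{f}$ in $L_{2}$ for $f=\delta\otimes h$. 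The paper instead works directly with the Gaussian structure on $\mathcal{E}$: invertibility of $\sigma$ (coming from the lower bound $M_{1}C\leqslant K$ in \eqref{D-Bound}) lets it write each $\Psi\in\mathcal{D}$ as $F_{\Psi}\Omega_{0}^{\tt E}$ with $F_{\Psi}$ a field polynomial, it then bounds $\|\Phi(f)^{n}\Psi\|_{\mathcal{E}}$ by Cauchy--Schwarz plus the explicit Gaussian moments, concludes that all of $\mathcal{D}$ consists of analytic vectors for $c(f)$ and $d(f)$, invokes Nelson's analytic-vector lemma for essential self-adjointness, and gets strong commutativity by showing the double exponential series for $e^{ic(f)}e^{id(g)}\Psi$ converges, so the unitaries commute; for time zero it uses the operator bound $|D|\leqslant(1+M_{3}^{2})^{1/2}M_{2}C$ together with $\langle \delta\otimes h, C(\delta\otimes h)\rangle_{L_{2}}=\langle h,h\rangle_{\mathfrak{H}_{-1/2}}$, which is the operator version of your pointwise computation (the two are equivalent because $K$, $L$, $C$ are translation invariant, so the form bounds \eqref{D-Bound} become pointwise bounds on the symbols). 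Your route buys economy: it outsources the analysis to two standard theorems and avoids representing domain vectors as field polynomials applied to the vacuum, so the lower bound on $K$ is not needed for that step; the paper's route buys a self-contained estimate in the Gaussian setting that transfers verbatim to the time-zero fields. One point to tighten: on $\mathcal{E}_{n}$, domination of $c(f)^{2}+d(f)^{2}+c(g)^{2}+d(g)^{2}$ by a multiple of $N+1$ is not by itself sufficient for analytic vectors; you also need that each application raises the particle number by at most two, so that the iterated norms grow like ${\rm const}^{m}\,m!$ --- with that observation the Nelson commutativity step (and the claim that $\mathcal{D}$ is a core for the normal operator $\overline{c(f)}+i\,\overline{d(f)}$) goes through.
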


\begin{proof}
The operator $\sigma=(K+iL)^{1/2}$ acting on $L_{2}(\mathbb{R}^{d})$ is invertible as long as $K$ is invertible.  The latter follows from the first bound in the assumption \eqref{D-Bound}.  As a consequence, each vector $\Psi$ in the dense subset $\mathcal{D}\subset\mathcal{E}$ can be represented as  $\Psi=F_{\Psi}\Omega_{0}^{\tt E}$, where $F_{\Psi}$ is a polynomial in fields.  The degree of this polynomial equals the maximum number of particles in the vector.  Thus, for $f\in C^{\infty}_{0}$,
\begin{eqnarray}
		\| \Phi(f)^{n}\Psi  \|_{\mathcal{E}}
		&=& \lra{\Phi(f)^{*\,n}\Phi(f)^{n}\Omega_{0}^{\tt E}, F_{\Psi}^{*}F_{\Psi}\Omega_{0}^{\tt E}}_{\mathcal{E}} ^{1/2}
		\nonumber \\
		&\leqslant&
		\|  {\Phi(f)^{*\,n}\Phi(f)^{n}\Omega_{0}^{\tt E}} \|_{\mathcal{E}}^{1/2}\,
		\| F_{\Psi}^{*}F_{\Psi}\Omega_{0}^{\tt E} \|_{\mathcal{E}}^{1/2}\nonumber \\
		&\leqslant &  (4n-1)!!^{1/4}  \; \; \| \, | D^{1/2}| \, |f| \, \|^{n} \;  \;
		\| F_{\Psi}^{*}F_{\Psi}\Omega_{0}^{\tt E} \|_{\mathcal{E}}^{1/2} \nonumber \\
		&\leqslant &  M2^{n} n!^{1/2} \, \; \| {D} \|_{L_{2}}^{n/2} \, \; \,\| f \|^{n}_{L_{2}}\, \;
		\| F_{\Psi}^{*}F_{\Psi}\Omega_{0}^{\tt E} \|_{\mathcal{E}}^{1/2}  \;.
\end{eqnarray}
Here, $M$ is a  constant. The same bound holds for $\| \Phi(f)^{*\,n}\Psi \|_{\mathcal{E}}$,  and as~$\Phi(f)$ commutes with $\Phi^{*}(f)$ it shows that  $\Psi$ is an analytic vector for~$c(f)$ and for $d(f)$.  One can extend this bound to all $f\in L_{2}(\mathbb{R}^{d})$ by limits, so $\Phi(f)$  is also defined in this case. The essential self-adjointness follows from Nelson's Lemma 5.1 of \cite{Nelson-Analytic Vectors}.  The commutativity of the closures of  $c(f)$ and $d(g)$ follows from a similar estimate for $c(f)^{n} \,d(g)^{n'}\Psi$, showing that the sums 
	\[
		\sum_{n,n'=1}^{N} \frac{i^{n+n'}}{n! n'!}  
		c(f)^{n} \,d(g)^{n'}\Psi
		=  \sum_{n,n'=1}^{N} \frac{i^{n+n'}}{n! n'!}  
		d(g)^{n'} \,c(f)^{n}\Psi
	\]  
converge to $e^{ic(f)}\,e^{id(g)} \Psi = e^{id(g)}e^{ic(f)}\Psi$ as $N\to\infty$.   The vectors $\Psi\in\mathcal{D}$ are dense, so the exponentials commute as unitaries.  The same is true with $f$ and $g$ interchanged. 

In general, use \eqref{D-Bound} to show that 
\begin{eqnarray*}
		|D|
		&=&\lrp{K^{2}+L^{2}}^{1/2}
		=K^{1/2}\lrp{I+K^{-1}L^{2}K^{-1}}^{1/2}K^{1/2}
		\nonumber \\
		&\leqslant&   \lrp{1+M_{3}^{2}}^{1/2}\,K
		\leqslant  \lrp{1+M_{3}^{2}}^{1/2}\,M_{2}  \,  C\;.
\end{eqnarray*}
This  ensures  
\[
		\| \Phi(f)\Omega_{0}^{\tt E} \|_{\mathcal{E}}^{2}
		= \lra{f, |D| \,  f}_{L_{2}}
		\leqslant   \lrp{1+M_{3}^{2}}^{1/2}\,M_{2}  \,   \lra{f, Cf}_{L_{2}} \;.
\]
We use this to restrict to time zero.  If  $f=\delta\otimes h$,   then 
\begin{align*}
		\lra{f, |D| f}_{L_{2}}
		& \leqslant  \lrp{1+M_{3}^{2}}^{1/2}\,M_{2}  \,   \lra{f, Cf}_{L_{2}} \\
		& =  \lrp{1+M_{3}^{2}}^{1/2}\,M_{2}  \,   \lra{h, h}_{\mathfrak{H}_{-1/2}}\;. 		
\end{align*}
The argument then proceeds for the time-zero fields in the same manner  as for the spacetime averaged field.
\end{proof}

\setcounter{equation}{0}
\section{Time Reflection 
\label{Sect:TimeReflection}} Let $\vartheta$ denote time
reflection on $\mathbb{R}^d$, namely $\vartheta\,{:}\, (t,\vec x)
\mapsto (-t,\vec x)$, and also let  $\vartheta$ denote the
implementation of time reflection as a real, self-adjoint,
unitary transformation on $L_2(\mathbb{R}^d)$.  Let
$\Theta=\Theta^*=\Theta^{-1}$ denote the push-forward of
$\vartheta$ to a corresponding time-reflection unitary on $\mathcal{E}$. The
operator $\Theta$ acts on each tensor product $\mathcal{E}_{n}$
as the $n$-fold tensor product of $\vartheta$, and $\Theta\Omega_0^{\tt E}=\Omega_0^{\tt E}$.

\begin{proposition}\label{Prop:TimeReflection Action}
Time-reflection transforms the neutral field $\Phi(x)$ defined by \eqref{Classical_Field-1} according to
\begin{equation}
        \Theta \,\Phi(x)\, \Theta
        = \Phi^{*}(\vartheta x)
        \qquad\text{if and only if}\qquad
        \vartheta \sigma^{\rm T} \vartheta = \sigma^*\;.
    \label{Time-Reflection_field-1}
\end{equation}
In case that \eqref{Time-Reflection_field-1} holds, then both $\vartheta D$ and $D \vartheta$ are self-adjoint operators on $L_{2}(\mathbb{R}^{d})$.  In fact,  
\[
        \vartheta D
        = \sigma^*\vartheta\,\sigma
        \quad\text{and}\quad
         D\vartheta
         = \sigma\vartheta\sigma^*\;.
\]
\end{proposition}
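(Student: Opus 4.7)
\medskip
\noindent\textbf{Proof plan.} The plan is to reduce the operator identity to a Fourier-space identity on the symbol $\widetilde\sigma(k)$, by computing both sides of \eqref{Time-Reflection_field-1} explicitly via the Fourier representation \eqref{Classical_Field-1}, and then to translate the symbol identity back into the stated relation between $\sigma^{\rm T}$ and $\sigma^*$ using the conventions in the footnote.

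\medskip
\noindent\textbf{Step 1: action of $\Theta$ on $\widetilde Q(k)$.} Since $\Theta$ is the second quantization of the real, self-adjoint involution $\vartheta$ on $L_2(\mathbb{R}^d)$, and since $\vartheta$ acts in Fourier space again by $k\mapsto \vartheta k$ (because it is real), I first observe that
\[
   \Theta\,A(k)\,\Theta = A(\vartheta k)\;,\qquad
   \Theta\,A(k)^{*}\,\Theta = A(\vartheta k)^{*}\;,
\]
so that $\Theta\widetilde Q(k)\Theta = A(\vartheta k)^{*}+A(-\vartheta k)=\widetilde Q(\vartheta k)$.

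\medskip
\noindent\textbf{Step 2: the two sides in Fourier form.} Inserting this into \eqref{Classical_Field-1} and substituting $k\mapsto\vartheta k$ (using $|\!\det\vartheta|=1$ and $(\vartheta k)\cdot x=k\cdot(\vartheta x)$),
\[
   \Theta\,\Phi(x)\,\Theta
   =(2\pi)^{-d/4}\!\int \widetilde Q(k)\,\widetilde D(\vartheta k)^{1/2}\,e^{ik\cdot\vartheta x}\,dk\;.
\]
On the other hand, using $\widetilde Q(k)^{*}=\widetilde Q(-k)$ from \eqref{Q(-k)} and the change of variable $k\mapsto -k$,
\[
   \Phi^{*}(\vartheta x)
   =(2\pi)^{-d/4}\!\int \widetilde Q(k)\,\overline{\widetilde\sigma(-k)}\,e^{ik\cdot\vartheta x}\,dk\;.
\]
Equality of these expressions as operator-valued distributions in $x$ is equivalent, by Fourier uniqueness on $\mathcal{S}(\mathbb{R}^d)$, to the pointwise identity
\begin{equation}
   \widetilde\sigma(\vartheta k)=\overline{\widetilde\sigma(-k)}\qquad\text{for all }k\in\mathbb{R}^d\;.
   \label{symb}
\end{equation}

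\medskip
\noindent\textbf{Step 3: translation to operator form.} Using the Fourier dictionary from the footnote, $\widetilde{\sigma^{\rm T}}(k)=\widetilde\sigma(-k)$ and $\widetilde{\sigma^{*}}(k)=\overline{\widetilde\sigma(k)}$, together with the fact that conjugation by $\vartheta$ turns the multiplier $\widetilde a(k)$ into $\widetilde a(\vartheta k)$, the multiplier of $\vartheta\sigma^{\rm T}\vartheta$ is $\widetilde\sigma(-\vartheta k)$, whereas that of $\sigma^{*}$ is $\overline{\widetilde\sigma(k)}$. Replacing $k$ by $-k$ shows that \eqref{symb} is equivalent to $\vartheta\sigma^{\rm T}\vartheta=\sigma^{*}$. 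This gives the biconditional \eqref{Time-Reflection_field-1}.

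\medskip
\noindent\textbf{Step 4: consequences for $D$.} Assuming $\vartheta\sigma^{\rm T}\vartheta=\sigma^{*}$, equivalently $\sigma^{\rm T}=\vartheta\sigma^{*}\vartheta$, I substitute into the two factorizations $D=\sigma^{\rm T}\sigma=\sigma\sigma^{\rm T}$ from \eqref{Neutral Covariance} to obtain
\[
   D=\vartheta\sigma^{*}\vartheta\,\sigma\quad\Longrightarrow\quad \vartheta D=\sigma^{*}\vartheta\sigma\;,
\]
and
\[
   D=\sigma\,\vartheta\sigma^{*}\vartheta\quad\Longrightarrow\quad D\vartheta=\sigma\vartheta\sigma^{*}\;.
\]
Self-adjointness on $L_2(\mathbb{R}^d)$ is then immediate from $\vartheta=\vartheta^{*}$: $(\sigma^{*}\vartheta\sigma)^{*}=\sigma^{*}\vartheta\sigma$ and $(\sigma\vartheta\sigma^{*})^{*}=\sigma\vartheta\sigma^{*}$.

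\medskip
\noindent\textbf{Main obstacle.} The real work is just in Step 2, namely keeping the Fourier bookkeeping straight: I must carefully track which change of variable is applied, use $\widetilde Q(k)^{*}=\widetilde Q(-k)$ to move the adjoint inside the integral, and verify that the identity of Fourier integrands is both necessary and sufficient (not only sufficient) for equality of the two operator-valued distributions. Once \eqref{symb} is obtained, Steps 3 and 4 are algebra.
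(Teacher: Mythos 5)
Your proposal is correct and follows essentially the same route as the paper: reduce to the Fourier-space symbol identity $\widetilde\sigma(\vartheta k)=\overline{\widetilde\sigma(-k)}$ via $\Theta\widetilde Q(k)\Theta=\widetilde Q(\vartheta k)$, translate it to $\vartheta\sigma^{\rm T}\vartheta=\sigma^{*}$ using the footnote's Fourier dictionary, and then read off $\vartheta D=\sigma^{*}\vartheta\sigma$ and $D\vartheta=\sigma\vartheta\sigma^{*}$ from $D=\sigma\sigma^{\rm T}=\sigma^{\rm T}\sigma$. Only a cosmetic remark: in Step 2 you mix the normalizations $(2\pi)^{-d/4}\widetilde D(k)^{1/2}$ and $(2\pi)^{-d/4}\overline{\widetilde\sigma(-k)}$ (they differ by the constant $(2\pi)^{d/4}$), but since the same constant appears on both sides of the comparison it does not affect the equivalence.
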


\begin{proof}[\bf Proof]
Since $\vartheta (E,\vec k)=(-E,\vec k)$, the  unitary
$\vartheta$ acts as $(\vartheta f)(k) = f(\vartheta k)$ on
$\mathcal{E}_1$. Thus, $\Theta$ has the property  $\Theta
A(k)\Theta=A(\vartheta k)$, and as a consequence
$\Theta\,\widetilde Q (k)\,\Theta = \widetilde Q
(\vartheta k)$. Using \eqref{Field_Form}, one has
\begin{eqnarray*}
        \Theta\,\Phi(x)\,\Theta
        &=& \lrp{2\pi}^{-d/2}\int \Theta\,\widetilde Q (k)\,\Theta\,
                \widetilde{\sigma}(k)
                    \,e^{ik\cdot x}\,dk
                    \nonumber \\
        &=& \lrp{2\pi}^{-d/2}\int \widetilde Q (\vartheta k)\,
                \widetilde{\sigma}(k)
                    \,e^{ik\cdot x}\,dk 
                    \nonumber \\
        &=& \lrp{2\pi}^{-d/2}\int \widetilde Q ( k)\,
                \widetilde{\sigma}(\vartheta k)
                    \, e^{ik(\cdot\vartheta x)}\,dk
                    \nonumber \\
        &=& \lrp{\lrp{2\pi}^{-d/2}\int \widetilde Q (- k)\,
                \overline{\widetilde{\sigma}(\vartheta k)}
                    \, e^{-ik\cdot(\vartheta x)}\,dk}^*
                    \nonumber \\
        &=& \lrp{\lrp{2\pi}^{-d/2}\int \widetilde Q (k)
                \overline{\widetilde{\sigma}(-\vartheta k)}
                    \,e^{ik\cdot(\vartheta x)}\,dk}^*\;.
\end{eqnarray*}
This equals $\Phi^{*} (\vartheta x)$ if and only if
$\overline{\widetilde{\sigma}(-\vartheta k)}=
\widetilde{\sigma}(k)$. The equivalent condition on
$\sigma$ in configuration space is $\vartheta
\overline{\sigma}\vartheta = \sigma$. Since $\vartheta^2=I$
this is also equivalent to $\vartheta
\sigma^{\rm T}\vartheta=\sigma^*$. As $D$ is given by
\eqref{Neutral Covariance},
\[
        \vartheta D
        = (\vartheta \sigma^{\rm T} \vartheta) \vartheta \sigma
        = \sigma^* \vartheta \sigma\;,
\]
as claimed, and this is clearly self-adjoint. Moreover,
$D\vartheta= \sigma \,\sigma^{\rm T}\vartheta= \sigma
\,\vartheta\sigma^*$ is also self-adjoint.
\end{proof}

\subsection{Time-Reflection Positivity\label{Sect:Time-RP}} Consider the positive-time half-space
$\mathbb{R}^d_{+}=[0,\infty)\times\mathbb{R}^{d-1}\, $; the negative-time half-space  $\mathbb{R}^{d}_{-}$ is defined similarly. Let $L_2(\mathbb{R}^d_\pm)$ denote
the subspace of~$L_2(\mathbb{R}^d)$ consisting of functions
supported in $\mathbb{R}^d_\pm$. Let $\mathcal{E}_{\pm,0}\subset \mathcal{E}$
denote the subspace of finite linear combinations $A=\sum_{j=1}^N c_j
\,e^{i\Phi(f_j)}\,\Omega_0^{\tt E}$, for $f_j\in\mathcal{S}(\mathbb{R}^d_\pm)$, and
let $\mathcal{E}_\pm$ denote its closure in $\mathcal{E}$. The
Osterwalder-Schrader reflection form on
$\mathcal{E}_{+,0}\times \mathcal{E}_{+,0}$ (or, alternatively, on $\mathcal{E}_{-,0}\times \mathcal{E}_{-,0}$) is
\begin{equation}
        \lra{A,B}_\mathcal{H}
        = \lra{A, \Theta B}_\mathcal{E}\;.
    \label{OS-Form-1}
\end{equation}
This extends by continuity to $\mathcal{E}_\pm\times \mathcal{E}_\pm$. Thus, the left hand side extends to a pre-inner product on  equivalence classes
$[A] = A+ n \in \mathcal{E}_{\pm}$. Here,  $n\in\mathcal{E}_{\pm}$  is an element of the null space of the 
form~$\lra{\ \cdot \ , \ \cdot \ }_\mathcal{H}$.  For simplicity denote the equivalence class $[A]$ by $A$.  The space $\mathcal{H}$ is the completion of the equivalence classes in this inner product. Let $M$
denote the $N\times N$ matrix with entries
\begin{equation}
        M_{jj'}=S(f_{j'}-\vartheta \overline {f_{j}})\;,
        \quad \text{where} \quad 
        f_{j}\in \mathcal{E} 
        \quad \text{for} \quad
        j=1,\ldots, N\;.
    \label{Matrix M}
\end{equation}

\begin{definition} [\bf RP] \label{Def:Neutral-OS Positivity}
Various possible formulations of RP (Osterwalder-Schrader Positivity)  with respect to time reflection
are:
    \begin{enumerate}
    
    \item[(i)] \label{RP Phi} One has $0\leqslant
   \Theta $ on either $\mathcal{E}_{+}$ or $\mathcal{E}_{-}$.    If both hold, then $\Theta$ is doubly RP.

    \item[(ii)]\label{RP S(f)} The functional $S(f)$ is RP with respect to $\vartheta$, if for every choice of 
    $N \in {\mathbb N}$ and for all functions $f_j\in \mathcal{S}(\mathbb{R}^d_{+})$ the matrix $M$ in~\eqref{Matrix M} is positive
    definite.  The functional $S(f)$  is also RP, if $M$ is positive for $f_{j}\in\mathcal{S}(\mathbb{R}^{d}_{-})$.  If both conditions hold, then $S(f)$  is {\em doubly}-reflection positive.

    \item[(iii)] \label{RP D}
    A symmetric operator $D=D^{{\rm T}}$ on $L_2(\mathbb{R}^d)$ is RP  with respect to $\vartheta$, if $ 0\leqslant \vartheta D$ on $ L_2(\mathbb{R}^d_+)$.  It is also RP if $0\le\vartheta  D$ on $L_{2}(\mathbb{R}^{d}_{-})$. The latter is equivalent to $0\leqslant D\vartheta   $ on $L_{2}(\mathbb{R}^{d}_{+})$.  The 
operator~$D$ is doubly RP if both conditions hold.
    \end{enumerate}
\end{definition}

\begin{proposition}\label{Prop:RP-Time Equivalence} Let
$\Phi$ be a field on $\mathcal{E}$ defined by
\eqref{Classical_Field-1}, and assume \eqref{Time-Reflection_field-1}.
Then
	\begin{itemize} 
	\item[(i)]  The characteristic functional S(f) defined in
\eqref{Characteristic_Functional on E} satisfies
$S(f)=\overline{S(-\vartheta \overline f)}$.  
	\item[(ii)]  The matrix
$M=M^*$ defined in \eqref{Matrix M} is hermitian.
	\item[(iii)]  Statements III.2.(i) and III.2.(ii) of Definition~\ref{Def:Neutral-OS Positivity} are equivalent.   
	\item[(iv)]  All three Statements  III.2.(i),   III.2.(ii), and   III.2.(iii) of  Definition~\ref{Def:Neutral-OS Positivity} are equivalent in case that   
\begin{equation}
        S(f)
        =e^{-\tfrac{1}{2}\lra{\bar f, Df}_{L_2}}\;,
        \quad \text{with} \quad
        D=\sigma\sigma^{\rm T}=D^{\rm T}\;.
    \label{Gaussian Characteristic Function}
\end{equation}
\end{itemize}
\end{proposition}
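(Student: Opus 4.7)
The plan is to build everything on top of the field transformation law in Proposition~\ref{Prop:TimeReflection Action}: rewritten for smeared fields with $(\vartheta f)(x):=f(\vartheta x)$, it reads $\Theta\Phi(f)\Theta=\Phi^{*}(\vartheta f)$, equivalently $\Phi^{*}(h)=\Theta\Phi(\vartheta h)\Theta$.

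For (i), I would complex-conjugate $S(f)=\langle\Omega_{0}^{\tt E},e^{i\Phi(f)}\Omega_{0}^{\tt E}\rangle_{\mathcal{E}}$, use $\Phi(f)^{*}=\Phi^{*}(\bar f)$ to bring the conjugation inside the exponential, and then rewrite $\Phi^{*}(\bar f)$ as $\Theta\Phi(\vartheta\bar f)\Theta$; since $\Theta^{*}=\Theta$ and $\Theta\Omega_{0}^{\tt E}=\Omega_{0}^{\tt E}$, the $\Theta$'s cancel and leave $\overline{S(f)}=S(-\vartheta\bar f)$. Part (ii) is then immediate:
\[
\overline{M_{j'j}}=\overline{S(f_{j}-\vartheta\bar f_{j'})}=S\bigl(-\vartheta(\bar f_{j}-\vartheta f_{j'})\bigr)=S(f_{j'}-\vartheta\bar f_{j})=M_{jj'}\,,
\]
using $\vartheta^{2}=I$. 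For (iii), I would evaluate the OS form directly on $A=\sum_{j}c_{j}\,e^{i\Phi(f_{j})}\Omega_{0}^{\tt E}\in\mathcal{E}_{+,0}$: passing $\Theta$ through the exponential converts $\Theta e^{i\Phi(f_{j'})}\Omega_{0}^{\tt E}$ into $e^{i\Phi^{*}(\vartheta f_{j'})}\Omega_{0}^{\tt E}$, and mutual commutativity of fields and their adjoints in \eqref{Field_Commutator} lets one combine $e^{-i\Phi^{*}(\bar f_{j})}e^{i\Phi^{*}(\vartheta f_{j'})}=e^{i\Phi^{*}(\vartheta f_{j'}-\bar f_{j})}$ inside the vacuum expectation. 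Undoing the identity $\Phi^{*}(h)=\Theta\Phi(\vartheta h)\Theta$ as in (i) identifies this expectation with $S(f_{j'}-\vartheta\bar f_{j})=M_{jj'}$, so $\langle A,\Theta A\rangle_{\mathcal{E}}=\sum_{j,j'}\bar c_{j}c_{j'}M_{jj'}$. Thus $0\leqslant\Theta$ on $\mathcal{E}_{+,0}$---and by continuity on its closure $\mathcal{E}_{+}$---is equivalent to positive semi-definiteness of every such $M$; the parallel argument on $\mathcal{E}_{-}$ handles the remaining case.

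For (iv), I specialize to $S(f)=\exp(-\tfrac{1}{2}\langle\bar f,Df\rangle_{L_{2}})$. Expanding $\langle\overline{f_{j'}-\vartheta\bar f_{j}},D(f_{j'}-\vartheta\bar f_{j})\rangle$ into diagonal and two cross terms, and using $D=D^{\rm T}$ to equate the cross terms, I reach the factorization
\[
M_{jj'}=\overline{S(f_{j})}\,S(f_{j'})\,\exp\!\bigl(\langle f_{j},\vartheta D f_{j'}\rangle_{L_{2}}\bigr)\,,
\]
where the identification of $e^{-\frac{1}{2}\langle\vartheta f_{j},D\vartheta\bar f_{j}\rangle}$ with $\overline{S(f_{j})}$ uses part (i). Since $S(f_{j})\neq 0$, the rescaling $\tilde c_{j}:=c_{j}S(f_{j})$ is a bijection, and it reduces positive semi-definiteness of $M$ to that of the elementwise exponential of $T_{jj'}:=\langle f_{j},\vartheta D f_{j'}\rangle_{L_{2}}$. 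If $\vartheta D\geqslant 0$ on $L_{2}(\mathbb{R}^{d}_{+})$ then $T$ is positive semi-definite for every finite $\{f_{j}\}\subset L_{2}(\mathbb{R}^{d}_{+})$, and the Schur product theorem applied termwise to $\exp(T)=\sum_{n\geqslant 0}\tfrac{1}{n!}T^{\circ n}$ delivers the same for $\exp(T)$. Conversely, testing at $N=2$ with $f_{1}=0$ and $f_{2}=f\in\mathcal{S}(\mathbb{R}^{d}_{+})$ collapses the $2\times 2$ principal minor of $\exp(T)$ to $e^{\langle f,\vartheta Df\rangle}-1\geqslant 0$, whence $\langle f,\vartheta Df\rangle\geqslant 0$, and density of $\mathcal{S}(\mathbb{R}^{d}_{+})$ in $L_{2}(\mathbb{R}^{d}_{+})$ closes the circle. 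The main obstacle is this factorization: the separation of $M_{jj'}$ into a ``boundary'' rank-one factor times $\exp(T)$ rests essentially on $D=D^{\rm T}$ and on careful bookkeeping of the anti-linear pairing $f\mapsto\vartheta\bar f$. Once in place, Schur's theorem handles one direction cleanly, while the $N=2$ test extracts just enough positivity at quadratic order to close the other.
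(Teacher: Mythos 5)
Your proposal is correct and follows essentially the same route as the paper: the identity $\langle A,\Theta A\rangle_{\mathcal E}=\sum_{j,j'}\overline{c_j}c_{j'}M_{jj'}$ for (i)--(iii), and for (iv) the same Gaussian factorization $M_{jj'}=\overline{S(f_j)}\,S(f_{j'})\,e^{\langle f_j,\vartheta Df_{j'}\rangle_{L_2}}$ together with the Schur-product (Hadamard exponential) argument. The only deviation is the converse step in (iv), where the paper extracts $0\leqslant\langle f,\vartheta Df\rangle_{L_2}$ by choosing $f_1=-i\lambda f$, $f_2=0$, $c_1=\lambda^{-1}=-c_2$ and letting $\lambda\to 0$, whereas you use the $2\times 2$ determinant test $e^{\langle f,\vartheta Df\rangle}-1\geqslant 0$; this works equally well, provided you note that $\langle f,\vartheta Df\rangle$ is real, which follows from the self-adjointness of $\vartheta D$ established in Proposition~\ref{Prop:TimeReflection Action} under the standing hypothesis \eqref{Time-Reflection_field-1}.
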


\begin{proof}[\bf Proof]
The equivalence of Statements III.2.(i) and III.2.(ii) follows from  the identity
\begin{equation}
        \lra{A,A}_{\rm OS}
        = \sum_{j,j'=1}^N
        \overline{c_j}\,c_{j'}\,S(f_{j'}-\vartheta\overline { f_{j}})
        =    \sum_{j,j'=1}^N
        \overline{c_j}\,c_{j'}\,M_{jj'}\;,
    \label{S-Form RP}
\end{equation}
which is a consequence of 
\begin{eqnarray*}
        \lra{A,A}_{\rm OS}
        &=& \sum_{j,j'=1}^N \overline{c_j}\,c_{j'}\,  
        \lra{e^{i\Phi(f_{j})}\,\Omega_{0}^{\tt E}, e^{i\Phi(f_{j'})}\,\Omega_{0}^{\tt E}}_\mathcal{E}      
        \nonumber \\
        &=& \sum_{j,j'=1}^N \overline{c_j}\,c_{j'}\,
            \lra{\Omega_0^{\tt E}
                    \,,e^{-i\Phi^*(\overline{f_j})}\,
                   \Theta\,e^{i\Phi(f_{j'}\,)}\Omega_0^{\tt E}}_\mathcal{E}\;.
\end{eqnarray*}
As $\Theta\Omega_0^{\tt E}=\Omega_0^{\tt E}$, one can substitute
$\Theta\,
e^{-i\Phi^*(\overline{f_j})}\,\Theta=e^{-i\Phi(\vartheta\overline{
f_j})}$ to give \eqref{S-Form RP}.

To verify that $S(f)=\overline{S(-\vartheta \overline f)}$ is valid, compute 
\begin{eqnarray*}
        \overline{S(- \vartheta \overline{f})}
        &=& \overline{\lrae{\Omega_0^{\tt E}}{e^{-i\Phi(\vartheta \overline f)}}}_\mathcal{E}
        = \lra{e^{-i\Phi(\vartheta \overline f)}\Omega_0^{\tt E}, \Omega_0^{\tt E}}_\mathcal{E}
        \nonumber \\
        &=& 
	\lra{\Omega_0^{\tt E}, e^{i\Phi^*(\vartheta f)}\Omega_0^{\tt E}}_\mathcal{E}
	=        \lra{\Omega_0^{\tt E}, \Theta e^{i\Phi^*(\vartheta f)}\Theta\Omega_0^{\tt E}}_\mathcal{E}
                \nonumber \\
        &=& 
	\lra{\Omega_0^{\tt E}, e^{i\Phi( f)}\Omega_0^{\tt E}}_\mathcal{E}
        = S(f)\;,
\end{eqnarray*}
Using this relationship one also sees that the matrix $M$ is hermitian, for  
\[
        M_{jj'}
        = S(f_{j'}- \vartheta\overline{ f_{j}})
        = \overline{S(-\vartheta \overline{f_{j'}}+  f_{j})}
        = \overline{M_{j'j}}\;.
\]

Finally consider the  Gaussian characteristic function \eqref{Gaussian Characteristic Function}.  Now we show the equivalence of 
Statements III.2.(ii) and  III.2.(iii). Take $A\in\mathcal{E}_{+,0}$ and choose 
$N=2$,  $f_1=-i\lambda f$, and $f_2=0$. Then for 
\[ c_1=\lambda^{-1}=-c_2 \; , \]
one can take the limit $\lambda\to0$ of $0\leqslant\lra{A,A}_{\mathcal{H}}$.  This is just 
\begin{eqnarray*}
       0
       & \leqslant &\lra{\Phi(f)\Omega_0^{\tt E}\,,\Theta\Phi(f)\Omega_0^{\tt E} }_\mathcal{E}
       = \lra{\Omega_0^{\tt E}\,,\Phi^*(\overline f)\,
                \Theta\Phi(f)\Omega_0^{\tt E} }_\mathcal{E}
      \nonumber \\
      & = &\lra{\Omega_0^{\tt E}\,,\Phi(\vartheta\overline  f)\,
                \Phi(f)\Omega_0^{\tt E} }_\mathcal{E}
       = \lra{f,\vartheta D f}_{L_2} \;.
\end{eqnarray*}
Hence, Statement III.2.(ii)  ensures $0\leqslant\vartheta D$ on $ L_2(\mathbb{R}^d_+)$.   

The converse is also true.  To see this, use 
	\[
        		\sum_{j,j'=1}^N \overline {c_j} \,c_{j'}\, M_{jj'}
		= \sum_{j,j'=1}^N \overline {c_j} \,c_{j'}\,
                S(f_{j'}-\overline {\vartheta f_{j}})
		= \sum_{j,j'=1}^N \overline {d_j} \,d_{j'}\, e^{\lra{f_{j}, \vartheta D f_{j'}}_{L_{2}}}\;,
	\]
where
	\[	
		d_{j}=c_{j}e^{-\tfrac{1}{2}\lra{\overline {f_{j}}, Df_{j}}_{L_{2}}}\;.
	\label{PositiveS-Form}
	\]
In fact, 
\begin{eqnarray}
        &&\sum_{j,j'=1}^N \overline {c_j} \,c_{j'}\,
                S(f_{j'}-\overline {\vartheta f_{j}})
        = 
	\sum_{j,j'=1}^N \overline {c_j} \,c_{j'}\,
                e^{-\tfrac{1}{2}
                    \lra{\overline{f_{j'}} - \vartheta f_j \,,
                    D( f_{j'} -\vartheta\overline{ f_j})}_{L_{2}}}\nonumber \\
        &&\qquad \qquad
        = \sum_{j,j'=1}^N
           { \lrp{ \overline {c_j} e^{-\tfrac{1}{2}
                    \lra{\vartheta f_j, D \vartheta\overline{f_j}}_{L_{2}}}}}
                \lrp{c_{j'}\,e^{-\tfrac{1}{2} \lra{\overline{f_{j'}}\,,
                        D  f_{j'}}_{L_{2}}}} \times
                        \nonumber \\
          &&  \qquad   \qquad\qquad\qquad         \times
                e^{\tfrac{1}{2} \lra{f_j \,, \vartheta Df_{j'}}_{L_{2}}
                + \tfrac{1}{2}\lra{\overline{f_{j'}}
                        \,, D \vartheta\overline{ f_j}}_{L_{2}}}\nonumber \\
        &&\qquad \qquad
        = \sum_{j,j'=1}^N \overline {d_j}\,d_{j'}\,
            e^{\tfrac{1}{2}\lra{{f_{j}}\, ,\,
              \vartheta(D+D^{\rm T})\,{f_{j'}}}_{L_{2}}}
              \nonumber \\
        &&\qquad \qquad
        = \sum_{j,j'=1}^N \overline {d_j}\,d_{j'}\,
            e^{\lra{f_j\, , \,\vartheta D f_{j'}}_{L_{2}}}
        \;.
    \label{RP-Gaussian Identity}
\end{eqnarray}
The second to last equality follows from using $\vartheta D^{*}\vartheta = D$ in 
    \[
        \lra{\vartheta f_j,D \vartheta\overline{ f_{j}}}_{L_{2}}
        = \overline{\lra{D\vartheta \overline{f_{j}}, \vartheta f_j}}_{L_{2}}
        = \overline{\lra{\vartheta\overline{f_{j}}, D^*\vartheta f_j}}_{L_{2}}
        = \overline{\lra{\overline{ f_{j}}, D f_j}}_{L_{2}}\;,
    \]
and from 
\begin{eqnarray*}
        \lra{\overline{f_{j'}}
                        \,, D\vartheta\overline{ f_j}}_{L_{2}}
        &=& \overline{\lra{D\vartheta\overline{f_j}
                        \,, \overline{f_{j'}} }}_{L_{2}}
        = \overline{\lra{\vartheta\overline{ f_j}
                        \,,  D^*\overline{f_{j'}} }}_{L_{2}}
        \nonumber \\
        &=&\lra{\vartheta f_j,D^{\rm T} f_{j'}}_{L_{2}}
        =\lra{f_j,\vartheta D^{\rm T} f_{j'}}_{L_{2}}\;.
\end{eqnarray*}
The last inequality in \eqref{RP-Gaussian Identity} then is a consequence of the symmetry of $D$.
    
Thus, Statement  III.2.(ii) on $\mathcal{E}_{+}$ follows from the positivity of the matrix 
$\mathcal{K}$
with entries 
\[ 
\mathcal{K}_{jj'} = e^{\lra{f_{j},\vartheta D f_{j'}}_{L_{2}}}\quad \text{ for $f_{j}\in L_2(\mathbb{R}^d_+)\, $}.
\]
 In fact, we now see that $I\leqslant  \mathcal{K}$. The assumption $0\leqslant  \vartheta D$ on
$L_{2}(\mathbb{R}^d_+)$ means that the matrix ${\mathbf k}$ with entries
${\mathbf k}_{jj'}=\lra{f_j, \vartheta D f_{j'}}_{L_{2}}$ has non-negative
eigenvalues for $f_j\in L_2(\mathbb{R}^d_+)$.  The same is true for
the matrix ${\mathbf k}^{\circ n}$ with entries ${\mathbf k}^{\circ n}_{jj'}=({\mathbf k}_{jj'})^n$.  In fact, the matrix~${\mathbf k}^{\circ n}$ equals~${\mathbf k}^{\otimes \,n}$ on the diagonal, and the
eigenvalues $\lambda_j(\mathcal{K})$ of the matrix 
$\mathcal{K}=\sum_{n=0}^\infty n!^{-1} {\mathbf k}^{\circ n}$ with entries
$\mathcal{K}_{jj'}=e^{{\mathbf k}_{jj'}}$ equal $1$ plus an eigenvalue of a
positive matrix. Consequently, $1\leqslant \lambda_j(\mathcal{K})$, as claimed.  

A function $f\in L_2(\mathbb{R}^d_-)$, if and only if $\vartheta f\in L_2(\mathbb{R}^d_+)$.  Thus, Statement~\ref{RP S(f)} on $\mathcal{E}_{-}$ is equivalent to  $0\leqslant\vartheta^{2} D \vartheta=D\vartheta$ on $L_2(\mathbb{R}^d_+)$.
\end{proof}
\goodbreak

\setcounter{equation}{0}
\section{Quantization\label{Sect:Quantization}}
If the time reflection  $\Theta$ is reflection positive (or doubly reflection positive), then the  form  \eqref{OS-Form-1} defines a pre-Hilbert space (or two pre-Hilbert spaces) $\mathcal{H}_{\pm, 0}$ whose elements are equivalence classes 
\[ \widehat A=\{A+n\}\; , \]
where $A,n\in\mathcal{E}_{+}$ or $A,n\in\mathcal{E}_{-}$ and where  $n$ is an element of the null space of the form~\eqref{OS-Form-1}.  In either case, the inner product on $\mathcal{H}_{\pm, 0}$ is given by the form \eqref{OS-Form-1}.  Let $\mathcal{H}_{\pm}$ denote the completions of the pre-Hilbert spaces $\mathcal{H}_{\pm, 0}$.

In case the characteristic functional $S(f)$ is both time-translation invariant and reflection positive, one obtains a positive Hamiltonian operator from the RP inner product.  This follows from the standard construction provided in \cite {Osterwalder-Schrader}, so we refer to this operator as the {\em OS-Hamiltonian}.  Let the time translation group $T(t)$ act on functions as 
\[
		(T(s)f)(t, \vec x)
		= f(t-s, \vec x)\;.
\]
If $S(T(s)f)=S(f)$ for all $s$, then $S(f)$ is time-translation invariant.  In this case, $T(t)$ acts as a unitary transformation on $\mathcal{E}$, and 
\[
		T(t) \colon \mathcal{E}_{\pm} \to \mathcal{E}_{\pm}\; \text{for } 0\leqslant  \pm t\;.
\]

\begin{proposition} [\bf OS-Hamiltonian] \label{Prop:OSHamiltonian} Let $^{\wedge}$ denote the canonical projection from $\mathcal{E}_{\pm}$ to $\mathcal{H}_{\pm}$ resulting from the reflection-positive OS form.  Then the maps 
\[
		R(t)_{\pm} = \widehat{ T(t)} \quad \text{acting on $\mathcal{H}_{\pm}$}
\]
are contraction semigroups with infinitesimal generators $H_{\pm} \, $, namely $R_{\pm}(t)=e^{-tH_{\pm}}$ with $0 \leqslant  H_{\pm} \, $.
\end{proposition}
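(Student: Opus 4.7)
The plan is to realise $R(t)_\pm$ concretely on representatives, verify the two non-automatic properties (well-definedness on equivalence classes and the contraction bound) simultaneously via a Schwarz-type iteration, and then read off self-adjointness of the generator from symmetry plus positivity of $R(t)$.

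First I will record the algebraic input. Since $\vartheta$ reverses the time coordinate, it intertwines time translations, so $\Theta T(t) \Theta = T(-t)$. Combined with $T(t)\mathcal{E}_\pm \subset \mathcal{E}_\pm$ for $\pm t\geqslant 0$ (given), this means that for $A\in \mathcal{E}_+$ and $t\geqslant 0$,
\begin{equation*}
   \langle T(t)A,\,\Theta\, T(t) A\rangle_\mathcal{E}
   \,=\,\langle A,\, T(-t)\Theta\, T(t)A\rangle_\mathcal{E}
   \,=\,\langle A,\, \Theta\, T(2t) A\rangle_\mathcal{E} \,,
\end{equation*}
i.e.\ $\|\widehat{T(t)A}\|_\mathcal{H}^2 = \langle \widehat A,\widehat{T(2t)A}\rangle_\mathcal{H}$. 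The Cauchy--Schwarz inequality on $\mathcal{H}_+$ then gives
\begin{equation*}
   \|\widehat{T(t)A}\|_\mathcal{H}^2
   \,\leqslant\, \|\widehat A\|_\mathcal{H}\,\|\widehat{T(2t)A}\|_\mathcal{H}\,.
\end{equation*}

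Next I would iterate this Schwarz estimate. Setting $b_n=\|\widehat{T(2^n t) A}\|_\mathcal{H}$ and applying the inequality to $b_n^2\leqslant \|\widehat A\|_\mathcal{H}\, b_{n+1}$ repeatedly yields
\begin{equation*}
   \|\widehat{T(t)A}\|_\mathcal{H}^2
   \,\leqslant\, \|\widehat A\|_\mathcal{H}^{\,2-2^{-n}}\, b_{n+1}^{\,2^{-n}} \,.
\end{equation*}
Because the canonical projection $\widehat{\,\cdot\,}$ satisfies $\|\widehat B\|_\mathcal{H}^2 = \langle B,\Theta B\rangle_\mathcal{E} \leqslant \|B\|_\mathcal{E}^2$ (as $\Theta$ is unitary), and $T(s)$ is isometric on $\mathcal{E}$, we have $b_{n+1}\leqslant \|A\|_\mathcal{E}$, which is a fixed finite quantity. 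Letting $n\to\infty$ kills the factor $b_{n+1}^{2^{-n}}$ and produces the contraction bound $\|\widehat{T(t)A}\|_\mathcal{H} \leqslant \|\widehat A\|_\mathcal{H}$. Applied to $A$ in the null space of the OS form, this shows $T(t)A$ is again null, so $R(t)_+\widehat A \mathrel{:=}\widehat{T(t)A}$ is a well-defined contraction on the dense subspace of equivalence classes and extends by continuity to all of $\mathcal{H}_+$. This iterative estimate is the main obstacle; everything else is bookkeeping.

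The semigroup property $R(t+s)_+ = R(t)_+R(s)_+$ descends directly from $T(t+s)=T(t)T(s)$ on $\mathcal{E}$, and strong continuity at $t=0^+$ follows from strong continuity of $T(t)$ on $\mathcal{E}$ composed with the contractive projection. It remains to see that $R(t)_+$ is self-adjoint, not just a contraction. A direct computation using $\Theta T(t)\Theta=T(-t)$ and $T(-t)^* = T(t)$ shows that for $A,B\in\mathcal{E}_{+,0}$,
\begin{equation*}
   \langle \widehat A, R(t)_+\widehat B\rangle_\mathcal{H}
   \,=\,\langle A,\Theta T(t) B\rangle_\mathcal{E}
   \,=\,\langle \Theta T(t) A, B\rangle_\mathcal{E}
   \,=\,\langle R(t)_+\widehat A,\widehat B\rangle_\mathcal{H}\,,
\end{equation*}
so $R(t)_+$ is symmetric; being bounded it is self-adjoint. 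A self-adjoint contraction semigroup has, by the spectral theorem (or Hille--Yosida), a positive self-adjoint generator $H_+\geqslant 0$ with $R(t)_+=e^{-tH_+}$. The argument for $R(t)_-$ on $\mathcal{H}_-$ is identical with $t\mapsto -t$.
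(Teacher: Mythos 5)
Your argument is correct and is precisely the standard Osterwalder--Schrader construction that the paper invokes by citing \cite{Osterwalder-Schrader}: the identity $\Theta T(t)\Theta=T(-t)$, the iterated Schwarz (multiple-reflection) bound giving the contraction property and well-definedness on equivalence classes, symmetry of $\widehat{T(t)}$, and the spectral theorem for the positive self-adjoint generator. No substantive difference from the paper's intended proof.
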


The proof of this proposition follows from the arguments given 
in~\cite{Osterwalder-Schrader}. 
In the Gaussian case, let~${\boldsymbol{h}}_{\pm}$ denote the restriction of $H_{\pm}$ to the one-particle subspace $\mathcal{H}_{1,\pm}\subset \mathcal{H}_{\pm}$.  If the functional $S(f)$ has  covariance~$D$, the reflection positivity condition on the one-particle space requires $0\leqslant  \vartheta D$ on~$L_{2}(\mathbb{R}^{d}_{\pm})$. For example, if $f\in L_{2}(\mathbb{R}^{d}_{+})$, then  
\[
		\lra{f, \vartheta Df}_{L_{2}}
		= \lra{F,F}_{\mathcal{H}_{1,+}}\;,
\]
where $F=\int_{0}^{\infty} e^{-t\boldsymbol{h}_{+}} f_{ t} \,dt$ and $f_{t}(\vec x)=f(t,\vec x)$.  Likewise, if $f\in L_{2}(\mathbb{R}^{d}_{-})$, then one has
\[
		\lra{f, \vartheta Df}_{L_{2}}
		= \lra{F,F}_{\mathcal{H}_{1,-}}\;
		\quad \text{for} \quad
		F=\int_{-\infty}^{0} e^{t\boldsymbol{h}_{-}} f_{ t} \,dt\;.
\]

\subsection{Spatial Reflection-Positivity for the Neutral Field}
In a fashion similar to time-reflection, one considers
spatial reflections. The spatial reflection through a plane
orthogonal to a given spatial vector $\vec n\in\mathbb{R}^{d-1}$ is 
$\pi_{\vec n} \colon (t,\vec x)\mapsto (t, \vec x -2 \lrp{\vec
n\cdot\vec x}\vec n)$. Let~$\pi_{\vec n}$ denote
the action that this reflection induces as a real,
self-adjoint, unitary on~$L_2(\mathbb{R}^d)$.

Let $\Pi_{\vec n}$ denote the push-forward of $\pi_{\vec
n}$ to a real, self-adjoint unitary on $\mathcal{E}$. Instead of
the positive-time subspace $L_2(\mathbb{R}^d_+)$ that arose in
the study of time reflection in \S\ref{Sect:TimeReflection}, 
use the subspace of functions $L_2(\mathbb{R}^d_{\vec
n\pm})$ supported on one side of the reflection hyperplane.
Let $\mathcal{E}_{\vec n\pm}\subset \mathcal{E}$ denote the closure of the subspace
spanned by the vectors 
\[ \left\{e^{i\Phi(f)}\Omega_0^{\tt E}\right\} \quad \text{with
$f\in\mathcal{S}(\mathbb{R}^d_{\vec n\pm})$.} 
\]
Spatial reflection $\Pi_{\vec
n}$  leaves $\Omega_0^{\tt E}$ invariant and maps $f(E, \vec
k)\in \mathcal{E}_1$ to $f(\pi_{\vec n} k)$. In
Fourier space 
\[ \pi_{\vec n}(E,\vec k)=(E,\vec k-2(\vec
k\cdot\vec n) \vec n)\] 
and~$\pi_{\vec n}^2 k=k$.
Spatial-reflection transforms the $\widetilde Q(k)$'s according to
$\Pi_{\vec n} \,\widetilde Q(k) \Pi_{\vec n}= \widetilde Q(\pi_{\vec n}k)$.
Following the proof of Proposition~\ref{Prop:TimeReflection Action},
we obtain:

\begin{proposition}[\bf Spatial Reflection of the Field]
\label{Prop:RP-Spatial Equivalence} Let the neutral field
$\Phi(x)$ be defined by~\eqref{Classical_Field-1} on
$\mathcal{E}$. Then the spatial-reflection $\Pi_{\vec n}$
transforms $\Phi$ according to
\begin{equation}
        \Pi_{\vec n} \,\Phi(x)\, \Pi_{\vec n} 
        = \Phi^{*} (\pi_{\vec n}\, x)\;,ƒ
    \label{Spatial-Reflection_field-1}
\end{equation}
if and only if
\begin{equation}
        \pi_{\vec n} \,\sigma^{\rm T}\, \pi_{\vec n} = \sigma^*\;.
    \label{SpatialReflectionSigma}
\end{equation}
If \eqref{SpatialReflectionSigma} holds, then
    $
        \pi_{\vec n} D
        = \sigma^* \pi_{\vec n} \sigma
        = D^* \pi_{\vec n} \,
    $,
\[
        \pi_{\vec n} D
        = \lrp{\pi_{\vec n} D}^*\;,
        \quad\text{and}\quad
         D\pi_{\vec n}
         = \lrp{D\pi_{\vec n}}^*\;
\]
are self-adjoint operators on $L_2(\mathbb{R}^d)$.
\end{proposition}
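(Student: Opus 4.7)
The plan is to follow the proof of Proposition~\ref{Prop:TimeReflection Action} step by step, replacing the time reflection $\vartheta$ on $\mathbb{R}^d$ by the spatial reflection $\pi_{\vec n}$ and the push-forward $\Theta$ by $\Pi_{\vec n}$. What makes the substitution go through verbatim are three structural facts: $\pi_{\vec n}^{2}=I$, $\pi_{\vec n}$ is an orthogonal transformation so $k\cdot\pi_{\vec n} x=\pi_{\vec n} k\cdot x$ and the measure $dk$ is invariant under $k\mapsto\pi_{\vec n} k$, and $\Pi_{\vec n}\widetilde Q(k)\Pi_{\vec n}=\widetilde Q(\pi_{\vec n} k)$ by push-forward from $\mathcal{E}_1$.

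First I substitute into \eqref{Field_Form} and move $\Pi_{\vec n}$ past $\widetilde Q$. Changing variables $k\mapsto\pi_{\vec n} k$ and using orthogonality to replace $k\cdot x$ by $k\cdot\pi_{\vec n} x$, I obtain
\[
   \Pi_{\vec n}\Phi(x)\Pi_{\vec n}
   = (2\pi)^{-d/2}\int\widetilde Q(k)\,\widetilde\sigma(\pi_{\vec n} k)\,e^{ik\cdot\pi_{\vec n} x}\,dk.
\]
Taking the adjoint twice exactly as in the time-reflection proof, and using $\widetilde Q(k)^{*}=\widetilde Q(-k)$, this becomes $\Phi^{*}(\pi_{\vec n} x)$ if and only if $\overline{\widetilde\sigma(-\pi_{\vec n} k)}=\widetilde\sigma(k)$. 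In configuration space this is $\pi_{\vec n}\overline{\sigma}\pi_{\vec n}=\sigma$, which, using $\pi_{\vec n}^{2}=I$, is equivalent to \eqref{SpatialReflectionSigma}, i.e., $\pi_{\vec n}\sigma^{\rm T}\pi_{\vec n}=\sigma^{*}$.

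For the self-adjointness statements I would use the factorization $D=\sigma\sigma^{\rm T}=\sigma^{\rm T}\sigma$ from \eqref{Neutral Covariance} together with \eqref{SpatialReflectionSigma} to compute
\[
   \pi_{\vec n} D = (\pi_{\vec n}\sigma^{\rm T}\pi_{\vec n})\,\pi_{\vec n}\sigma = \sigma^{*}\pi_{\vec n}\sigma,
   \qquad
   D\pi_{\vec n} = \sigma(\sigma^{\rm T}\pi_{\vec n}) = \sigma\pi_{\vec n}\sigma^{*},
\]
each manifestly self-adjoint since $\pi_{\vec n}^{*}=\pi_{\vec n}$. The identity $\pi_{\vec n} D=D^{*}\pi_{\vec n}$ then drops out by taking the adjoint of the first expression.

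There is no genuine obstacle; the argument is a transcription of the time-reflection proof. The only points requiring care are bookkeeping: checking that $\pi_{\vec n}$ (unlike a generic reflection acting factorwise on $(E,\vec k)$) passes from one slot of the Fourier phase to the other without introducing a sign, which is where orthogonality of $\pi_{\vec n}$ on $\mathbb{R}^{d}$ is used, and verifying that the adjoint manipulations preserve the domain $\mathcal{D}$ exactly as they did for $\Theta$ in \S\ref{Sect:TimeReflection}.
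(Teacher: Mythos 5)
Your proposal is correct and matches the paper's own treatment: the paper gives no separate proof, stating only that the result follows the proof of Proposition~\ref{Prop:TimeReflection Action}, which is exactly the substitution $\vartheta\to\pi_{\vec n}$, $\Theta\to\Pi_{\vec n}$ you carry out, together with the same factorization $D=\sigma\sigma^{\rm T}=\sigma^{\rm T}\sigma$ for the self-adjointness claims. The points you flag (orthogonality of $\pi_{\vec n}$, $\pi_{\vec n}^{2}=I$, and $\Pi_{\vec n}\widetilde Q(k)\Pi_{\vec n}=\widetilde Q(\pi_{\vec n}k)$) are precisely what makes the transcription work.
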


One can formulate spatial reflection positivity by
substituting $\pi_{\vec n}$ for~$\vartheta $ and $\Pi_{\vec
n}$ for~$\Theta$, and $\mathcal{E}_{\vec n\pm}$ for $\mathcal{E}_\pm$ in
Definition~\ref{Def:Neutral-OS Positivity}.  Spatial
reflection positivity gives a pre-inner product
\begin{equation}
		\lra{A,B}_{\mathcal{H}(\vec n)}
		= \lra{A,\Pi_{\vec n}\,B}_{\mathcal{E}}
\end{equation}
on $\mathcal{E}_{\vec n+}$, as well
as  new requirements on the transformation of the
field or on $D$. The proof of the following result is identical to
the proof of Proposition~\ref{Prop:RP-Time Equivalence}.

\begin{proposition}[\bf Spatial-Reflection Positivity]
If one replaces $\vartheta$, $\Theta$, $\mathbb{R}^d_\pm$, $\mathcal{E}_\pm$ in the
statement of Proposition~\ref{Prop:RP-Time Equivalence} by
$\pi_{\vec n}, \Pi_{\vec n}, \mathbb{R}^d_{\vec n\pm}, \mathcal{E}_{\vec
n\pm}$, respectively, then the proposition remains valid.
\end{proposition}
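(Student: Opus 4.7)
The plan is to observe that the proof of Proposition~\ref{Prop:RP-Time Equivalence} uses only a short list of abstract properties of the pair $(\vartheta,\Theta)$, all of which have exact analogues for $(\pi_{\vec n},\Pi_{\vec n})$. Specifically, the time-reflection proof relies on: (a) $\vartheta^2=I$ on $L_2(\mathbb{R}^d)$ and $\Theta^2=I$ on $\mathcal{E}$; (b) $\Theta\Omega_0^{\tt E}=\Omega_0^{\tt E}$; (c) the intertwining $\Theta\,\Phi(x)\,\Theta=\Phi^{*}(\vartheta x)$, equivalently $\vartheta\sigma^{\rm T}\vartheta=\sigma^{*}$ and hence $\vartheta D^{*}\vartheta=D$; and (d) the decomposition of $\mathbb{R}^d$ into complementary half-spaces interchanged by $\vartheta$. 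By the construction of $\Pi_{\vec n}$ in \S\ref{Sect:NeutralCoordinates} (applied with $\pi_{\vec n}$ in place of $\vartheta$) and by Proposition~\ref{Prop:RP-Spatial Equivalence}, every one of these properties holds verbatim with the symbols exchanged, with $\mathbb{R}^d_{\vec n+}$ and $\mathbb{R}^d_{\vec n-}$ in the role of $\mathbb{R}^d_{\pm}$.

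I would then repeat the four parts of Proposition~\ref{Prop:RP-Time Equivalence} in the same order. For (i), the identity $S(f)=\overline{S(-\pi_{\vec n}\bar f)}$ is obtained by the same chain of equalities, substituting $\Pi_{\vec n}\,e^{-i\Phi^{*}(\pi_{\vec n}f)}\,\Pi_{\vec n}=e^{-i\Phi(f)}$ at the step where $\Theta$ was used; the invariance of $\Omega_0^{\tt E}$ under $\Pi_{\vec n}$ lets $\Pi_{\vec n}$ be inserted freely. For (ii), hermiticity of the matrix $M_{jj'}=S(f_{j'}-\pi_{\vec n}\overline{f_j})$ then follows immediately. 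For (iii), the equivalence of $\Pi_{\vec n}$-positivity on $\mathcal{E}_{\vec n+,0}$ with positivity of $M$ is just the identity
\[
\lra{A,A}_{\mathcal{H}(\vec n)}=\sum_{j,j'}\overline{c_j}\,c_{j'}\,S(f_{j'}-\pi_{\vec n}\overline{f_j}),
\]
proved exactly as in \eqref{S-Form RP} using $\Pi_{\vec n}\Omega_0^{\tt E}=\Omega_0^{\tt E}$.

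For (iv), the Gaussian case, I would repeat the algebraic manipulation \eqref{RP-Gaussian Identity} literally: one takes the limit $N=2$, $f_1=-i\lambda f$, $f_2=0$, to extract the necessity of $0\le \pi_{\vec n}D$ on $L_2(\mathbb{R}^d_{\vec n+})$; conversely, one reduces $\sum\overline{c_j}c_{j'}S(f_{j'}-\pi_{\vec n}\overline{f_j})$ to the Schur/Hadamard exponential $\sum\overline{d_j}d_{j'}\exp\lra{f_j,\pi_{\vec n}Df_{j'}}_{L_2}$ by using the symmetry $D=D^{\rm T}$ and the spatial identity $\pi_{\vec n}D^{*}\pi_{\vec n}=D$ supplied by Proposition~\ref{Prop:RP-Spatial Equivalence} in place of $\vartheta D^{*}\vartheta=D$. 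Then the positivity of the exponentiated matrix follows from the Schur product theorem, exactly as before, yielding $I\le\mathcal{K}$.

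There is no real obstacle; the only point requiring attention is confirming that the single input into the algebra that is specific to the choice of reflection, namely $\pi_{\vec n}D^{*}\pi_{\vec n}=D$, is available. This is precisely the content of Proposition~\ref{Prop:RP-Spatial Equivalence}, so the substitution is legitimate and the entire proof of Proposition~\ref{Prop:RP-Time Equivalence} transfers verbatim. The statement about $\mathcal{E}_{\vec n-}$ follows by replacing $\pi_{\vec n}$ by $\pi_{\vec n}^{2}D\pi_{\vec n}=D\pi_{\vec n}$, using $\pi_{\vec n}^2=I$, just as in the final line of the time-reflection proof.
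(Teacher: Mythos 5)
Your proposal is correct and follows essentially the same route as the paper, which simply notes that the proof of Proposition~\ref{Prop:RP-Time Equivalence} carries over verbatim under the substitution $\vartheta\to\pi_{\vec n}$, $\Theta\to\Pi_{\vec n}$, $\mathbb{R}^d_\pm\to\mathbb{R}^d_{\vec n\pm}$, $\mathcal{E}_\pm\to\mathcal{E}_{\vec n\pm}$, with Proposition~\ref{Prop:RP-Spatial Equivalence} furnishing the reflection-specific input $\pi_{\vec n}\sigma^{\rm T}\pi_{\vec n}=\sigma^*$ (hence $\pi_{\vec n}D^*\pi_{\vec n}=D$). Your spelled-out checklist of the abstract properties used is exactly the justification the paper leaves implicit.
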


\setcounter{equation}{0}
\section{Charged  Fields
\label{Sect:Charged Fields}}
One defines charged fields $\Phi_\pm(x)$ to replace the usual hermitian-conjugate fields $\Phi(x)$ and $\Phi^{*} (x)$.\footnote{Note that we use ``$\pm$'' here to label charges, while in earlier sections we use this notation to label positive and negative time subspaces, \emph{etc.}  We hope that this causes no confusion.}  These fields are linear in the complex coordinates $\widetilde Q_\pm(k)$ and the kernels $\sigma_{\pm}$, whose properties we elaborate on below. Similar to \eqref{Classical_Field-1}, take the charged fields to be
\begin{equation}
        \Phi_\pm(x)
        = \lrp{2\pi}^{-d/2}
            \int \widetilde Q_\pm(k)\, \widetilde \sigma_\pm(k)
             \,e^{ik\cdot x}\,dk\;.
    \label{Classical_Charged Field-1}
\end{equation}

While one might regard the charged coordinates $\widetilde Q_{\pm}$, or their Fourier transforms $Q_{\pm}$, as linear
combinations of two independent sets of neutral coordinates $ Q_{\pm}=\frac{1}{\sqrt{2}}\lrp{Q_{1}\pm i Q_{2}}$, here we do not require this. We take independent creation and annihilation operator-valued distributions $A^{*}_\pm(k)$ and $A_\pm(k)$, which act on a 
Fock-Hilbert space ${\boldsymbol {\mathcal E}}$ and satisfy the relations
    \[
      [ A_\pm(k) , A_{\pm'}^{*} (k') ] =
       \delta_{\pm \pm'}\,\delta(k-k')\;,
       \quad \text{and} \quad
       [A_\pm(k) , A_{\pm'}(k')]=0\;.
    \]
One can decompose the  one-particle space into a direct sum 
\[ 
	{\boldsymbol {\mathcal E}}_{1} = \mathcal{E}_{1}^{+} \oplus \mathcal{E}_{1}^{-} \; .  
\]
Correspondingly, the Fock space ${\boldsymbol {\mathcal E}}$ decomposes into the tensor product\footnote{The spaces ${{\boldsymbol {\mathcal E}}}^{\pm}$  labelled with a superscript denote  spaces where each particle has positive or negative charge.  The spaces   ${{\boldsymbol {\mathcal E}}}_{\pm}$ are subspaces of vectors at  positive or negative times.  However, for the fields themselves, and  some other associated operators, we retain the notation used elsewhere in this section to label charges by subscripts.}
 ${\boldsymbol {\mathcal E}}={\boldsymbol {\mathcal E}}^{+}\otimes{\boldsymbol {\mathcal E}}^{-}$.

One defines the complex coordinates as 
\[
        \widetilde Q_+(k)
        = A^*_+(k) + A_-(-k)\;,
 \]
and         
\begin{equation}
        \widetilde Q_-(k)
        = A^*_-(k) + A_+(-k)\;,
        \quad \text{so} \quad
        \widetilde Q_{-}(k)
        = \widetilde Q_+^{*} (-k)\;.
    \label{Chraged Coordinates Q}
\end{equation}
These coordinates all mutually commute,
\[
        [ \widetilde Q _\pm(k) , \widetilde Q _{\pm'}(k') ]
        =0
        = [ \widetilde Q _\pm(k) , \widetilde Q _{\pm'}^{*} (k')] 
        \;.
\]
As a consequence, the fields and their
adjoints commute 
\[
        [ \Phi_\pm(x) , \Phi_{\pm'}(x')]
        = [ \Phi_{\pm}(x) , \Phi_{\pm'}^*(x') ]=0\;.
\]
The characteristic function of the charged field depends on two variables  
\[
        S(f_{+},f_{-})
        = \lrae{{\boldsymbol \Omega}_0^{\tt E}}{e^{i\Phi_+(f_{+})\,+\,i\Phi_-(f_{-})}\,}_{\boldsymbol {\mathcal E} }\;.
 \]

\subsection{Twist Symmetry of the Charged Field}
Define two number operators
    \[
        N_\pm = \int A_\pm^{*} (k) A_\pm(k)\,dk\;.
    \]
In terms of these, the total number operator is  $N=N_{+}+N_{-}$.  The charge (or vortex number) is
$F=N_+-N_-$.  The vortex number $F$ implements the twist transformation on the field:  
    \[
        e^{i\alpha F}\widetilde Q _\pm(k) e^{-i\alpha F}
        = e^{\pm i\alpha}\,\widetilde Q _\pm(k)\;, \qquad \alpha\in\mathbb{C} \; .
    \]
As a consequence,
\[
        e^{i\alpha F}\Phi_\pm(x) e^{-i\alpha F}
        = e^{\pm i\alpha} \Phi_\pm(x)\;.
\]
Moreover, the zero-particle vector is twist invariant, $e^{i\alpha F}\Omega_{0}^{\tt E}=\Omega_{0}^{\tt E}$.
Unitarity of $e^{i\alpha F}$ then ensures the vanishing of the diagonal  expectations
  \[
        \lrae{{\boldsymbol \Omega}_0^{\tt E}}{\Phi_+(x) \,\Phi_+(x')\,}_{{\boldsymbol {\mathcal E}}}
        = \lrae{{\boldsymbol \Omega}_0^{\tt E}}{\Phi_-(x) \,\Phi_-(x')\,}_{{\boldsymbol {\mathcal E}}}
        =0\;.
    \]

The remaining two-point functions do not vanish. Let 
\[
        D(x-x')
        = (2\pi)^{-d}\,
        \int \widetilde
        \sigma_+(k)\,\widetilde
        \sigma_-(-k)\,e^{ik\cdot(x-x')}dk\;,
\]
be the kernel of the operator
\begin{equation}
        D = \sigma_+\sigma_-^{\rm T}
        = \sigma_-^{\rm T}\sigma_+
        \quad \text{with transpose }\quad
        D^{\rm T}
        =\sigma_-\sigma_+^{\rm T}
         =\sigma_+^{\rm T}\sigma_-\;.
    \label{Charged Covariance}
\end{equation}
Then  
\[
        \lrae{{\boldsymbol \Omega}_0^{\tt E}}{\Phi_+(x) \,\Phi_-(x')\,}_{{\boldsymbol {\mathcal E}}}
        = D(x-x')
\]
and
\[
        \lrae{{\boldsymbol \Omega}_0^{\tt E}}{\Phi_-(x) \,\Phi_+(x')\,}_{{\boldsymbol {\mathcal E}}}
        = D^{\rm T}(x-x')\;.
    \label{-Expectation}
\]
As
$\sigma_+, \sigma_-^{\rm T}$ are both translation invariant,
they commute. Using these relations, the characteristic function of the charged field is
\begin{eqnarray}
        S(f_+,f_-)
        &=& \sum_{n=0}^\infty \frac{(-1)^n}{n!^2}\,
                \lrae{{\boldsymbol \Omega}_0^{\tt E}}{\Phi_+(f_+)^n\,\Phi_-(f_-)^{n}\,}_{{\boldsymbol {\mathcal E}}}
       \nonumber \\
       & = & e^{-\lra{\overline f_+, Df_-}_{L_2}}\;.
    \label{Characteristic_Functional_Charged}
\end{eqnarray}
Finally, note that
\begin{equation}
        D=D^*\;,
        \quad \text{either if} \quad \sigma_\pm = (\sigma_\pm)^* \quad \text{or if} \quad
        \sigma_\pm =\overline{\sigma_\mp}\;.
    \label{D_Self-Adjoint}
\end{equation}
Also
\[
        D=D^{\rm T}\;,
        \text{either if}\ \sigma_+ = \sigma_- \;\ \ \text{or if}
        \sigma_\pm =\sigma_\pm^{\rm T}\;.
\]
Thus there are in principle four ways that
$D=D^*=D^{\rm T}=\overline D$.  They are: (i) $\sigma_+=\sigma_-=\sigma_+^*=\sigma_-^*$, or (ii) $\sigma_+=\sigma_+^{\rm T}=\overline{ \sigma_+}$
along with  $\sigma_-=\sigma_-^{\rm T}=\overline {\sigma_-}$, or (iii)
$\sigma_+=\sigma_-=\overline{\sigma_+}$, or (iv) $\sigma_+=\sigma_-=\sigma_+^{\rm T} $.

\subsection{Matrix Notation for the Charged Field}
We combine the two components $\Phi_{\pm}$ of the charged field into a vector 
	\[
		{\boldsymbol \Phi}= \begin{pmatrix}
		\Phi_{+}\\
		\Phi_{-}
		\end{pmatrix}
		\text{that pairs with test functions }
		{\boldsymbol f} = 
		\begin{pmatrix}
		f_{+}\\
		f_{-}\end{pmatrix}\;,
	\] 
to yield a field acting on ${\boldsymbol {\mathcal E}}$, 
    \[
        {\boldsymbol \Phi}({\boldsymbol f}) = \sum_{\alpha=\pm}
        \Phi_\alpha(f_\alpha)\;.
    \]
Likewise, combine the various two-point functions of the
charged field into a $2\times 2$ matrix of operators~$\boldsymbol{D}$ with entries indexed by $\alpha=\pm$, each acting on $L_2(\mathbb{R}^d)$. Let $\boldsymbol{L}_2=L_{2}(\mathbb{R}^{d})\oplus L_{2}(\mathbb{R}^{d})$, and let~$\overline {\boldsymbol f}$ denote complex conjugation of ${\boldsymbol f}$.  Then, with $D$ defined by \eqref{Charged Covariance}, let 
    \[
        \boldsymbol{D}
        = \bm{0}{D}{D^{\rm T}}{0}\;.
    \]
One observes that $\boldsymbol{D}$ is symmetric,  $\boldsymbol{D}=\boldsymbol{D}^{\rm T}$.  The two-point function for the field ${\boldsymbol \Phi}$  is
    \[
        \lrae{{\boldsymbol \Omega}_0^{\tt E}}{{\boldsymbol \Phi}({\boldsymbol f}) {\boldsymbol \Phi}(\boldsymbol{g})}_{{\boldsymbol {\mathcal E}}}
        = \lra{\overline{\boldsymbol f}, \boldsymbol{D}\, \boldsymbol{g} }_{\boldsymbol{L}_2}
        = \lrae{{\boldsymbol \Omega}_0^{\tt E}}{{\boldsymbol \Phi}(\boldsymbol{g}){\boldsymbol \Phi}({\boldsymbol f})}_{{\boldsymbol {\mathcal E}}}\;.
    \]
The second identity arises from the commutativity of the field.  Equivalent to commutativity is the relation
    \[
        \lra{\overline{\boldsymbol{g}}, \boldsymbol{D}\,{\boldsymbol f}}_{\boldsymbol{L}_2}
        = \lra{\boldsymbol{D}^*\overline{\boldsymbol{g}} , \,{\boldsymbol f}}_{\boldsymbol{L}_2}
        = \lra{\overline{\boldsymbol{D}\boldsymbol{g}}, \,{\boldsymbol f}}_{\boldsymbol{L}_2}
        = \overline{\lra{\,{\boldsymbol f},\overline{\boldsymbol{D}\boldsymbol{g}}}}_{\boldsymbol{L}_2}
        = \lra{\overline{\boldsymbol f}, \boldsymbol{D}\, \boldsymbol{g}}_{\boldsymbol{L}_2}\;.
    \label{Conjugate_InnerProductL2}
    \]
Note also that 
\[
		\lra{\overline {\boldsymbol f} , \boldsymbol{D}\, \boldsymbol{g} }_{{\boldsymbol{L}_2}}
		= \lra{\overline {f_{+}}, D g_{-}}_{L_{2}}
			+ \lra{\overline {g_{+}}, D f_{-}}_{L_{2}}\;,
\]
so 
\begin{equation}
		\lra{\overline {\boldsymbol f} , \boldsymbol{D}\, {\boldsymbol f}}_{{\boldsymbol{L}_2}}
		= 2\lra{\overline {f_{+}}, D f_{-}}_{L_{2}}
		\;.
	\label{Two-Point-Charged-Matrix}
\end{equation}
In fact,    
\[
        \lra{\overline{f_-}, D^{\rm T} g_+}_{L_{2}}
        = \overline{\lra{ D^{\rm T} g_+, \overline{f_-}}}_{L_{2}}
        = \lra{ D^{*}\overline {g_+},  f_-}_{L_{2}}
        = \lra{ \overline{g_+}, D f_-}_{L_{2}}\;.
\]

\begin{proposition}
The characteristic function $S(f_+,f_-)$ of
\eqref{Characteristic_Functional_Charged} has  the standard form of a Gaussian,
\begin{eqnarray}
        S(\mathbf f)
        &=&\lrae{{\boldsymbol \Omega}_0^{\tt E}}{e^{i\mathbf{\Phi(f)}}}_{{\boldsymbol {\mathcal E}}}
        = e^{-\frac{1}{2}\lra{\overline {\boldsymbol{f}}, \boldsymbol{D f}}_{{\boldsymbol{L}_2}}}
        \nonumber
       \\
       &=& S(f_+,f_-)
        = e^{-\lra{\overline{f_+},Df_-}_{L_2}}\;.
    \label{Characteristic_Functional on E-2}
\end{eqnarray}
\end{proposition}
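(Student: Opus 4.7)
The proof is essentially a Gaussian Wick-type computation combined with the matrix identity \eqref{Two-Point-Charged-Matrix}. I would organize it in four steps.

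First, I would expand the exponential. Since $\Phi_+(f_+)$ and $\Phi_-(f_-)$ commute as operators on the common invariant domain (both are linear in mutually commuting $\widetilde{Q}_\pm(k)$), one can write
\begin{equation*}
    e^{i\boldsymbol{\Phi}(\mathbf{f})} = e^{i\Phi_+(f_+)}\,e^{i\Phi_-(f_-)} = \sum_{m,n=0}^{\infty}\frac{i^{m+n}}{m!\,n!}\,\Phi_+(f_+)^m\,\Phi_-(f_-)^n\;,
\end{equation*}
with convergence of the vacuum expectation of the series following from the Gaussian estimates analogous to those in the proof of Proposition \ref{Prop:Normal Fields}.

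Second, I would invoke the twist symmetry $e^{i\alpha F}$ established just above, together with the invariance $e^{i\alpha F}{\boldsymbol{\Omega}}_0^{\tt E}={\boldsymbol{\Omega}}_0^{\tt E}$, to kill the off-diagonal terms: the vacuum expectation of $\Phi_+(f_+)^m\Phi_-(f_-)^n$ picks up a factor $e^{i\alpha(m-n)}$, and unitarity of $e^{i\alpha F}$ forces it to vanish unless $m=n$. This recovers the middle equality of \eqref{Characteristic_Functional_Charged} and isolates the diagonal sum $\sum_n \tfrac{(-1)^n}{(n!)^2}\lrae{{\boldsymbol{\Omega}}_0^{\tt E}}{\Phi_+(f_+)^n\Phi_-(f_-)^n}_{\boldsymbol{\mathcal{E}}}$.

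Third comes the combinatorial heart: evaluate the vacuum $2n$-point function. A Gaussian recursion in the spirit of \eqref{S-Recursion} applies, but with the crucial fact (already noted from twist invariance) that the only non-vanishing Wick contractions are the mixed ones, $\langle {\boldsymbol{\Omega}}_0^{\tt E},\Phi_+(f_+)\Phi_-(f_-){\boldsymbol{\Omega}}_0^{\tt E}\rangle = \lra{\overline{f_+},Df_-}_{L_2}$. Thus only perfect matchings pairing each of the $n$ copies of $\Phi_+(f_+)$ with one of the $n$ copies of $\Phi_-(f_-)$ survive, and there are exactly $n!$ such matchings. This yields
\begin{equation*}
    \lrae{{\boldsymbol{\Omega}}_0^{\tt E}}{\Phi_+(f_+)^n\,\Phi_-(f_-)^n}_{\boldsymbol{\mathcal{E}}} = n!\,\lra{\overline{f_+},Df_-}_{L_2}^{\,n}\;.
\end{equation*}
The recursion itself is routine once one observes that every $\Phi_+$ must contract with some $\Phi_-$; this is the step I would expect a careful reader to ask about, but it is forced by the charge selection rule rather than by any Gaussian-specific input.

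Finally, I would sum the series:
\begin{equation*}
    S(\mathbf{f}) = \sum_{n=0}^{\infty}\frac{(-1)^n}{n!}\lra{\overline{f_+},Df_-}_{L_2}^{\,n} = e^{-\lra{\overline{f_+},Df_-}_{L_2}}\;,
\end{equation*}
which is the last equality in \eqref{Characteristic_Functional on E-2}. To obtain the matrix form, I would substitute \eqref{Two-Point-Charged-Matrix}, $\lra{\overline{\boldsymbol{f}},\boldsymbol{Df}}_{\boldsymbol{L}_2}=2\lra{\overline{f_+},Df_-}_{L_2}$, giving $e^{-\tfrac{1}{2}\lra{\overline{\boldsymbol{f}},\boldsymbol{Df}}_{\boldsymbol{L}_2}}$ as claimed. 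The main (mild) obstacle is simply a clean bookkeeping argument for the $n!$ matchings; everything else is an immediate consequence of commutativity of the fields, twist invariance of the vacuum, and the already-tabulated two-point functions.
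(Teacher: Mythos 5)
Your proof is correct, and it reaches the result by a slightly different route than the paper. The paper treats the combined vector field directly: it expands $e^{i{\boldsymbol \Phi}({\boldsymbol f})}$ in powers of the single object ${\boldsymbol \Phi}({\boldsymbol f})=\Phi_+(f_+)+\Phi_-(f_-)$, invokes the Gaussian recursion relation (the matrix analogue of \eqref{S-Recursion}) together with a polarization argument to conclude that the odd moments vanish and the even moments are $(2k-1)!!\,\lra{\overline{\boldsymbol f},\boldsymbol{D}{\boldsymbol f}}_{\boldsymbol{L}_2}^{\,k}$, which immediately gives the matrix Gaussian form $e^{-\frac12\lra{\overline{\boldsymbol f},\boldsymbol{D}{\boldsymbol f}}_{\boldsymbol{L}_2}}$; the identification with $S(f_+,f_-)$ then follows, exactly as in your last step, from \eqref{Two-Point-Charged-Matrix}. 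You instead factor the exponential by charge components, use the twist (charge-selection) rule to kill the $m\neq n$ terms, and count the $n!$ mixed Wick matchings to get $n!\,\lra{\overline{f_+},Df_-}_{L_2}^{\,n}$ before resumming. The two computations are equivalent Gaussian bookkeeping (indeed $(2k-1)!!\cdot 2^k/(2k)!=1/k!$ reconciles the two counts), but your version has the small advantage of explicitly justifying the diagonal form of the series in \eqref{Characteristic_Functional_Charged}, which the paper asserts by appeal to the earlier twist argument, while the paper's version avoids any matching count by reducing everything to the standard $(2k-1)!!$ moment formula for a single Gaussian field. Your appeal to estimates in the spirit of Proposition \ref{Prop:Normal Fields} for convergence is consistent with the paper's standing assumptions, so no gap there.
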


\begin{proof}[\bf Proof]
The product of $n \in {\mathbb N}$ fields with test functions ${\boldsymbol f}^{(j)}$
with components $f_\alpha^{(j)}$, $j=1, \ldots, n$, is
\begin{equation}
    \label{n-Field Monomial}
        {\boldsymbol \Phi}({\boldsymbol f}^{(1)})\cdots{\boldsymbol \Phi}({\boldsymbol f}^{(n)})
        = \sum_{\alpha_1,\ldots,\alpha_n=\pm}
        \Phi_{\alpha_1}(f_{\alpha_1}^{(1)})\cdots
                \Phi_{\alpha_n}(f_{\alpha_n}^{(n)})\;.
\end{equation}
As the fields are linear in creation and annihilation
operators, the zero-particle expectation of such a product
satisfies the Gaussian recursion relation
    \begin{multline}
        \lrae{{\boldsymbol \Omega}_0^{\tt E}}{{\boldsymbol \Phi}({\boldsymbol f}^{(1)})\cdots{\boldsymbol \Phi}({\boldsymbol f}^{(n)})\,}_{{\boldsymbol {\mathcal E}}}
        = \sum_{j=2}^n  \lra{\overline{{\boldsymbol f}^{(1)}},
        \boldsymbol{D}\,{\boldsymbol f}^{(j)}}_{{\boldsymbol{L}_2}} \times
        \nonumber \\
        \times
        \lrae{{\boldsymbol \Omega}_0^{\tt E}}{{\boldsymbol \Phi}({\boldsymbol f}^{(2)})\cdots
            \not\hskip-5pt{\boldsymbol \Phi}(\not\hskip-3pt{\boldsymbol f}^{(j)})
            \cdots{\boldsymbol \Phi}({\boldsymbol f}^{(n)})\,}_{{\boldsymbol {\mathcal E}}}\;.
    \end{multline}
Here, $\not\hskip-5pt{\boldsymbol \Phi}(\not\hskip-3pt{\boldsymbol f}^{(j)})$
indicates that one omits the term with index  $j^{\rm th}$
from the product.  Moreover,  the expression \eqref{n-Field Monomial}
is a multi-linear, symmetric function of the
${\boldsymbol f}^{(j)}$'s, so it is determined uniquely---using a
polarization identity---as a linear combination of powers
${\boldsymbol \Phi}(\boldsymbol{g})^n$, where $\boldsymbol{g}$ is one of the $2^n$
functions $\pm {\boldsymbol f}^{(1)} \pm \cdots \pm {\boldsymbol f}^{(n)}$.

Hence, the expectations
\[
\lrae{{\boldsymbol \Omega}_0^{\tt E}}{{\boldsymbol \Phi}({\boldsymbol f}^{(1)})\cdots{\boldsymbol \Phi}({\boldsymbol f}^{(n)})\,}_{{\boldsymbol {\mathcal E}}}\]
are determined uniquely by the characteristic function
$S({\boldsymbol f})$. From the recursion relation, we infer that
\[
        \lrae{{\boldsymbol \Omega}_0^{\tt E}}{{\boldsymbol \Phi}({\boldsymbol f})^n}_{{\boldsymbol {\mathcal E}}}
        = \begin{cases} 
            0\;,&\text{if $n$ is odd}\\
            (2k-1)!!\,\lra{\overline{{\boldsymbol f}},
        \boldsymbol{D}\,{\boldsymbol f}}_{{\boldsymbol{L}_2}}^k\;,&\text{if
        $n=2k$}\\
        \end{cases} \;,
\]
so $S({\boldsymbol f})=e^{-\tfrac{1}{2}\lra{\overline {\boldsymbol f}, \boldsymbol{D} {\boldsymbol f}}_{{\boldsymbol{L}_2}}}$.  Also $S({\boldsymbol f})=S(f_{+}, f_{-})$ as a consequence of the second identity in  \eqref{Two-Point-Charged-Matrix}. 
\end{proof}


\subsection{Charge Conjugation}
We define {\em charge conjugation} as a unitary transformation ${\bf U}_{c}$ on ${\boldsymbol {\mathcal E}}$.   We substitute $\Phi_{+}$ for the ordinary charged field $\varphi=\frac{1}{\sqrt2}\lrp{\varphi_{1}+i\varphi_{2}}$ representing positive charge  and~$\Phi_{-}$ for its hermitian conjugate $\varphi^{*}=\frac{1}{\sqrt2}\lrp{\varphi_{1}-i\varphi_{2}}$ representing negative charge. Thus, hermitian conjugation reverses the charge for the ordinary field, while charge conjugation of our classical field is determined by the action of the charge-conjugation matrix 
\begin{equation}
        \mathcal{C}
        =
        \bm{0}{1}{1}{0}=\mathcal{C}^*=\mathcal{C}^{\rm T}=\overline{\mathcal{C}}=\mathcal{C}^{-1}\;.
        \label{Charge_Conjugation}
\end{equation}
Note that $\mathcal{C}$ is a $2\times 2$ matrix that acts on the components of the field, \emph{i.e.}, 
	\[
		(\mathcal{C}{\boldsymbol \Phi})(x)
		= \begin{pmatrix}
		\Phi_{-}(x)\\
		\Phi_{+}(x) 
		\end{pmatrix}\;.
	\]
The unitary charge-conjugation operator ${\bf U}_{c} $ on
 ${\boldsymbol {\mathcal E}}$  is defined setting by 	
 	\[
		{\bf U}_{c} \,{\boldsymbol \Phi}({\boldsymbol f}^{(1)})\cdots{\boldsymbol \Phi}({\boldsymbol f}^{(n)})\, \boldsymbol\Omega_{0}^{\tt E} 
		=( \mathcal{C}{\boldsymbol \Phi})({\boldsymbol f}^{(1)})\cdots( \mathcal{C}{\boldsymbol \Phi})({\boldsymbol f}^{(n)})\, \boldsymbol\Omega_{0}^{\tt E} \;,
	\]
and ${\bf U}_{c}\, \boldsymbol\Omega_{0}^{\tt E} =\boldsymbol\Omega_{0}^{\tt E}$.

\subsection{Complex Conjugation}
Next we consider complex conjugation of the classical field, defined by
	\[
		\overline{\mathbf\Phi}(x)
		= \begin{pmatrix}
		\Phi_{+}^{*}(x)\\
		\Phi_{-}^{*}(x)
		\end{pmatrix}
		=\lrp{\mathbf\Phi^*}^{\rm T}(x)\;.
	\]
As the charge conjugation is connected with hermitian conjugation, it is natural that the existence of a positive measure on the fields entails that the expectation $\lrae{\Omega_{0}^{\tt E}}{\mathcal{C} {\boldsymbol \Phi}(x) \,{\boldsymbol \Phi}(x')\,}_{{\boldsymbol {\mathcal E}}}=\lrp{\mathcal{C}\boldsymbol{D}}(x,x')$  is the kernel of a positive transformation
\[
		\mathcal{C} \boldsymbol{D}
		= \bm{ D^{{\rm T}}}{0}{0}{ D} \text{on ${\boldsymbol{L}_2} \, $. }
\]
This will be the case, if and only if $D$ itself is positive on $L_{2}$. Thus, a positive measure is associated with the condition
\[
		0
		\leqslant 
		\sum_{j,j'=1}^{N} 
			c_{j} \overline{c_{j}} 
				S({\boldsymbol f}_{j}-\mathcal{C}\overline{{\boldsymbol f}_{j'}})\;,
\]
for any choice of ${\boldsymbol f}_{j}\in {\boldsymbol{L}_2}$ and $c_{j}\in\mathbb{C}$.

\subsection{Time Reflection}
Let $\boldsymbol{\Theta}$ denote the unitary time reflection on~${{\boldsymbol {\mathcal E}}}$, namely 
	\[
		\boldsymbol{\Theta} = \bm{\Theta}{0}{0}{\Theta}\;.
	\]
In order to recover the usual properties of the charged fields, we expect that they should transform under time inversion as 
\[
        		\boldsymbol{\Theta}\,{\boldsymbol \Phi}(x)\,\boldsymbol{\Theta}
        		= \mathcal{C} \,\overline{{\boldsymbol \Phi}}(\vartheta x)\;,
	\label{Time-Reflection_Chargedfield-1}
\]
or in components  
\begin{equation}
	\Theta \,\Phi_\pm(x)\, \Theta
		= \Phi_\mp^{*} (\vartheta x) \; . 
	\label{Time-Reflection_Chargedfield-1}
\end{equation}
In fact, we have the following criterion.

\begin{proposition}[\bf Time Reflection]\label{Prop:ChargedFieldTimeReflection}
The field transforms according to \eqref{Time-Reflection_Chargedfield-1} if and only if 
\begin{equation}
        \vartheta \sigma_\pm^{\rm T} \vartheta =
        \sigma_\mp^*\;.
    \label{Time-Reflection_Chargedfield-2}
\end{equation}
In this case, 
\begin{equation}
		(\vartheta\mathcal{C})\,\boldsymbol{D}\lrp{\vartheta\mathcal{C}}
        		=\boldsymbol{D}^*
        		= \overline {\boldsymbol{D}}\;,
    \label{Time-Reflection_Chargedfield-3}
\end{equation}
and the operator $\vartheta \mathcal{C}\boldsymbol{D}$ equals
\begin{equation}
        \vartheta\mathcal{C}\boldsymbol{D}
        =\bm{\vartheta D^{\rm T}}{0}{0}{\vartheta D}
        =\bm{\sigma_-^*\vartheta\sigma_-^{\phantom{*}}}{0}{0}{\sigma_+^*\vartheta
        \sigma_+^{\phantom *}}\;,
    \label{Time-Reflection_Chargedfield-4}
\end{equation}
which is self-adjoint on ${\boldsymbol{L}_2}$. 
\end{proposition}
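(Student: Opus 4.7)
The plan is to mirror the proof of Proposition~\ref{Prop:TimeReflection Action} in the neutral case, keeping track of the additional charge index and the twist by $\mathcal{C}$.

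First, I would observe that $\boldsymbol{\Theta}$ acts on each of the one-particle summands $\mathcal{E}_1^{\pm}$ independently as $\vartheta$, so $\Theta A_{\pm}(k)\Theta = A_{\pm}(\vartheta k)$ for both charges. From the definition \eqref{Chraged Coordinates Q} this gives $\Theta \widetilde Q_\pm(k)\Theta = \widetilde Q_\pm(\vartheta k)$. I would then substitute this into the Fourier representation \eqref{Classical_Charged Field-1} of $\Phi_\pm(x)$ and change variables $k\mapsto\vartheta k$ inside the integral, obtaining
\[
    \Theta\,\Phi_\pm(x)\,\Theta = (2\pi)^{-d/2}\int \widetilde Q_\pm(k)\,\widetilde\sigma_\pm(\vartheta k)\,e^{ik\cdot\vartheta x}\,dk.
\]
On the other hand, writing out $\Phi_\mp^{*}(\vartheta x)$ and using $\widetilde Q_\mp^{*}(k) = \widetilde Q_\pm(-k)$ from \eqref{Chraged Coordinates Q}, followed by $k\mapsto -k$, one arrives at
\[
    \Phi_\mp^{*}(\vartheta x) = (2\pi)^{-d/2}\int \widetilde Q_\pm(k)\,\overline{\widetilde\sigma_\mp(-k)}\,e^{ik\cdot\vartheta x}\,dk.
\]
Comparing these two integrands, \eqref{Time-Reflection_Chargedfield-1} holds if and only if $\widetilde\sigma_\pm(\vartheta k) = \overline{\widetilde\sigma_\mp(-k)}$, which is the Fourier-space translation of $\vartheta\sigma_\pm^{\rm T}\vartheta = \sigma_\mp^{*}$ (using that $\sigma^{\rm T}$ has multiplier $\widetilde\sigma(-k)$ and $\sigma^{*}$ has multiplier $\overline{\widetilde\sigma(k)}$, as in the footnote in~\S\ref{Sect:NeutralCoordinates}).

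Next I would verify the matrix formulas. Since $\vartheta$ is a scalar operator, $\vartheta\mathcal{C}$ is the off-diagonal matrix with entries $\vartheta$, so a direct computation gives
\[
    (\vartheta\mathcal{C})\,\boldsymbol{D} = \bm{\vartheta D^{\rm T}}{0}{0}{\vartheta D}.
\]
To factor the diagonal blocks, I would use that $\sigma_+$ and $\sigma_-^{\rm T}$ are all Fourier multipliers and hence commute, so that $D = \sigma_-^{\rm T}\sigma_+$. The assumption $\vartheta\sigma_-^{\rm T}\vartheta = \sigma_+^{*}$ rewrites as $\vartheta\sigma_-^{\rm T} = \sigma_+^{*}\vartheta$, which upon insertion yields $\vartheta D = \sigma_+^{*}\vartheta\sigma_+$. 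The analogous manipulation with $\vartheta\sigma_+^{\rm T} = \sigma_-^{*}\vartheta$ gives $\vartheta D^{\rm T} = \sigma_-^{*}\vartheta\sigma_-$, establishing \eqref{Time-Reflection_Chargedfield-4}. Both diagonal entries are manifestly self-adjoint on $L_2(\mathbb{R}^d)$, so $\vartheta\mathcal{C}\boldsymbol{D}$ is self-adjoint on $\boldsymbol{L}_2$.

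Finally, for \eqref{Time-Reflection_Chargedfield-3} I would multiply once more on the right by $\vartheta\mathcal{C}$ to obtain
\[
    (\vartheta\mathcal{C})\,\boldsymbol{D}\,(\vartheta\mathcal{C}) = \bm{0}{\vartheta D^{\rm T}\vartheta}{\vartheta D\vartheta}{0},
\]
and then check blockwise that $\vartheta D\vartheta = \overline{\sigma_-}\,\sigma_+^{*}$ using $\vartheta\sigma_+\vartheta = \overline{\sigma_-}$ (obtained by transposing $\vartheta\sigma_+^{\rm T}\vartheta = \sigma_-^{*}$), and compare with $D^{*} = \overline{\sigma_-}\,\sigma_+^{*}$; commutativity of $\sigma_+^{*}$ and $\overline{\sigma_-}$ matches it also with $\overline{D}$. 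The matching for the other block is symmetric. The main bookkeeping obstacle is being consistent about the four operations $\{\,\cdot\,^{\rm T},\,\cdot\,^{*},\overline{\,\cdot\,},\vartheta\cdot\vartheta\}$ on $\sigma_\pm$; the key structural observation that makes everything close is that $\sigma_+$ and $\sigma_-$ commute as Fourier multipliers, which lets one reorder factors freely before applying the intertwining relation \eqref{Time-Reflection_Chargedfield-2}.
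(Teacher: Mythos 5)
Your treatment of the equivalence with \eqref{Time-Reflection_Chargedfield-2}, and of the factorizations $\vartheta D=\sigma_+^*\vartheta\sigma_+^{\phantom{*}}$ and $\vartheta D^{\rm T}=\sigma_-^*\vartheta\sigma_-^{\phantom{*}}$ giving \eqref{Time-Reflection_Chargedfield-4} and its self-adjointness, is correct and essentially identical to the paper's proof: the paper writes $\Theta\,\Phi_\pm(x)\,\Theta$ as the adjoint of an integral against $\widetilde Q_\mp(k)$, while you expand $\Phi_\mp^{*}(\vartheta x)$ directly; the resulting multiplier condition $\overline{\widetilde\sigma_\pm(-\vartheta k)}=\widetilde\sigma_\mp(k)$ is the same.

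The one genuine flaw is in your verification of the second equality in \eqref{Time-Reflection_Chargedfield-3}. You claim that commutativity of $\sigma_+^{*}$ and $\overline{\sigma_-}$ lets you match $\vartheta D\vartheta=\overline{\sigma_-}\,\sigma_+^{*}$ with $\overline{D}$. That is false in general: $\overline{\sigma_-}\,\sigma_+^{*}=\overline{\sigma_-\sigma_+^{\rm T}}=\overline{D^{\rm T}}=D^{*}$, whereas $\overline{D}=\overline{\sigma_+}\,\sigma_-^{*}$; in Fourier space these correspond to $\overline{\widetilde\sigma_+(k)\,\widetilde\sigma_-(-k)}$ and $\overline{\widetilde\sigma_+(-k)\,\widetilde\sigma_-(k)}$, which agree only when $D=D^{\rm T}$ --- and for the charged field $D$ is \emph{not} assumed symmetric (only the block matrix $\boldsymbol{D}$ is). The correct bookkeeping is that the entry of $\overline{\boldsymbol{D}}$ occupying the same $(2,1)$ position as $\vartheta D\vartheta$ in $(\vartheta\mathcal{C})\boldsymbol{D}(\vartheta\mathcal{C})$ is $\overline{D^{\rm T}}$, not $\overline{D}$, and since $\overline{D^{\rm T}}=D^{*}$ holds identically, the equality $\boldsymbol{D}^{*}=\overline{\boldsymbol{D}}$ follows automatically from the symmetry $\boldsymbol{D}=\boldsymbol{D}^{\rm T}$, with no commutativity argument needed. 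With that correction your blockwise check of \eqref{Time-Reflection_Chargedfield-3} coincides with the paper's argument, which reduces that relation to the self-adjointness of $\vartheta D$ and $\vartheta D^{\rm T}$ obtained from the same factorizations.
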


\begin{proof}[\bf Proof] In terms of components,
\begin{eqnarray*}
        \Theta\,\Phi_\pm(x)\,\Theta
        &=&  \lrp{2\pi}^{-d/2}\int \Theta\,\widetilde Q _\pm(k)\,\Theta\,
                \widetilde \sigma_\pm(k)
                    \,e^{ik\cdot x}\,dk
        \nonumber \\
        &=&  \lrp{2\pi}^{-d/2}\int \widetilde Q _\pm(\vartheta k)
                \widetilde \sigma_\pm(k)
                    \,e^{ik\cdot x}\,dk
        \nonumber \\
        &=&  \lrp{2\pi}^{-d/2}\int \widetilde Q _\mp^{*} (-\vartheta k)
                \widetilde \sigma_\pm(k)
                    e^{ik\cdot x}\,dk
         \nonumber \\
         &=&  \lrp{2\pi}^{-d/2}\int \widetilde Q _\mp^{*} (k)
                \widetilde \sigma_\pm(-\vartheta k)
                    e^{-ik\cdot(\vartheta x)}\,dk
                    \nonumber \\
        &=&  \lrp{2\pi}^{-d/2}\lrp{\int \widetilde Q _\mp(k)
                \overline{\widetilde\sigma_\pm(-\vartheta k)}
                    \,e^{ik\cdot(\vartheta x)}\,dk}^*\;.
    \end{eqnarray*}
Thus, the desired equivalence
\eqref{Time-Reflection_Chargedfield-1} is equivalent to
$\widetilde \sigma_\pm$ satisfying the relation
$\overline{\widetilde \sigma_\pm(-\vartheta k)}=\widetilde
\sigma_\mp(k)$. In configuration space this is equivalent
to the operator relation \eqref{Time-Reflection_Chargedfield-2}.

Note that the self-adjoint, unitary operator on ${\boldsymbol{L}_2}$, given by 
\[
		\vartheta \mathcal{C}
		= \bm{0}{\vartheta}{\vartheta}{0}\;,
\]
yields
	\[
		 (\vartheta\mathcal{C})\,\boldsymbol{D}\lrp{\vartheta\mathcal{C}}
		 = \bm{0}{\vartheta D^{{\rm T}}\vartheta}{\vartheta D \vartheta}{0}\;.
	\]
Hence, relation  \eqref{Time-Reflection_Chargedfield-3} is equivalent to $\vartheta D\vartheta = D^{*}$ and $\vartheta D^{{\rm T}}\vartheta = D^{{\rm T} \,*}$, namely to the self-adjointness of $\vartheta D$ and of $\vartheta D^{{\rm T}}$  on $L_{2}$.  In the case that \eqref{Time-Reflection_Chargedfield-2} holds, one can use \eqref{Charged Covariance} to infer $ \vartheta D = \vartheta\,\sigma_{-}^{{\rm T}}\,\sigma_{+} = \sigma_+^*\,\vartheta\, \sigma_+$.  Thus, $\vartheta D$ is self-adjoint on $L_{2}$.  Likewise,
    $\vartheta D^{\rm T}
    =\vartheta\sigma_+^{\rm T}\sigma_-
    =\sigma_-^*\vartheta\sigma_-$
is self-adjoint on $L_{2}$.  As a consequence, one infers that 
\[
        \vartheta \mathcal{C} \boldsymbol{D}
        = \bm{\vartheta D^{\rm T}}{0}{0}{\vartheta D}
        =\bm{\sigma_-^*\vartheta\sigma_-^{\phantom{*}}}{0}{0}{\sigma_+^*\vartheta
        \sigma_+^{\phantom *}}\;,
\]
which has the desired form \eqref{Time-Reflection_Chargedfield-4} and is self-adjoint on ${\boldsymbol{L}_2}$.
\end{proof}

\subsection{Time-Reflection Positivity for the Charged Field}
To study time-reflection positivity of $\boldsymbol{\Theta}$ on ${{\boldsymbol {\mathcal E}}}_{+}$, take the Osterwalder-Schrader form for the charged field to be
\begin{equation}
        \lra{\ \cdot \ ,\ \cdot \ }_{{\boldsymbol {\mathcal H}}}
        = \lra{\ \cdot \ , \boldsymbol{\Theta}\ \cdot \
        }_{{\boldsymbol {\mathcal E}}}\;.
    \label{Charged OS Form}
\end{equation}
Let $ {\boldsymbol{L}_2}_{+}=L_{2}(\mathbb{R}^{d}_{+})\oplus L_{2}(\mathbb{R}^{d}_{+}) $.  
\begin{definition}\label{Defn:Charged RP}
The functional $S({\boldsymbol f})$ is time-reflection positive if 
\begin{equation}
		0
		\leqslant 
		\sum_{j,j'=1}^{N} 
			c_{j} \overline{c_{j'}} \; 
				S({\boldsymbol f}_{j}-\vartheta\mathcal{C}\overline{{\boldsymbol f}_{j'}})\;,
	\label{Charged TimeRP}
\end{equation}
for any choices of ${\boldsymbol f}_{j}\in {\boldsymbol{L}_2}_{+}$ and $c_{j}\in\mathbb{C}$, $j=1, \ldots, N$.  It is also time-reflection positive if  \eqref{Charged TimeRP} holds
for any choices of ${\boldsymbol f}_{j}\in {\boldsymbol{L}_2}_{-}$ and $c_{j}\in\mathbb{C}$, $j=1, \ldots, N$.  In case both conditions hold, then $S({\boldsymbol f})$ is doubly time-reflection positive.
\end{definition}

\begin{proposition}\label{Prop:RP Time Charged Field}
Time-reflection positivity on $\mathcal{E}_+$
is equivalent to the statement that
\begin{equation}
        0\leqslant \vartheta\mathcal{C}\boldsymbol{D}
        =\bm{\vartheta D^{\rm T}}{0}{0}{\vartheta D}
        =\bm{\sigma_-^*\vartheta\sigma_-^{\phantom{*}}}{0}{0}{\sigma_+^*\vartheta
        \sigma_+^{\phantom *}}\;
        \text{on} \;  {\boldsymbol{L}_{2,+}}\;.
    \label{Chraged-OS RP}
\end{equation}
\end{proposition}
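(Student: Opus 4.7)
The plan is to imitate the proof of Proposition III.3 (iv) in the matrix setting, with the self-adjoint, real unitary $\vartheta\mathcal{C}$ on $\boldsymbol{L}_2$ playing the role of $\vartheta$ and the symmetric operator $\boldsymbol{D}=\boldsymbol{D}^{\rm T}$ playing the role of the neutral covariance. Since $S({\boldsymbol f})=e^{-\tfrac12\lra{\overline{\boldsymbol f},\boldsymbol{D}{\boldsymbol f}}_{\boldsymbol{L}_2}}$ by \eqref{Characteristic_Functional on E-2}, the verification of \eqref{Charged TimeRP} reduces to a Gaussian identity that converts the left-hand side into an entry-wise exponential of a quadratic form.

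First I would expand the exponent of $S({\boldsymbol f}_j-\vartheta\mathcal{C}\overline{{\boldsymbol f}_{j'}})$ using $\overline{\vartheta\mathcal{C}\overline{{\boldsymbol f}_{j'}}}=\vartheta\mathcal{C}{\boldsymbol f}_{j'}$ (reality of $\vartheta$ and $\mathcal{C}$) into four pieces: the two ``diagonal'' terms $-\tfrac12\lra{\overline{{\boldsymbol f}_j},\boldsymbol{D}{\boldsymbol f}_j}$ and $-\tfrac12\lra{\vartheta\mathcal{C}{\boldsymbol f}_{j'},\boldsymbol{D}\vartheta\mathcal{C}\overline{{\boldsymbol f}_{j'}}}$, and the two ``cross'' terms. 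Using Proposition~\ref{Prop:ChargedFieldTimeReflection}, specifically $(\vartheta\mathcal{C})\boldsymbol{D}(\vartheta\mathcal{C})=\boldsymbol{D}^*$, the second diagonal term equals $\overline{\lra{\overline{{\boldsymbol f}_{j'}},\boldsymbol{D}{\boldsymbol f}_{j'}}_{\boldsymbol{L}_2}}$. Defining $d_j=c_j\,e^{-\tfrac12\lra{\overline{{\boldsymbol f}_j},\boldsymbol{D}{\boldsymbol f}_j}_{\boldsymbol{L}_2}}$, these diagonal factors combine with $c_j$ and $\overline{c_{j'}}$ to produce $d_j\overline{d_{j'}}$. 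For the two cross terms, the symmetry $\boldsymbol{D}=\boldsymbol{D}^{\rm T}$ together with the identity $\lra{\overline{{\boldsymbol f}_{j'}},\boldsymbol{D}\vartheta\mathcal{C}\overline{{\boldsymbol f}_j}}_{\boldsymbol{L}_2}=\lra{{\boldsymbol f}_j,\vartheta\mathcal{C}\boldsymbol{D}^{\rm T}{\boldsymbol f}_{j'}}_{\boldsymbol{L}_2}$ (the same rearrangement used to close \eqref{RP-Gaussian Identity}) collapses them into $\lra{{\boldsymbol f}_j,\vartheta\mathcal{C}\boldsymbol{D}{\boldsymbol f}_{j'}}_{\boldsymbol{L}_2}$. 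Altogether,
\begin{equation*}
\sum_{j,j'=1}^N c_j\,\overline{c_{j'}}\,S({\boldsymbol f}_j-\vartheta\mathcal{C}\overline{{\boldsymbol f}_{j'}})
= \sum_{j,j'=1}^N d_j\,\overline{d_{j'}}\,e^{\lra{{\boldsymbol f}_j,\vartheta\mathcal{C}\boldsymbol{D}{\boldsymbol f}_{j'}}_{\boldsymbol{L}_2}}.
\end{equation*}

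Given this identity, the forward direction is immediate: if $0\leqslant\vartheta\mathcal{C}\boldsymbol{D}$ on $\boldsymbol{L}_{2,+}$, then $\mathbf{k}_{jj'}=\lra{{\boldsymbol f}_j,\vartheta\mathcal{C}\boldsymbol{D}{\boldsymbol f}_{j'}}_{\boldsymbol{L}_2}$ is a positive semi-definite matrix for ${\boldsymbol f}_j\in\boldsymbol{L}_{2,+}$, each Hadamard power $\mathbf{k}^{\circ n}$ is positive by the Schur product theorem, and so $\mathcal{K}=\sum_{n\geq 0}\tfrac{1}{n!}\mathbf{k}^{\circ n}\geqslant I$, exactly as in the last paragraph of the proof of Proposition~\ref{Prop:RP-Time Equivalence}. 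Hence \eqref{Charged TimeRP} holds. For the converse, assume \eqref{Charged TimeRP} and use the standard polarization trick: take $N=2$, ${\boldsymbol f}_1=-i\lambda{\boldsymbol f}$, ${\boldsymbol f}_2=0$, $c_1=\lambda^{-1}=-c_2$, for any ${\boldsymbol f}\in\boldsymbol{L}_{2,+}$, and let $\lambda\to 0$; the surviving coefficient is $\lra{{\boldsymbol f},\vartheta\mathcal{C}\boldsymbol{D}{\boldsymbol f}}_{\boldsymbol{L}_2}\geqslant 0$, which is the desired operator inequality. The block-diagonal form in \eqref{Chraged-OS RP} is then read off from \eqref{Time-Reflection_Chargedfield-4}.

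The main obstacle is the bookkeeping in the quadratic-form expansion described above: one must carefully track how $\vartheta\mathcal{C}$ interacts with the complex-conjugate pairing $\lra{\overline{\cdot\,},\boldsymbol{D}\,\cdot}_{\boldsymbol{L}_2}$, invoking at the right moments the reality of $\vartheta$ and $\mathcal{C}$, the symmetry $\boldsymbol{D}=\boldsymbol{D}^{\rm T}$, and the self-adjointness relation $\vartheta\mathcal{C}\boldsymbol{D}\vartheta\mathcal{C}=\boldsymbol{D}^*$ established in Proposition~\ref{Prop:ChargedFieldTimeReflection}. Once these cancellations are organised, the rest of the argument is formally identical to the neutral case.
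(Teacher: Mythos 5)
Your proposal is correct and takes essentially the same route as the paper: the paper also reduces charged reflection positivity to positivity of the quadratic form $\lra{{\boldsymbol f},\vartheta\mathcal{C}\boldsymbol{D}\,\boldsymbol{g}}_{\boldsymbol{L}_2}$ by invoking the Gaussian argument from the proof of Proposition~\ref{Prop:RP-Time Equivalence} with $\vartheta\mathcal{C}$ playing the role of $\vartheta$. The only cosmetic difference is that the paper establishes the key identity at the level of the reflected two-point function, $\lra{{\boldsymbol \Phi}({\boldsymbol f}){\boldsymbol \Omega}_0^{\tt E},\boldsymbol{\Theta}{\boldsymbol \Phi}(\boldsymbol{g}){\boldsymbol \Omega}_0^{\tt E}}_{{\boldsymbol {\mathcal E}}}=\lra{{\boldsymbol f},\vartheta\mathcal{C}\boldsymbol{D}\,\boldsymbol{g}}_{\boldsymbol{L}_2}$, via the transformation law $\boldsymbol{\Theta}\,{\boldsymbol \Phi}^*(\overline{\boldsymbol f})\,\boldsymbol{\Theta}={\boldsymbol \Phi}(\mathcal{C}\vartheta\overline{\boldsymbol f})$, rather than re-expanding the Gaussian characteristic functional as you do (your index bookkeeping differs from the paper's only by a harmless complex conjugation, since $\vartheta\mathcal{C}\boldsymbol{D}$ is self-adjoint).
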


\begin{proof}[\bf Proof]
As the characteristic function for the charged field is
Gaussian, we infer from the proof of Proposition~\ref{Prop:RP-Time
Equivalence} that positivity of the OS form~\eqref{Charged
OS Form} on ${{\boldsymbol {\mathcal E}}}$ is equivalent to positivity of the
two point function 
\[ 
\langle 	
	{\boldsymbol \Phi}({\boldsymbol f}){\boldsymbol \Omega}_0^{\tt E}\, ,
	{\boldsymbol \Phi}({\boldsymbol f}){\boldsymbol \Omega}_0^{\tt E}   
\rangle_{\boldsymbol {\mathcal H}}
\]
for
${\boldsymbol f}\in{\boldsymbol{L}_{2,+}}=L_2(\mathbb{R}^d_+)\oplus L_2(\mathbb{R}^d_+)$. For
${\boldsymbol f},\boldsymbol{g}\in{\boldsymbol{L}_{2,+}}$ we claim that the putative inner
product of two such vectors in ${\boldsymbol {\mathcal H}}$ is
\begin{equation}
    \label{ChargedReflectedTwoPointFunction}
        \lra{{\boldsymbol \Phi}({\boldsymbol f}){\boldsymbol \Omega}_0^{\tt E}\,,
        {\boldsymbol \Phi}(\boldsymbol{g}){\boldsymbol \Omega}_0^{\tt E}}_{\boldsymbol {\mathcal H}}
        = \lra{{\boldsymbol f},\vartheta \mathcal{C}\boldsymbol{D}\,
        \boldsymbol{g}}_{\boldsymbol{L}_2}\;,
\end{equation}
where 
\[
        \vartheta\mathcal{C}\boldsymbol{D}
        =\bm{\vartheta D^{\rm T}}{0}{0}{\vartheta D}
        =\bm{\sigma_-^*\vartheta\sigma_-}{0}{0}{\sigma_+^*\vartheta \sigma_+}\;.
\]
From \eqref{ChargedReflectedTwoPointFunction} we infer that
positivity of the OS form on ${{\boldsymbol {\mathcal E}}}$ is equivalent  to~\eqref{Chraged-OS RP}, as claimed.  In fact,
\begin{eqnarray*}
        \lra{{\boldsymbol \Phi}({\boldsymbol f}){\boldsymbol \Omega}_0^{\tt E}\,,
        {\boldsymbol \Phi}(\boldsymbol{g}){\boldsymbol \Omega}_0^{\tt E}}_{\boldsymbol {\mathcal H}}
        &=& \lra{{\boldsymbol \Phi}({\boldsymbol f}){\boldsymbol \Omega}_0^{\tt E}\,,
        \boldsymbol{\Theta}\,{\boldsymbol \Phi}(\boldsymbol{g}){\boldsymbol \Omega}_0^{\tt E}}_{{\boldsymbol {\mathcal E}}}
        \nonumber
        \\
        &=& \lra{{\boldsymbol \Omega}_0^{\tt E}\,,\boldsymbol{\Theta} \,{\boldsymbol \Phi}^*(\overline{\boldsymbol f})
        \boldsymbol{\Theta}\,{\boldsymbol \Phi}(\boldsymbol{g}){\boldsymbol \Omega}_0^{\tt E}}_{{\boldsymbol {\mathcal E}}}
                \nonumber
        \\
        &=& \lra{{\boldsymbol \Omega}_0^{\tt E}\,,{\boldsymbol \Phi}(\mathcal{C}\vartheta\overline{\boldsymbol f})
        \,{\boldsymbol \Phi}(\boldsymbol{g}){\boldsymbol \Omega}_0^{\tt E}}_{{\boldsymbol {\mathcal E}}}
        = \lra{{\boldsymbol f},\vartheta \mathcal{C}\boldsymbol{D}\,
        \boldsymbol{g}}_{\boldsymbol{L}_2}\;.
\end{eqnarray*}
In the final equality we use the fact that action of the matrix
$\mathcal{C}$ on ${\boldsymbol \Phi}$ is given by  
\eqref{Charge_Conjugation}.  Therefore
$(\mathcal{C}{\boldsymbol \Phi})(\vartheta{\boldsymbol f})={\boldsymbol \Phi}(\mathcal{C}\vartheta{\boldsymbol f})$,
and
\begin{equation}
        \lrae{{\boldsymbol \Omega}_0^{\tt E}}{(\mathcal{C}{\boldsymbol \Phi})(\vartheta{\boldsymbol f})\,
            {\boldsymbol \Phi}(\boldsymbol{g})\,}_{{\boldsymbol {\mathcal E}}}
        = \lra{ \overline{{\boldsymbol f}},\vartheta\mathcal{C}\boldsymbol{D}
        \boldsymbol{g}}_{\boldsymbol{L}_2}\;.
    \label{Gamma_Covariance}
\end{equation}
\end{proof}

\subsection{Spatial-Reflection Positivity for the Charged Field}
One can formulate spatial reflection positivity by
substituting $L_2(\mathbb{R}^d_{\vec n+})$ for $L_2(\mathbb{R}^d_+)$,
substituting  $\pi_{\vec n}$ for~$\vartheta $, substituting
$\Pi_{\vec n}$ for $\Theta$, and substituting ${{\boldsymbol {\mathcal E}}}_{\vec
n+}$ for ${{\boldsymbol {\mathcal E}}}_{+}$ in \eqref{Charged OS Form}.  Spatial
reflection positivity gives a pre-inner product
\[
        \lra{A, B}_{\boldsymbol {\mathcal H} (\vec n)}
        = \lra{A,\Pi_{\vec n} B}_{{\boldsymbol {\mathcal E}}}\;,
 \]
on ${{\boldsymbol {\mathcal E}}}_{\vec n+, 0}$, as well as  new requirements on
the transformation of the field and of $D$.   

\begin{proposition}[\bf Spatial-Reflection]
\label{Prop:RP Charged-Spatial} If one replaces $\vartheta, \Theta,
\boldsymbol{\Theta}, \mathbb{R}^d_+, {{\boldsymbol {\mathcal E}}}_{+}$ in the statement of
Proposition~\ref{Prop:ChargedFieldTimeReflection}  by $\pi_{\vec n},
\Pi_{\vec n}, \mathbf{\Pi}_{\vec n}, \mathbb{R}^d_{\vec n+}, {{\boldsymbol {\mathcal E}}}_{\vec n+}\,$,
respectively, then the proposition remains valid.
\end{proposition}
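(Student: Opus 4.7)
The plan is to follow the proof of Proposition~\ref{Prop:ChargedFieldTimeReflection} line-by-line, replacing the time reflection $\vartheta$ by the spatial reflection $\pi_{\vec n}$ throughout. Since the argument in the time case was purely algebraic and used only that $\vartheta$ is a real-orthogonal involution that is implemented on $\mathcal{E}$ by the second-quantized unitary $\Theta$, leaves $\boldsymbol\Omega_0^{\tt E}$ invariant, and preserves Lebesgue measure on momentum space, the very same properties of $\pi_{\vec n}$ and $\Pi_{\vec n}$ allow a verbatim transcription.

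First I would establish how $\Pi_{\vec n}$ conjugates creation and annihilation operators. From its definition as the push-forward of $\pi_{\vec n}$, one has $\Pi_{\vec n} A_\pm(k) \Pi_{\vec n} = A_\pm(\pi_{\vec n} k)$, and hence by the linear relations \eqref{Chraged Coordinates Q} the charged coordinates transform as
\[
    \Pi_{\vec n}\,\widetilde Q_\pm(k)\,\Pi_{\vec n} = \widetilde Q_\pm(\pi_{\vec n} k).
\]
This is the exact analog of the identity $\Theta\,\widetilde Q_\pm(k)\,\Theta = \widetilde Q_\pm(\vartheta k)$ used in the time-reflection computation.

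Next I would insert this identity into the definition \eqref{Classical_Charged Field-1} and run the same five-step manipulation that appears in the proof of Proposition~\ref{Prop:ChargedFieldTimeReflection}: change variables $k\mapsto \pi_{\vec n} k$ (legal because $\pi_{\vec n}$ is orthogonal), use $\widetilde Q_\pm(k) = \widetilde Q_\mp^{*}(-k)$ to swap the charge label, and collect the result under an overall complex conjugation. This yields
\[
    \Pi_{\vec n}\,\Phi_\pm(x)\,\Pi_{\vec n} = \Phi_\mp^{*}(\pi_{\vec n}\,x) \iff \overline{\widetilde\sigma_\pm(-\pi_{\vec n} k)} = \widetilde\sigma_\mp(k),
\]
which in configuration space is equivalent to the operator relation $\pi_{\vec n}\,\sigma_\pm^{\rm T}\,\pi_{\vec n} = \sigma_\mp^{*}$. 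Once this is in hand, one checks that $\pi_{\vec n}\mathcal{C}$ is a self-adjoint unitary on $\boldsymbol{L}_2$ and computes, using \eqref{Charged Covariance},
\[
    \pi_{\vec n}\mathcal{C}\boldsymbol{D} = \bm{\pi_{\vec n} D^{\rm T}}{0}{0}{\pi_{\vec n} D} = \bm{\sigma_-^{*}\pi_{\vec n}\sigma_-^{\phantom{*}}}{0}{0}{\sigma_+^{*}\pi_{\vec n}\sigma_+^{\phantom{*}}},
\]
which is manifestly self-adjoint on $\boldsymbol{L}_2$ and reproduces the spatial analogs of \eqref{Time-Reflection_Chargedfield-3} and \eqref{Time-Reflection_Chargedfield-4}. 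The companion reflection-positivity characterization (the analog of Proposition~\ref{Prop:RP Time Charged Field}) then follows by the same Gaussian identity applied to test functions ${\boldsymbol f}\in L_2(\mathbb{R}^d_{\vec n+})\oplus L_2(\mathbb{R}^d_{\vec n+})$.

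I do not expect any genuine obstacle. The only things to keep track of during the transcription are: that $\pi_{\vec n}$ acts trivially on the time coordinate so the half-spaces in play are now $\mathbb{R}^d_{\vec n\pm}$ cut out by the reflection hyperplane orthogonal to $\vec n$; and that $\pi_{\vec n}^2 = I$, $\mathcal{C}^2 = I$, and $[\pi_{\vec n}, \mathcal{C}] = 0$ (since $\mathcal{C}$ acts only on the charge index), so all the involutive identities used in the time-reflection proof survive unchanged.
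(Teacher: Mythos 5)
Your proposal is correct and follows exactly the route the paper intends: the paper states this proposition without a separate proof, relying (as in the neutral case) on a verbatim transcription of the proof of Proposition~\ref{Prop:ChargedFieldTimeReflection} with $\vartheta,\Theta$ replaced by $\pi_{\vec n},\Pi_{\vec n}$, which is precisely what you carry out, including the key identities $\Pi_{\vec n}\widetilde Q_\pm(k)\Pi_{\vec n}=\widetilde Q_\pm(\pi_{\vec n}k)$ and $\widetilde Q_\pm(k)=\widetilde Q_\mp^*(-k)$ and the observation that $\pi_{\vec n}$ is a real orthogonal involution commuting with $\mathcal{C}$. No gaps.
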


\begin{proposition}[\bf Spatial-Reflection
Positivity]\label{Prop:RP Charged-Spatial} If one replaces
$\vartheta, \Theta, \boldsymbol{\Theta}, \mathbb{R}^d_+, {{\boldsymbol {\mathcal E}}}_{+}$ in the statement of
Proposition~\ref{Prop:RP Time Charged Field}  by $\pi_{\vec n},
\Pi_{\vec n}$, $\mathbf{\Pi}_{\vec n}$, $\mathbb{R}^d_{\vec n+}$, ${{\boldsymbol {\mathcal E}}}_{\vec n+}\,$,
respectively, then the proposition remains valid.
\end{proposition}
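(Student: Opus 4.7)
The plan is to transcribe, mutatis mutandis, the proof of Proposition~\ref{Prop:RP Time Charged Field}, exploiting the fact that the spatial reflection $\pi_{\vec n}$ enjoys the same structural features as $\vartheta$: it is a real, self-adjoint unitary on $L_2(\mathbb{R}^d)$ with $\pi_{\vec n}^2 = I$; it lifts canonically to a real, self-adjoint unitary $\Pi_{\vec n}$ on $\mathcal{E}$ (and hence to $\mathbf{\Pi}_{\vec n}$ on ${\boldsymbol {\mathcal E}}$) fixing the vacuum $\boldsymbol\Omega_0^{\tt E}$; and its action on the Fourier modes $\Pi_{\vec n}\,\widetilde Q_{\pm}(k)\,\Pi_{\vec n} = \widetilde Q_{\pm}(\pi_{\vec n} k)$ is the exact analogue of the time-reflection action. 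I would first invoke the preceding Proposition (the spatial analogue of Proposition~\ref{Prop:ChargedFieldTimeReflection}) to conclude that, under the assumption $\pi_{\vec n}\,\sigma_\pm^{\rm T}\,\pi_{\vec n} = \sigma_\mp^{*}$, both $\pi_{\vec n} D$ and $\pi_{\vec n} D^{\rm T}$ are self-adjoint on $L_2$, so that
\[
        \pi_{\vec n}\,\mathcal{C}\,\boldsymbol{D}
        = \bm{\pi_{\vec n} D^{\rm T}}{0}{0}{\pi_{\vec n} D}
        = \bm{\sigma_-^{*}\pi_{\vec n}\sigma_-^{\phantom*}}{0}{0}{\sigma_+^{*}\pi_{\vec n}\sigma_+^{\phantom*}}
\]
is a well-defined self-adjoint operator on ${\boldsymbol{L}_2}$.

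Next, I would establish the spatial-reflection analogue of the identity \eqref{ChargedReflectedTwoPointFunction}, namely
\[
        \lra{{\boldsymbol \Phi}({\boldsymbol f})\,{\boldsymbol \Omega}_0^{\tt E}, {\boldsymbol \Phi}(\boldsymbol{g})\,{\boldsymbol \Omega}_0^{\tt E}}_{{\boldsymbol {\mathcal H}}(\vec n)}
        = \lra{{\boldsymbol f}, \pi_{\vec n}\,\mathcal{C}\,\boldsymbol{D}\,\boldsymbol{g}}_{{\boldsymbol{L}_2}}
\]
for ${\boldsymbol f},\boldsymbol{g}\in{\boldsymbol{L}_{2,\vec n+}}$. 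The computation is formally identical to the time case: expand the OS inner product using \eqref{Charged OS Form} with $\Pi_{\vec n}$ in place of $\Theta$, move $\mathbf{\Pi}_{\vec n}$ past ${\boldsymbol \Phi}^*(\overline {\boldsymbol f})$ using the transformation law $\mathbf{\Pi}_{\vec n}\,{\boldsymbol \Phi}(x)\,\mathbf{\Pi}_{\vec n} = \mathcal{C}\,\overline{{\boldsymbol \Phi}}(\pi_{\vec n} x)$, and use $\mathbf{\Pi}_{\vec n}\boldsymbol\Omega_0^{\tt E} = \boldsymbol\Omega_0^{\tt E}$ together with the commutativity $(\mathcal{C}{\boldsymbol \Phi})(\pi_{\vec n} {\boldsymbol f}) = {\boldsymbol \Phi}(\mathcal{C}\pi_{\vec n}{\boldsymbol f})$, which holds trivially because $\mathcal{C}$ acts on internal charge indices while $\pi_{\vec n}$ acts on spacetime arguments. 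Then invoking the Gaussian exponentiation argument from the last part of the proof of Proposition~\ref{Prop:RP-Time Equivalence}---in which the matrix with entries $e^{\langle f_j, \vartheta D f_{j'}\rangle}$ dominates the identity whenever the kernel $\langle f_j, \vartheta D f_{j'}\rangle$ is non-negative---yields the equivalence between the spatial OS positivity of $S({\boldsymbol f})$ on ${\boldsymbol {\mathcal E}}_{\vec n+}$ and positivity of $\pi_{\vec n}\mathcal{C}\boldsymbol{D}$ on ${\boldsymbol{L}_{2,\vec n+}}$.

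There is really no substantive obstacle here: the only point deserving attention is that the spatial support condition $f\in L_2(\mathbb{R}^d_{\vec n+})$ is preserved by the bookkeeping exactly as $f\in L_2(\mathbb{R}^d_+)$ is in the time case, because $\pi_{\vec n}$ carries $L_2(\mathbb{R}^d_{\vec n+})$ to $L_2(\mathbb{R}^d_{\vec n-})$ just as $\vartheta$ carries $L_2(\mathbb{R}^d_+)$ to $L_2(\mathbb{R}^d_-)$. Consequently every manipulation in the proof of Proposition~\ref{Prop:RP Time Charged Field} transfers verbatim, and the spatial analogue follows.
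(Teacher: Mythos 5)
Your proposal is correct and follows essentially the same route the paper intends: the paper offers no separate argument for the spatial case precisely because, as you note, $\pi_{\vec n}$ shares all the structural properties of $\vartheta$ (real self-adjoint unitary squaring to the identity, lifting to a vacuum-fixing unitary on ${\boldsymbol {\mathcal E}}$, acting on the modes by $\widetilde Q_\pm(k)\mapsto\widetilde Q_\pm(\pi_{\vec n}k)$), so the proof of Proposition~\ref{Prop:RP Time Charged Field} transfers verbatim with $\vartheta,\Theta,\boldsymbol{\Theta},\mathbb{R}^d_+$ replaced by their spatial counterparts. Your reduction to the spatial analogue of Proposition~\ref{Prop:ChargedFieldTimeReflection}, the reflected two-point identity, and the Gaussian exponentiation step from Proposition~\ref{Prop:RP-Time Equivalence} is exactly the intended chain of reasoning.
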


\setcounter{equation}{0}
\section{Compactification
\label{Sect:Compactification_RP}} In this section we show that reflection positivity 
carries over when we compactify one coordinate $x_{j}\in \mathbb{R}$ on the line to a corresponding coordinate on a circle $x_{j} \in S^{1}$.

We  consider spacetimes of the general form ${\bf X}=X_1\times \cdots
\times X_d$, where each factor $X_i$ either equals $\mathbb{R}$
(the real line) or $S^{1}$ (a circle of length $\ell_j$).
Denote the first coordinate by $t=x_0$ and let $x=(t,\vec
x)$.  {\em Compactification} of a coordinate $x_j\in X_j$
means replacing the coordinate $x_j\in\mathbb{R}$ by  a
corresponding coordinate $x_j\in S^{1}$.  One
denotes this compactification as
\begin{multline}
        \mathbf{X}
        = X_1\times \cdots \times X_{j-1}\times
                \mathbb{R}\times X_{j+1}\times\cdots \times X_d
        \; \longrightarrow \;
        \nonumber \\
        \; \longrightarrow \;
\mathbf{X}^{cj}
        = X_1\times \cdots \times X_{j-1}\times
            S^{1} \times X_{j+1}\times\cdots \times X_d\;.
\end{multline}
Of course one can take $\mathbf{X}^{cj}$ as a new $\mathbf{X}$, and
continue to compactify.  The minimally compactified space
is $\mathbb{R}^d$; the maximally compactified space is the torus
$\mathbb{T}^d$, with periods $\beta=\ell_0,\ell_1,\cdots,
\ell_{d-1}$.

Parameterize the compactified coordinate $x_j$ as
    \[
        x_j
        \in S^{1}
        = [-\tfrac12\ell_j\,,\,\tfrac12\ell_j]\;,
    \]
and the positive-subspace of $X_j$ by
\[
        X_{j+}
        = (S^{1}_{+})_j = [0\,,\,\tfrac12\ell_j]\;,
        \quad \text{or}\quad
        X_{j+}=\mathbb{R}_+ = [0,\infty)\;.
\]
Likewise, denote the $j$-positive subspace of $\mathbf X$ by
\begin{equation}
        {\mathbf X}_{j+}=X_1\times\cdots \times X_{j+}\times \cdots\times X_d\;.
    \label{j-PositiveSubspace}
\end{equation}
The reflection of the coordinate $x_j$ is the transformation
\[
        \pi_j \colon x\to \pi_j\, x\;,
        \text{where } (\pi_j\,x)_i=
            \begin{cases}
                \phantom{-} x_i \;,&\text{if} \quad  i\neq j\\
                -x_j\;,&\text{if} \quad i=j
            \end{cases} 
        \;.
\]
We denote $\vartheta $ by $\pi_0$, so we treat time reflection and
spatial reflection on an equal footing. The Fock space $\mathcal{E}$
over $\mathcal{K}$ and the $j$-positive subspaces $\mathcal{E}_{j+}$ will now refer to 
the one-particle space $\mathcal{K}=L_2({\bf X})$, and the
$j$-positive subspace  $\mathcal{K}_{j+}= L_2({\mathbf
X}_{j+})$.

\begin{definition}
The operator $D$ is  doubly-reflection-positive with respect to $\pi_{j}$  if both 
\begin{equation}
		0\leqslant  \pi_{j} D\;,
		\quad \text{and} \quad 
		0\leqslant  D\pi_{j}
		\quad \text{on} \quad
		L_2(\mathbf{X}_{j+})\;.
	\label{Doubly-Reflection-Positive}
\end{equation}
This is  equivalent to both
\begin{equation}
		0\leqslant  \pi_{j} D\;,
		\quad \text{and} \quad
		0\leqslant  D\pi_{j}
		\quad \text{on} \quad
		L_2(\mathbf{X}_{j-})\;,
	\label{Doubly-Reflection-Positive-2}
\end{equation}
or to 
\begin{equation}
		0\leqslant  \pi_{j} D \quad \text{on both} \quad L_2(\mathbf{X}_{j\pm})\;.
	\label{Doubly-Reflection-Positive-3}
\end{equation}
\end{definition}

Recall that symmetry of the operator $D$ with integral kernel $D(x,x')$ means  $D(x,x')=D(x',x)$.  Moreover, a covariance for classical fields that is reflection-positive with respect to the reflection~$\pi$  must be symmetric and satisfy $\pi D \pi = D^{*}$.

\begin{proposition}
Suppose  $D$ is  symmetric on $L_{2}(\mathbf{X})$, and 
\[ 
\pi D\pi= D^{*}
\]
 for a reflection $\pi$.  Then double-reflection-positivity and reflection-pos\-itivity with respect to $\pi$ on $L_{2}(\mathbf{X}_{+})$  are equivalent. 
\end{proposition}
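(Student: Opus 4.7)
The plan is to prove the nontrivial direction: assuming reflection-positivity on $L_{2}(\mathbf{X}_{+})$, i.e., $0 \leqslant \pi D$ on $L_{2}(\mathbf{X}_{+})$, I will deduce the second half of double-reflection-positivity, namely $0 \leqslant D\pi$ on $L_{2}(\mathbf{X}_{+})$. The reverse implication is immediate from the definition.

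First I would observe that the hypothesis $\pi D \pi = D^{*}$ together with $\pi^{*} = \pi = \pi^{-1}$ yields $(\pi D)^{*} = D^{*}\pi = \pi D$ and $(D\pi)^{*} = \pi D^{*} = D\pi$, so both $\pi D$ and $D\pi$ are self-adjoint on $L_{2}(\mathbf{X})$. This ensures the positivity statements $0 \leqslant \pi D$ and $0 \leqslant D\pi$ are well-posed as assertions about real-valued quadratic forms.

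The key step uses the symmetry $D = D^{\rm T}$ to move $\pi$ from the right of $D$ to the left. From the kernel relation $D(x,x') = D(x',x)$ one obtains the identity $\langle g, D h\rangle_{L_{2}} = \langle \overline{h}, D \overline{g}\rangle_{L_{2}}$ valid on all of $L_{2}(\mathbf{X})$. Applying this with $g = f$ and $h = \pi f$, and using that $\pi$ is a real unitary so that $\overline{\pi f} = \pi \overline{f}$, followed by the self-adjointness of $\pi$, one arrives at
\[
\langle f, D\pi f\rangle_{L_{2}} = \langle \pi \overline{f}, D \overline{f}\rangle_{L_{2}} = \langle \overline{f}, \pi D \overline{f}\rangle_{L_{2}}.
\]

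To conclude, one notes that if $f \in L_{2}(\mathbf{X}_{+})$ then $\overline{f} \in L_{2}(\mathbf{X}_{+})$ as well, since complex conjugation preserves supports. Hence the right-hand side is $\geqslant 0$ by the reflection-positivity hypothesis, which is exactly the desired statement $0 \leqslant D\pi$ on $L_{2}(\mathbf{X}_{+})$. The main (and only mild) obstacle I anticipate is recognizing that both hypotheses are genuinely needed: the covariance condition $\pi D \pi = D^{*}$ by itself only yields that $\langle f, D\pi f\rangle_{L_{2}}$ is real, while the symmetry $D = D^{\rm T}$ by itself does not produce the factor of $\pi$ on the correct side; it is the combination, channelled through the conjugation identity, that swaps $D\pi$ for $\pi D$ on the cone $L_{2}(\mathbf{X}_{+})$.
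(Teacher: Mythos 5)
Your proof is correct, and its engine is the same as the paper's: move the reflection from the right of $D$ to the left at the price of complex-conjugating the test function, then use that complex conjugation maps $L_{2}(\mathbf{X}_{+})$ into itself. The difference lies in how the swap is justified, and in which hypotheses do the work. The paper first combines $\pi D\pi=D^{*}$ with $D=D^{\rm T}$ to obtain the operator identity $\overline{D\pi}=\pi D$, and then uses the general relation $\langle f,Af\rangle=\overline{\langle \overline f,\overline A\,\overline f\rangle}$ to get $\langle f,D\pi f\rangle=\overline{\langle \overline f,\pi D\,\overline f\rangle}$. Your bilinear-form identity $\langle g,Dh\rangle=\langle\overline h,D\overline g\rangle$ follows from $D=D^{\rm T}$ alone, so your chain $\langle f,D\pi f\rangle=\langle\overline f,\pi D\,\overline f\rangle$ never actually invokes $\pi D\pi=D^{*}$; that hypothesis appears only in your preliminary self-adjointness remark. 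Consequently your closing sentence is slightly off: it is not ``the combination'' that swaps $D\pi$ for $\pi D$ in your argument---the symmetry of $D$ together with the standing facts that $\pi$ is real, self-adjoint and unitary suffices---and your computation in fact establishes the equivalence of the two form-positivity statements under $D=D^{\rm T}$ alone, a marginally more economical route than the paper's. The two versions are consistent because under the full hypotheses $\pi D$ is self-adjoint, so $\langle\overline f,\pi D\,\overline f\rangle$ is real and the missing outer conjugation is immaterial. In both arguments the trivial direction is, as you say, immediate from the definition.
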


\begin{proof}[\bf Proof]  Double RP is a stronger condition, so one only needs to show that  RP ensures double RP.  Since $\pi X_{\pm}=X_{\mp}$ and $\pi$ is self-adjoint and unitary, the condition \eqref{Doubly-Reflection-Positive-3} is equivalent to the other two conditions.  Thus, it is sufficient to show that $0\leqslant  \pi D$ on $L_{2}(X_{+})$ implies $0\leqslant  D\pi$ on the same subspace.   

Using $\pi D\pi =D^{*}$, one infers 
\[
D(x,x')=(\pi D^{*} \pi)(x,x')= ( D^{*} )(\pi x,\pi x')=  \overline{ D(\pi x',\pi x)}   \; , 
\]
so symmetry ensures that $D(x,x')=\overline{{D(\pi x, \pi x')}}$.  This is the operator relation $D= \overline{\pi D \pi} = \pi \overline{ D} \pi$, as $\pi $ is real, so  also $\overline{D \pi} =\pi D$. Therefore  
\begin{equation}
		 \lra{{f}, D\pi  {f}}_{L_{2}(\mathbf{X})} 
		 =\overline{ \lra{\overline {f},\overline{ D\pi}\,\overline {f}}}_{L_{2}(\mathbf{X})}
		 =\overline{ \lra{\overline {f},{ \pi D}\,\overline {f}}}_{L_{2}(\mathbf{X})}\;.
\end{equation}
Complex conjugation leaves $L_{2}(\mathbf{X}_{+})$ invariant, so reflection-positivity ensures $0\leqslant  \lra{\overline {f},{ \pi D}\,\overline {f}}_{L_{2}(\mathbf{X})}$.  Therefore $0 \leqslant  \lra{f, D \pi f}_{L_{2}(\mathbf{X})}$ for  $f\in L_{2}(\mathbf{X}_{+})$, as claimed.
\end{proof}

Let $e_j$ denote a unit vector in the $j^{\rm
th}$ coordinate direction. Define~$T_j$ as the unitary translation operator  on spacetime that translates by one  period  $\ell_{j}$ in the  coordinate direction $j$, namely 
\[
		(T_{j}f)(x)
		= f(x-\ell_{j} e_{j})\;,
\]
with $T_{j}:L_{2}(\mathbf{X}_{j+})\to L_{2}(\mathbf{X}_{j+}) $.  
In addition, 
\[
		\lrp{\pi_{j}  T_{j}^{-1/2}f}(x)
		=\lrp{T_{j}^{-1/2}f}(\pi_{j}x)
		=f(\pi_{j}x +\tfrac12\ell_{j}e_{j})\;.
\]
Now assume that $D(x-x')$
decreases sufficiently rapidly so that the sum
\begin{equation}
        D^{cj}(x,x')
        =  \sum_{n=-\infty}^\infty \lrp{T_{j}^{-n}D}(x-x')
        = \sum_{n=-\infty}^\infty D(x-x'+n\ell_j \,e_j)\;
    \label{Kernel-Periodization}
\end{equation}
converges absolutely.   Then \eqref{Kernel-Periodization} defines the  operator $D^{cj}$ on the space~$\mathbf{X}^{cj}$.  

\begin{proposition}[\bf Compactification of Reflection Positivity]
\label{Prop:Compactification RP}
Let $D$ be translation-invariant and symmetric on $L_2(\mathbf{X})$ and let $D$ be doubly-reflection-positive with respect to $\pi_{j}$.  Define $D^{ci}$ by the integral kernel~\eqref{Kernel-Periodization}. Then $D^{ci}$  is doubly-reflection-positive with respect to $\pi_{j}$.
\end{proposition}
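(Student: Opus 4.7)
The plan is to lift $f$ from the compactified half-space to $\mathbf{X}$, unfold the kernel sum defining $D^{cj}$ into translates of $D$, and then invoke the one-particle OS-Hamiltonian spectral representation from the discussion following Prop~\ref{Prop:OSHamiltonian} on each half-space, so that every summand becomes a manifestly non-negative quadratic form. Fix $f\in L_{2}(\mathbf{X}^{cj}_{j+})$ and let $\tilde f\in L_{2}(\mathbf{X})$ denote its lift, supported in $x_{j}\in[0,\tfrac12\ell_{j}]$. Substituting the definition of $D^{cj}$ and rearranging by absolute convergence yields
\[
\lra{f,\pi_{j}D^{cj}f}_{L_{2}(\mathbf{X}^{cj})} = \sum_{m\in\mathbb{Z}}\mu(m),\qquad \mu(m):=\lra{\tilde f,\pi_{j}D\,T_{j}^{m}\tilde f}_{L_{2}(\mathbf{X})}.
\]
Translation invariance $[T_{j},D]=0$ together with $\pi_{j}T_{j}^{s}=T_{j}^{-s}\pi_{j}$ symmetrizes each summand as $\mu(m)=\lra{g_{m},\pi_{j}D\,g_{m}}_{L_{2}}$ with $g_{m}:=T_{j}^{m/2}\tilde f$.

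For $m\geqslant 0$ the vector $g_{m}$ lies in $L_{2}(\mathbf{X}_{j+})$, so the positive-side spectral identity gives
\[
\mu(m) = \lra{\tilde F_{+},e^{-m\ell_{j}\boldsymbol{h}_{+}}\tilde F_{+}}_{\mathcal{H}_{1,+}},\quad \tilde F_{+}=\int_{0}^{\ell_{j}/2}e^{-t\boldsymbol{h}_{+}}\tilde f_{t}\,dt.
\]
For $m=-k\leqslant -1$, $g_{m}\in L_{2}(\mathbf{X}_{j-})$ and the negative-side spectral identity (available by double RP) yields
\[
\mu(-k) = \lra{\hat F_{-},e^{-k\ell_{j}\boldsymbol{h}_{-}}\hat F_{-}}_{\mathcal{H}_{1,-}},\quad \hat F_{-}=\int_{0}^{\ell_{j}/2}e^{u\boldsymbol{h}_{-}}\tilde f_{u}\,du,
\]
where $\hat F_{-}$ is understood in the sense that $e^{-\epsilon\boldsymbol{h}_{-}}\hat F_{-}$ is a well-defined vector in $\mathcal{H}_{1,-}$ for every $\epsilon>0$. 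Each $\mu(m)$ is thus the expectation value of the positive operator $e^{-|m|\ell_{j}\boldsymbol{h}_{\pm}}$ and so is non-negative. Summing the two geometric series,
\[
\lra{f,\pi_{j}D^{cj}f}_{L_{2}(\mathbf{X}^{cj})} = \lra{\tilde F_{+},\tfrac{1}{1-e^{-\ell_{j}\boldsymbol{h}_{+}}}\tilde F_{+}}_{\mathcal{H}_{1,+}} + \lra{\hat F_{-},\tfrac{e^{-\ell_{j}\boldsymbol{h}_{-}}}{1-e^{-\ell_{j}\boldsymbol{h}_{-}}}\hat F_{-}}_{\mathcal{H}_{1,-}} \geqslant 0.
\]
The analogous argument on $L_{2}(\mathbf{X}^{cj}_{j-})$ closes the doubly-RP assertion.

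The main technical obstacle is the symmetric-diagonal rewrite and the identification of $\widehat{g_{m}}$ with $e^{-m\ell_{j}\boldsymbol{h}_{+}/2}\tilde F_{+}$ in $\mathcal{H}_{1,+}$, which amounts to identifying the OS translation semigroup $\widehat{T_{j}^{s}}$ with $e^{-s\boldsymbol{h}_{+}}$ of Prop~\ref{Prop:OSHamiltonian} for every real $s\geqslant 0$ (and similarly on the negative side). A secondary concern is convergence of the geometric series, which requires strict positivity of the spectrum of $\boldsymbol{h}_{\pm}$; this is compatible with, and essentially forced by, the absolute-convergence hypothesis on the sum defining $D^{cj}$.
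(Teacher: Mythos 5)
Your argument covers only half of the statement. The proposition asserts double reflection positivity of $D^{ci}$ for \emph{every} coordinate direction $i$, with the reflection fixed in direction $j$; your proof treats only $i=j$ (periodization along the reflected coordinate) and never addresses $i\neq j$, i.e.\ compactification of a direction orthogonal to the reflection hyperplane. That case needs a different idea: since $T_i$ commutes with both $\pi_j$ and $D$ and preserves $L_2(\mathbf{X}_{j+})$, one writes
\begin{equation*}
\lra{f,\pi_j D^{ci} f}
=\lim_{N\to\infty}\lra{g_N,\pi_j D\,g_N}_{L_2(\mathbf{X})},
\qquad
g_N=\frac{1}{\sqrt{2N+1}}\sum_{n=-N}^{N}T_i^{-n}f\in L_2(\mathbf{X}_{j+}),
\end{equation*}
so positivity of $\pi_j D$ on $L_2(\mathbf{X}_{j+})$ carries over in the limit (and similarly for $D^{ci}\pi_j$). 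Without some such Ces\`aro-averaging (or an equivalent device) your proposal simply does not prove the proposition as stated.

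For the case you do treat, the skeleton is the paper's: unfold $D^{cj}$ into translates and symmetrize each term as $\mu(m)=\lra{T_j^{m/2}\tilde f,\pi_j D\,T_j^{m/2}\tilde f}$, with $T_j^{m/2}\tilde f$ supported in $\mathbf{X}_{j+}$ for $m\geqslant 0$ and in $\mathbf{X}_{j-}$ for $m\leqslant -1$. But at that point double RP (in the form $0\leqslant\pi_j D$ on both $L_2(\mathbf{X}_{j\pm})$, equivalently $0\leqslant\pi_jD$ and $0\leqslant D\pi_j$ on $L_2(\mathbf{X}_{j+})$) already yields $\mu(m)\geqslant 0$ term by term, and the total is finite because the periodized kernel converges absolutely by hypothesis. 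Routing the positivity instead through the one-particle OS calculus creates problems you do not resolve: the identities $\lra{f,\vartheta Df}=\lra{F,F}_{\mathcal{H}_{1,\pm}}$ and $\widehat{T_j^{\,s}}=e^{-s\boldsymbol{h}_{\pm}}$ are stated in the paper for the Gaussian field over $\mathbb{R}^d$ with time reflection, not for an abstract symmetric, translation-invariant, doubly-RP operator on a partially compactified $\mathbf{X}$; and your closed-form geometric series requires $0$ to be outside the spectrum of $\boldsymbol{h}_{\pm}$, a condition that is neither assumed nor implied by absolute convergence of the kernel sum, and is in any case unnecessary. Drop the spectral dressing, conclude $\mu(m)\geqslant 0$ directly from the hypothesis, and supply the $i\neq j$ argument; then the proof is complete.
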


\begin{remark}
If $i=j$, the proposition states that reflection positivity for a coordinate $x_j\in\mathbb{R}$ extends to the case of a compactified coordinate $x_j\in S^{1}$.  However, for $i\neq j$ the proposition says that reflection positivity in the $j^{\rm th}$-direction remains unaffected by the compactification of spacetime along a different coordinate direction $x_i$.
\end{remark}

\begin{proof}[\bf Proof] {\bf The Case} $i=j$.
We first show that $0\leqslant  \lra{f,\pi_j  D^{cj} \,f}_{L_2(\mathbf{X}^{cj})}$  for all
$f\in L_2(\mathbf{X}^{cj}_{j+})$. Here, $f$ depends on the coordinate $x_{j}\in[-\tfrac{1}{2}\ell_{j}, \tfrac{1}{2}\ell_{j}]$ and is supported in the positive half interval.  Imbed $L_2(\mathbf{X}^{cj})$ in $L_2(\mathbf{X})$ in the natural way, so that translations $T_{j}$ on $L_{2}(\mathbf{X})$ translate the support of $f$ by $\ell_{j}e_{j}$.  Then
\begin{eqnarray}
		\lra{f,\pi_j  D^{cj} f}_{L_2(\mathbf{X}^{cj})}
		&=&\lra{f,\pi_j  D^{cj} f}_{L_2(\mathbf{X})}
		\nonumber \\
		&=&  \sum_{n=-\infty}^{\infty}\lra{f,\pi_j  T_{j}^{-n} D f}_{L_2(\mathbf{X})} \;.
    \label{Translation-Sum}
\end{eqnarray}
For the moment, let us suppress the variables $x_{i}$ for $i\neq j$.  Thus, we write
\begin{eqnarray}
		 \lra{f, \pi_j  T_j^{-n} D f}_{L_2(\mathbf{X})} \kern -2mm
	         &=& \kern -2mm \int_{0}^{\tfrac{1}{2}\ell_{j}} \kern -2mm \int_{0}^{\tfrac{1}{2}\ell_{j}} 
         	\overline{f(x_{j})}  D(-x_{j}-x'_{j}+n\ell_{j}) f(x'_{j}) dx_{j} dx'_{j}\nonumber \\
		&=&  \lra{T_j^{-n/2}f, \pi_j  D\,T_j^{-n/2} f}_{L_2(\mathbf{X})}
		\;.
	\label{One-Positive-Term}
 \end{eqnarray}

We now show that each term in the sum \eqref{Translation-Sum} is positive. 
These terms are of the form given in~\eqref{One-Positive-Term}.  In case $n\leqslant  0$, the operator 
$T_{j}^{-n/2}$ maps $L_{2}(\mathbf{X}_{j+})$ into itself. Thus, reflection positivity of~$\pi_{j}D$ ensures positivity, \emph{i.e.},
\begin{equation}
		0\leqslant   \lra{f, \pi_j  T_j^{-n} D f}_{L_2(\mathbf{X})}
		\quad \text{for} \quad n\leqslant  0\;.
	\label{Positive n Terms} 
\end{equation}

On the other hand, if $n\geqslant 1$, then $T_{j}^{-n/2}$ maps $L_{2}(\mathbf{X}_{j+})$  into  $L_{2}(\mathbf{X}_{j-})$.  
Hence, $\pi_{j} T_{j}^{-n/2}$ 
maps~$L_{2}(\mathbf{X}_{j+})$  into itself.  Thus, for $f\in L_{2}(\mathbf{X}_{j+})$,
\begin{eqnarray*}
		 \lra{f, \pi_j  T_j^{-n} D f}_{L_2(\mathbf{X})}
		 &= &\lra{ T_j^{n/2}\pi_{j }f,  D  T_j^{-n/2}f}_{L_2(\mathbf{X})}
		 \nonumber \\
		 &=  &\lra{\pi_{j } T_j^{-n/2} f, ( D\pi_{j}  )\pi_{j}  T_j^{-n/2}f}_{L_2(\mathbf{X})}\;.
\end{eqnarray*}
The positivity of $D\pi_{j}$ on $L_{2}(\mathbf{X}_{j+})$ now ensures that  
\begin{equation}
		0\leqslant   \lra{f, \pi_j  T_j^{-n} D f}_{L_2(\mathbf{X})}\; 
		\quad \text{for} \quad 1\leqslant   n\;.
	\label{Other Terms}
\end{equation}
The relations \eqref{Positive n Terms} and \eqref{Other Terms} show that  \eqref{Translation-Sum} is a sum of non-negative terms.  Thus, $0\leqslant \pi_{j} D^{cj}$ on $ L_2(\mathbf{X}^{cj}_{j+})$, as claimed.  

One can reduce the proof of  positivity of $D^{cj}\pi_{j} $  on $ L_2(\mathbf{X}^{cj}_{j+})$ to the previous case.  In fact, one can replace $n$ by $-n$ in \eqref{Kernel-Periodization} and write 
\begin{eqnarray*}
         \lra{f, D^{cj} \pi_j f}_{L_2(\mathbf{X})}
          &=&  \sum_{n=-\infty }^{\infty} \lra{f,  T_j^{n} D \pi_j f}_{L_2(\mathbf{X})}\nonumber \\
          &=&  \sum_{n=-\infty }^{0} \lra{T_j^{-n/2} f,   (D \pi_j ) \,T_j^{-n/2}f}_{L_2(\mathbf{X})}
          \nonumber \\
          && \quad +   \sum_{n=1 }^{\infty} \lra{\pi_{j}T_j^{-n/2}f, (\pi_{j}D)\, \pi_{j} T_j^{-n/2} f}_{L_2(\mathbf{X})}\;.
\end{eqnarray*}
We have already shown that these matrix elements  are positive. Thus, $0\leqslant  D^{cj}\pi_{j} $  on $ L_2(\mathbf{X}^{cj}_{j+})$.

\bigskip
\noindent
{\bf The Case $i\neq j$.}
For $f\in L_2(\mathbf{X}^{ci}_{j+})$ one has, for any $n'\in \mathbb{Z}$, 
\begin{eqnarray*}
         \lra{f,\pi_j  D^{ci} f}_{L_2(\mathbf{X}^{cj})}
         &=& \sum_{n=-\infty }^\infty
         \lra{f, \pi_j  T_i^{-n} D f}_{L_2(\mathbf{X})}
         \nonumber \\
         &=&  \sum_{n=-\infty }^\infty
         \lra{f, \pi_j  T_i^{-n+n'} D f}_{L_2(\mathbf{X})}\;.
\end{eqnarray*}
Using the fact that $T_{i} $ commutes with both $\pi_{j}$ and $D$, one arrives at 
\begin{eqnarray}
         \lra{f,\pi_j  D^{ci} f}_{L_2(\mathbf{X}^{cj})}
         &=&   \frac{1}{2N+1}  \sum_{n=-\infty }^\infty \sum_{n'=-N}^{N}
         \lra{T_{i}^{-n'}f, \pi_j   D  T_{i}^{-n}f}_{L_2(\mathbf{X})}  \nonumber \\
       &=&  \lim_{N\to\infty}
                \lra{g_N,  \pi_j  D\,g_N}_{L_2(\mathbf{X})}\;,
    \label{gN-limit}
\end{eqnarray}
where
\begin{equation}
        g_N
        = \frac{1}{\sqrt{2N+1}}\,
            \sum_{n=-N}^N T_i^{-n}  f \in  L_2(\mathbf{X}_{j+}) \;.
    \label{gN-RP}
\end{equation}
Observe that the translations $T_i$ act in the $i^{\rm th}$-coordinate
direction, orthogonal to the $j^{\rm th}$-coordinate
direction, so they map $L_2(\mathbf{X}_{j+})$ into itself.   As $\pi_j\,D$ is positive on
$L_2(\mathbf{X}_{j+})$, one concludes that
\[
		0\leqslant \lra{g_N, \pi_j D\,g_N}_{L_2(\mathbf{X})}
		\to \lra{f,\pi_j  D^{ci} f}_{L_2(\mathbf{X}^{cj})}\;.
\]
Thus, $0\leqslant  \pi_{j} D^{ci}$ on  $L_2(\mathbf{X}_{j+})$, as claimed.     The argument to show that $0\leqslant  D^{ci} \,\pi_{j}$ on $L_2(\mathbf{X}_{j+})$ is similar, so we omit the details.
\end{proof}

\setcounter{equation}{0}
\section{Summary of Positivity Conditions}
We now summarize the various positivity conditions that we have discussed in this paper in the case $X=\mathbb{R}^{d}$.  We state the conditions 
the characteristic function $S(f)$ of the field is subject to, and---for the Gaussian case---the conditions the covariance of the characteristic function
has to obey.   
\subsection{Neutral Fields}  
We use the unitary time-reflection operator $\vartheta$,  the unitary reflection $\pi_{\vec n}$ in the plane orthogonal to~$\vec n$, and in the Gaussian case the covariance $D$.  

\paragraph{\em Measure Positivity:}  The condition 
\begin{equation}
		0\leqslant  \sum_{j,j'=1}^{N} \overline{c_{j}} c_{j'}\,  S(f_{j'}-\overline{f_{j}})\;,
	\label{NeutralMeasurePositivity}
\end{equation}
leads to the existence of a positive measure as the Fourier transform of $S(f)$.  In the Gaussian case  $S(f)=e^{-\tfrac{1}{2} \lra{\overline f, D f}_{L_{2}}}$, and condition \eqref{NeutralMeasurePositivity} is equivalent to 
\begin{equation}
		0\leqslant  D  \quad \text{on} \quad L_{2}(\mathbb{R}^{d})\;.
	\label{NeutralMP}
\end{equation}

\paragraph{\em Time-Reflection Positivity:}
The condition of time-reflection positivity is 
\begin{equation}
		0\leqslant  \sum_{j,j'=1}^{N} \overline{c_{j}} c_{j'}\,  S(f_{j'}-\vartheta \overline{ f_{j}})
		\quad \text{for} \quad
		f_{j}\in L_{2}(\mathbb{R}^{d}_{+})\;.
	\label{NeutralTRP}
\end{equation}
In the Gaussian case this is equivalent to 
\begin{equation}
		0\leqslant  \vartheta D
		\quad \text{on} \quad L_{2}(\mathbb{R}^{d}_{+})\;.
	\label{GNeutralTRP}	
\end{equation}

\paragraph{\em Alternative Time-Reflection Positivity:}
The alternative condition of time-reflection positivity is  
\begin{equation}
		0\leqslant  \sum_{j,j'=1}^{N} \overline{c_{j}} c_{j'}\,  S(f_{j'}-\vartheta \overline{ f_{j}})
		\quad \text{for} \quad
		f_{j}\in L_{2}(\mathbb{R}^{d}_{-})\;.
	\label{ANeutralTRP}
\end{equation}
In the Gaussian case this is equivalent to 
\begin{equation}
		0\leqslant  D\,\vartheta
		\quad \text{on} \quad
		L_{2}(\mathbb{R}^{d}_{+})\;.
	\label{AGNeutralTRP}	
\end{equation}

\paragraph{\em Spatial-Reflection Positivity:}
The condition for spatial-reflection positivity is  
\begin{equation}
		0\leqslant  \sum_{j,j'=1}^{N} \overline{c_{j}} c_{j'}\,  S(f_{j'}-\pi_{\vec n} \overline{ f_{j}})
		\quad \text{for} \quad
		f_{j}\in L_{2}(\mathbb{R}^{d}_{\vec n+})\;.
	\label{NeutralSRP}
\end{equation}
In the Gaussian case this is equivalent to 
\begin{equation}
		0\leqslant  \pi_{\vec n} \,D
		\quad \text{on} \quad L_{2}(\mathbb{R}^{d}_{\vec n+})\;.
	\label{GNeutralSRP}	
\end{equation}

\paragraph{\em Alternative Spatial-Reflection Positivity:}
The alternative condition for spatial-reflection positivity is  
\begin{equation}
		0\leqslant  \sum_{j,j'=1}^{N} \overline{c_{j}} c_{j'}\,  S(f_{j'}-\pi_{\vec n} \overline{ f_{j}})
		\quad \text{for} \quad 
		f_{j}\in L_{2}(\mathbb{R}^{d}_{\vec n-})\;.
	\label{ASRP}
\end{equation}
In the Gaussian case this is equivalent to 
\begin{equation}
		0\leqslant  D\,\pi_{\vec n}
		\quad \text{on} \quad L_{2}(\mathbb{R}^{d}_{\vec n+})\;.
	\label{AGNeutralSRP}	
\end{equation}

\subsection{Charged Fields}  
In the case of the charged field we use the  charge conjugation operator $\mathcal{C}$, the unitary time-reflection operator~$\vartheta$, and the unitary reflection $\pi_{\vec n}$ in the plane orthogonal to $\vec n$.  In the Gaussian case we also use the matrix covariance $\boldsymbol{D}$.

\paragraph{\em Measure Positivity:}  The positivity condition 
\begin{equation}
		0\leqslant  \sum_{j,j'=1}^{N} \overline{c_{j}} c_{j'}\,  S({\boldsymbol f}_{j'}-\mathcal{C}\overline{{\boldsymbol f}_{j}})\;,
	\label{ChargedMeasurePositivity}
\end{equation}
leads to the existence of a positive measure as the Fourier transform of $S({\boldsymbol f})$.  In the Gaussian case this positivity is equivalent to 
\begin{equation}
		0\leqslant  \mathcal{C}\boldsymbol{D}  \quad \text{on} \quad {\boldsymbol{L}_2}\;.
	\label{ChargedMP}
\end{equation}

\paragraph{\em Time-Reflection Positivity:}
The condition of time-reflection positivity is  
\begin{equation}
		0\leqslant  \sum_{j,j'=1}^{N} \overline{c_{j}} c_{j'}\,  S({\boldsymbol f}_{j'}-\vartheta\mathcal{C} \overline{{\boldsymbol f}_{j}})
		\text{for}
		{\boldsymbol f}_{j}\in {\boldsymbol{L}_2}_{+}\;.
	\label{ChargedTRP}
\end{equation}
In the Gaussian case this is equivalent to 
\begin{equation}
		0\leqslant  \vartheta \mathcal{C} \boldsymbol{D}
		\quad \text{on} \quad  {\boldsymbol{L}_2}_{+}\;.
	\label{GChargedTRP}	
\end{equation}

\paragraph{\em Alternative Time-Reflection Positivity:}
The alternative condition of time-reflection positivity is 
\begin{equation}
		0\leqslant  \sum_{j,j'=1}^{N} \overline{c_{j}} c_{j'}\,  S({\boldsymbol f}_{j'}-\mathcal{C}\vartheta \overline{ {\boldsymbol f}_{j}})
		\quad \text{for} \quad
		f_{j}\in {\boldsymbol{L}_2}_{-}\;.
	\label{AChargedTRP}
\end{equation}
In the Gaussian case the alternative condition is equivalent to 
\begin{equation}
		0\leqslant  \boldsymbol{D}\vartheta\mathcal{C}
		\quad \text{on} \quad  {\boldsymbol{L}_2}_{+}\;.
	\label{AGChargedTRP}	
\end{equation}

\paragraph{\em Spatial-Reflection Positivity:}
The condition for spatial-reflection positivity is 
\begin{equation}
		0\leqslant  \sum_{j,j'=1}^{N} \overline{c_{j}} c_{j'}\,  S({\boldsymbol f}_{j'}-\pi_{\vec n}\mathcal{C} \overline{ {\boldsymbol f}_{j}})
		\quad \text{for} \quad 
		{\boldsymbol f}_{j}\in {\boldsymbol{L}_2}_{,\vec n+}\;.
	\label{ChargedSRP}
\end{equation}
Here, ${\boldsymbol{L}_2}_{,\vec n+}= L_{2}(\mathbb{R}^{d}_{\vec n+})\oplus L_{2}(\mathbb{R}^{d}_{\vec n+})$ and  ${\boldsymbol{L}_2}_{,\vec n-}=\pi_{\vec n}{\boldsymbol{L}_2}_{,\vec n+}$.  In the Gaussian case \eqref {ChargedSRP} is equivalent to 
\begin{equation}
		0\leqslant  \pi_{\vec n} \,\mathcal{C}\boldsymbol{D}
		\quad \text{on} \quad {\boldsymbol{L}_2}_{,\vec n+}\;.
	\label{GChargedSRP}	
\end{equation}

\paragraph{\em Alternative Spatial-Reflection Positivity:}
The alternative condition for spatial-reflection positivity is  
\begin{equation}
		0\leqslant  \sum_{j,j'=1}^{N} \overline{c_{j}} c_{j'}\,  S({\boldsymbol f}_{j'}-\pi_{\vec n}\mathcal{C} \overline{ {\boldsymbol f}_{j}})
		\quad \text{for} \quad 
		{\boldsymbol f}_{j}\in {\boldsymbol{L}_2}_{,\vec n-}\;.
	\label{AChargedSRP}
\end{equation}
In the Gaussian case this is equivalent to 
\begin{equation}
		0\leqslant  \mathcal{C}\boldsymbol{D}\,\pi_{\vec n} 
		\quad \text{on} \quad  {\boldsymbol{L}_2}_{,\vec n+}\;.
	\label{AGChargedSRP}	
\end{equation}

\end{document}